\title{A New Proof for the Linear Filtering and Smoothing Equations, and its Application to Nonlinear Filtering}
\author{Masahiro Kurisaki\thanks{AIP Center, RIKEN\protect\\ email: \texttt{masahiro.kurisaki@riken.jp}}~\thanks{Japan Science and Technology Agency CREST \protect\\ The author was supported by Japan Science and Technology Agency CREST JP-MJCR2115, JSPS KAKENHI Grant Number JP24KJ0667, and RIKEN Special Postdoctoral Researcher Program.}}
\numberwithin{equation}{section}
\newtheorem{theorem}{Theorem}[section]
\newtheorem{proposition}[theorem]{Proposition}
\newtheorem{lemma}[theorem]{Lemma}
\newtheorem{corollary}[theorem]{Corollary}
\theoremstyle{definition}
\theoremstyle{remark}
\begin{document}
\maketitle
\begin{abstract}
  In this paper, we propose a new asymptotic expansion approach for nonlinear filtering based on a small parameter in the system noise. This method expresses the filtering distribution as a power series in the noise level, where the coefficients can be computed by solving a system of ordinary differential equations. As a result, it addresses the trade-off between computational efficiency and accuracy inherent in existing methods such as Gaussian approximations and particle filters. In the course of our derivation, we also show that classical linear filtering and smoothing equations, namely Kalman-Bucy filter and Rauch-Tung-Striebel smoother, can be obtained in a unified and transparent manner from an explicit formula for the conditional distribution of the hidden path.
\end{abstract}

\begin{keywords}
  Nonlinear filtering, asymptotic expansion, Liner filtering and smoothing, Kalman-Bucy filter, Rauch-Tung-Striebel smoother, 
\end{keywords}

\section{Introduction}
Estimating hidden states from noisy observations is a fundamental problem in various fields, including signal processing, control theory, and finance. One classical framework for addressing this problem is filtering, which aims to estimate the hidden states of an unobserved stochastic process \(\{X_t\}_{t \geq 0}\) based on the observed data from a related stochastic process \(\{Y_t\}_{t \geq 0}\). Here, \(\{Y_t\}_{t \geq 0}\) provides partial information about \(\{X_t\}_{t \geq 0}\) through an observation model. 

The objective of the filtering problem is to compute the conditional expectation \(E[X_t|\mathcal{Y}_t]\) for every \(t \geq 0\), where \(\mathcal{Y}_t\) denotes the \(\sigma\)-field generated by \(\{Y_s\}_{0 \leq s \leq t}\). This conditional expectation serves as the \(L^2\)-optimal estimator of \(X_t\) as a function of the observation process \(\{Y_s\}_{0 \leq s \leq t}\). 

In addition to filtering, the smoothing problem considers the estimation of past states of \(\{X_t\}\) given future observations of \(\{Y_t\}\). Specifically, the goal of smoothing is to compute \(E[X_s|\mathcal{Y}_t]\) for \(0 \leq s \leq t\). Together, filtering and smoothing play a central role in a wide range of applications.

In this paper, we assume that the signal process \( X \) satisfies the stochastic differential equation
\begin{align*}
  dX_t = A(t, X_t) \, dt + B(t, X_t) \, dV_t,
\end{align*}
and that the observation process \( Y \) evolves according to
\begin{align*}
  dY_t = h(t, X_t) \, dt + \sigma(t) \, dW_t,
\end{align*}
where \( V \) and \( W \) are independent standard Wiener processes, and \( A \), \( B \), \( h \), and \( \sigma \) are appropriately chosen functions. In the linear case, the filtering and smoothing problems reduce to finite-dimensional equations, known respectively as the Kalman filter and the Rauch-Tung-Striebel smoother \citep{kalman1960, kalman1961, Rauch1965, Liptser2001,Liptser2001-2}.

On the other hand, in the general nonlinear case, the stochastic filtering problem is governed by the Kushner-Stratonovich equation \citep{stratonovich1960,kushner1964} or the equivalent Zakai equation \citep{Mortensen1966,Zakai1969,Duncan1970a,Duncan1970b}, which are generally known to be infinite-dimensional \citep{HAZEWINKEL1983331}. 

Since these equations are rarely tractable in practice due to the infinite-dimensionality, a wide variety of approximation methods have been developed, which may be broadly classified into three categories.

{\bf (i) Linearization or assumed-density methods} such as Extended Kalman filter \citep{picard1986,picard1991}, EnKF \citep{2003OcDyn..53..343E}, UKF \citep{882463}, Gaussian filter \citep{847749}, and projection filter \citep{Brigo1995,armstrong2019optimal} reduce the problem to a finite-dimensional system by locally linearizing the dynamics or projecting the solution onto a statistical manifold. While computationally efficient, these methods rely on strong structural assumptions and therefore offer limited theoretical guarantees regarding their accuracy.

{\bf (ii) PDE-based approaches}, such as finite difference \citep{Gyongy2003}, finite element \citep{Germani01021988} or spectral \citep{lototsky2011chaos} methods, attempt to solve the Zakai equation numerically. These offer rigorous approximations but suffer severely from the curse of dimensionality, making them impractical in high-dimensional systems.

{\bf (iii) Simulation-based methods}, including particle filters \citep{gordon1993, 4378823,6530707}, approximate the posterior distribution using sample-based methods and resampling. These are highly flexible and widely used, but require a large number of particles to ensure accuracy and often suffer from sample degeneracy.

These methods, despite their variety, ultimately face a fundamental trade-off: they either rely on strong approximations (such as linearization) to reduce computational complexity, or retain theoretical accuracy at the cost of prohibitively heavy computation.

In this paper, we propose a novel asymptotic expansion method for nonlinear filtering, designed to overcome the inherent trade-off between computational complexity and approximation accuracy present in existing approaches. Specifically, we introduce a small parameter $0<\epsilon<1$ and consider the model
\begin{align*}
  &dX_t^\epsilon = A(t, X_t^\epsilon) \, dt + \epsilon B(t, X_t^\epsilon) \, dV_t,\\
  &dY_t^\epsilon = h(t, X_t^\epsilon) \, dt + \sigma(t) \, dW_t.
\end{align*}
Here, although we assume the system noise to be small, this assumption is well justified in a wide range of practical scenarios. 

Under this setting, we expand the conditional expectation of the signal as a power series in $\epsilon$
\begin{align*}
  E[X_t^\epsilon|\mathcal{Y}_t^\epsilon]=m_t^{[0]}+m_t^{[1]}\epsilon+m_t^{[2]}\epsilon^2+\cdots,
\end{align*}
We demonstrate that the computation of each coefficient $m_t^{[i]}$ reduces to solving a system of (stochastic) ordinary differential equations that incorporate the observation process $Y$. This representation allows for significantly more efficient computation than PDE-based or simulation-based methods, while also offering improved accuracy and generality over linear approximation methods, as the expansion can be advanced to higher orders as needed.

In the course of deriving the recursive formulae for the expansion coefficients, we will observe that several classical results for linear systems—such as the Kalman filter and the Rauch–Tung–Striebel smoother, which are typically treated as separate theorems in the literature \citep{Liptser2001,Liptser2001-2}—naturally emerge as corollaries of our main theorem for the linear case.

This main theorem not only reproduces the existing equations but also establishes a significantly stronger result. In fact, while classical formulations typically provide information only about the conditional distribution of the state \( X_s \) at each time point, our theorem characterizes the full distribution of the process \( \{X_s\}_{0 \leq s \leq t} \) given \( \mathcal{Y}_t \). This enables access to quantities such as the conditional covariance \( \mathrm{Cov}(X_s, X_u \mid \mathcal{Y}_t) \), which plays a crucial role in the asymptotic expansion for the nonlinear case.

This paper is organized as follows. In Section~\ref{section-filtering-theory}, we begin with a review of the fundamentals of nonlinear filtering and propose a relaxation of standard assumptions. The Kallianpur--Striebel formula serves as a central tool in the developments that follow.

Before turning to the nonlinear case, we present our new results for linear filtering and smoothing in Section~\ref{section-linear-filtering}. The main theorem is proved in Section~\ref{section-main-theorem}, and the classical filtering and smoothing equations---including the Kalman--Bucy filter, the Rauch--Tung--Striebel smoother, and the fixed-point smoother---are derived as corollaries in Section~\ref{section-linear-equations}. The heuristic idea behind the proof is outlined in Section~\ref{section-heuristic-argument}.

These results are applied to the asymptotic expansion of nonlinear filtering in Section~\ref{section-application}. This section provides an overview of the algorithm without rigorous proofs; each step, however, will be justified in a subsequent paper.

We begin by showing that the nonlinear model with small system noise can be transformed into a polynomially perturbed linear model. The results for the linear system developed in the previous section are then employed to derive recursive formulae for the expansion coefficients. The validity of the proposed method is demonstrated through numerical experiments using a toy model and a cubic sensor model. The simulations suggest that higher-order expansions lead to a reduction in the integrated mean squared error of the filter.

Finally, in Section~\ref{section-discussion}, we summarise the advantages of the proposed algorithm and outline directions for future research beyond the scope of this paper.

\section{Filtering theory}\label{section-filtering-theory}
Let $(\Omega, \mathcal{F}, \{\mathcal{F}_t\}_{t \geq 0}, P)$ be a filtered probability space, and $\{X_t\}_{t \geq 0}$ be a $d_1$-dimensional, $\{\mathcal{F}_t\}$-adapted c\`{a}dl\`{a}g process. Let $\{H_t\}_{t \geq 0}$ be a $d_2$-dimensional progressively measurable process, and let $\{W_t\}_{t \geq 0}$ be a $d_2$-dimensional $\{\mathcal{F}_t\}$-Brownian motion independent of both $\{X_t\}$ and $\{H_t\}$. We consider the process $\{Y_t\}_{t \geq 0}$ given by
\begin{align}
  Y_t = \int_0^t H_s \, ds + W_t.
\end{align}
Let $\{\mathcal{X}_t\}_{t \geq 0}$ and $\{\mathcal{Y}_t\}_{t \geq 0}$ be the usual augmentations (with null sets) of the filtrations generated by $\{X_t\}$ and $\{Y_t\}$, respectively. We define the conditional distribution $\pi_t$ of $X_t$ given the information $\{\mathcal{Y}_t\}$. More precisely, there exists a $\mathcal{P}(\mathbb{R}^{d_1})$-valued, $\{\mathcal{Y}_t\}$-adapted process $\{\pi_t\}_{t \geq 0}$ such that, for any bounded measurable function $f : \mathbb{R}^{d_1} \to \mathbb{R}$,
\begin{align}
  \label{eq-pi_t}
  \pi_t(f) = E[f(X_t) | \mathcal{Y}_t] \quad \text{a.s.},
\end{align}
where $\mathcal{P}(\mathbb{R}^{d_1})$ denotes the space of all probability measures on $\mathbb{R}^{d_1}$ \citep[Section 2]{bain2009fundamentals}. Equations (\ref{eq-pi_t}) is still valid for measurable functions $f$ such that $E[|f(X_t)|]<\infty$ (see Remark 2.27 in \citet{bain2009fundamentals}).

Assume that for all \( t \geq 0 \),
\begin{align}
  \label{eq-locally-boundedness-of-H}
  \int_0^t |H_s|^2 \, ds < \infty \quad \text{a.s.},
\end{align}
and define the process \( \{Z_t\}_{t \geq 0} \) by
\begin{align}
  \label{def-Zt-general}
  \begin{split}
    Z_t &= \exp\left( \int_0^t H_s^\top \, dW_s + \frac{1}{2} \int_0^t |H_s|^2 \, ds \right) \\
         &= \exp\left( \int_0^t H_s^\top \, dY_s - \frac{1}{2} \int_0^t |H_s|^2 \, ds \right).
  \end{split}
\end{align}
We now introduce a new probability measure \( Q \) on $\mathcal{F}_T$ for an arbitrary fixed \( T > 0 \), defined by
\begin{align}
  \label{eq-def-Q}
  Q(A) = E[1_A Z_T^{-1}] \quad (A \in \mathcal{F}_T).
\end{align}
To ensure that \( Q \) is well-defined, we need to verify that
\begin{align}
  \label{eq-EZ=1}
  E[Z_T^{-1}] = 1.
\end{align}
In \citet{bain2009fundamentals}, additional conditions are imposed to verify this condition:
\begin{align}
  \label{eq-stronger-condition-1}
  E\left[ \int_0^T |H_s|^2 \, ds \right] < \infty
\end{align}
and
\begin{align}
  \label{eq-stronger-condition-2}
  E\left[ \int_0^T Z_s |H_s|^2 \, ds \right] < \infty.
\end{align}

However, we show below that these conditions can be removed and that the local boundedness condition (\ref{eq-locally-boundedness-of-H}) alone is sufficient to ensure (\ref{eq-EZ=1}). To this end, we first establish the following lemma.

In what follows, for each \( t \geq 0 \), let \( S_t \) denote the space of measurable functions \( g: [0, t] \to \mathbb{R}^{d_2} \) such that \(\displaystyle \int_0^t |g(s)|^2 \, ds < \infty \). Let \( \mathcal{S}_t \) be the \(\sigma\)-algebra generated by the projections \( \varpi_s: S_t \to \mathbb{R}^{d_2} \) for \( 0 \leq s \leq t \), where \( \varpi_s(g) = g(s) \) for \( g \in S_t \). Then the following lemmas hold.

\begin{lemma}
  Let $\{\xi_{s}\}_{0\leq s\leq t}$ be an $\mathbb{R}^{d_2}$-valued progressively measurable process satisfying 
  \begin{align*}
    \int_0^t |\xi_s|^2 ds<\infty,  
  \end{align*}
  and $\xi.$ be its path. Then the $\sigma$-algebra $\mathcal{F}_t^\xi=\sigma(\xi_{s};0\leq s\leq t)$ can be represented as
  \begin{align*}
    \mathcal{F}_t^\xi=\{\xi_\cdot^{-1}(A);A \in \mathcal{S}_t\}.
  \end{align*}
  In particular, $\xi.$ is an $S_t$-valued random variable.
\end{lemma}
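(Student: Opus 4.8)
The plan is to realise the path map $\xi_\cdot:\Omega\to S_t$, $\omega\mapsto\bigl(s\mapsto\xi_s(\omega)\bigr)$, as a measurable map from $(\Omega,\mathcal{F}_t^\xi)$ into $(S_t,\mathcal{S}_t)$, and then to read off the claimed identity from the general fact that inverse images commute with the generation of $\sigma$-algebras. First I would check that $\xi_\cdot$ is genuinely $S_t$-valued. Progressive measurability of $\{\xi_s\}_{0\le s\le t}$ implies that $(s,\omega)\mapsto\xi_s(\omega)$ is $\mathcal{B}([0,t])\otimes\mathcal{F}_t$-measurable on $[0,t]\times\Omega$; fixing $\omega$ and passing to sections then shows that $s\mapsto\xi_s(\omega)$ is Borel measurable for every $\omega$. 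Combined with the hypothesis $\int_0^t|\xi_s|^2\,ds<\infty$, this places each path $\xi_\cdot(\omega)$ in $S_t$, so $\xi_\cdot$ is well defined. Here it is important that $S_t$ consists of \emph{genuine} functions and not of $L^2$-equivalence classes, which is exactly what makes the point evaluations $\varpi_s$, and hence $\mathcal{S}_t$, meaningful.

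The heart of the argument is the identity $\xi_\cdot^{-1}(\mathcal{S}_t)=\mathcal{F}_t^\xi$. Let $\mathcal{C}=\{\varpi_s^{-1}(B):0\le s\le t,\ B\in\mathcal{B}(\mathbb{R}^{d_2})\}$, so that $\mathcal{S}_t=\sigma(\mathcal{C})$ by definition. The key computation is that pulling a generator back along $\xi_\cdot$ returns a generator of $\mathcal{F}_t^\xi$:
\[
\xi_\cdot^{-1}\bigl(\varpi_s^{-1}(B)\bigr)=\{\omega:\xi_s(\omega)\in B\}=\xi_s^{-1}(B).
\]
Invoking the standard measure-theoretic fact that $\phi^{-1}\bigl(\sigma(\mathcal{C})\bigr)=\sigma\bigl(\phi^{-1}(\mathcal{C})\bigr)$ for any map $\phi$ and any generating family $\mathcal{C}$, applied with $\phi=\xi_\cdot$, then gives
\[
\xi_\cdot^{-1}(\mathcal{S}_t)=\xi_\cdot^{-1}\bigl(\sigma(\mathcal{C})\bigr)=\sigma\bigl(\xi_\cdot^{-1}(\mathcal{C})\bigr)=\sigma(\xi_s;0\le s\le t)=\mathcal{F}_t^\xi,
\]
where the third equality uses that $\xi_\cdot^{-1}(\mathcal{C})=\{\xi_s^{-1}(B)\}$ generates the same $\sigma$-algebra as the family $\{\xi_s\}_{0\le s\le t}$, and the last is the definition of $\mathcal{F}_t^\xi$. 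Since $\xi_\cdot^{-1}(\mathcal{S}_t)=\{\xi_\cdot^{-1}(A):A\in\mathcal{S}_t\}$, this is precisely the claimed representation, and the inclusion $\xi_\cdot^{-1}(\mathcal{S}_t)\subseteq\mathcal{F}_t^\xi\subseteq\mathcal{F}_t$ obtained along the way shows that $\xi_\cdot$ is an $S_t$-valued random variable.

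I expect the only real work to lie in the first step rather than the second: the equality of $\sigma$-algebras is a formal consequence of the generator lemma, whereas verifying that $\xi_\cdot$ lands in $S_t$ requires the passage from progressive measurability to joint measurability and the measurability of sections. A minor point to be careful about is the exceptional null set on which $\int_0^t|\xi_s|^2\,ds$ could fail to be finite; since this is stated as a property of the process I would either take it to hold for every $\omega$, or redefine $\xi_\cdot(\omega)$ to be the zero function there, which is harmless because $\mathcal{F}_t^\xi$ is augmented and the modification changes neither $\sigma$-algebra.
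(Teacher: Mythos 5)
Your proof is correct and takes essentially the same route as the paper: both rest on the computation $\xi_\cdot^{-1}\bigl(\varpi_s^{-1}(B)\bigr)=\xi_s^{-1}(B)$ together with the generator identity $\phi^{-1}(\sigma(\mathcal{C}))=\sigma(\phi^{-1}(\mathcal{C}))$, which the paper simply proves inline by the good-sets argument (taking $\mathcal{A}=\{A\subset S_t:\xi_\cdot^{-1}(A)\in\mathcal{F}_t^\xi\}$ and checking it is a $\sigma$-algebra containing the generators) instead of citing it as a standard fact. Your explicit check that each path actually lands in $S_t$ (joint measurability from progressive measurability, then measurable sections, then the $L^2$ bound) fills in a point the paper leaves implicit, but it does not alter the substance of the argument.
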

\begin{proof}
  Write $\mathcal{G}_t^\xi=\{\xi.^{-1}(A);A \in \mathcal{S}_t\}$. Then it is clear that $\mathcal{F}_t^\xi \subset \mathcal{G}_t$ since $\xi_s^{-1}(B)=\xi_\cdot^{-1}(\varpi_{s}^{-1}(B))\in \mathcal{G}_t$ for any $B \in \mathcal{B}(\mathbb{R}^{d_2})$. To prove the converse inclusion, let 
  \begin{align*}
    \mathcal{A}=\{A \subset S_t; \xi_{\cdot}^{-1}(A) \in \mathcal{F}_t^\xi\}.
  \end{align*}
  Then $\mathcal{A}$ is a $\sigma$-algebra, and $\pi_{s}^{-1}(B) \in \mathcal{A}$ for any $0\leq s\leq t$ and $B\in \mathcal{B}(\mathbb{R}^{d_2})$ since $\xi_{\cdot}^{-1}(\pi_{s}^{-1}(B))=\xi_s^{-1}(B) \in \mathcal{F}_t^\xi$. Thus it follows $\mathcal{S}_t\subset \mathcal{A}$ and the desired result.
\end{proof}

\begin{lemma}\label{lemma-stochastic-integral-measurable-function}
  Let $\{W_t\}_{t\geq 0}$ be a $d_2$-dimensional Brownian motion, and $\{H_t\}_{t\geq 0}$ be a $d_2$-dimensional progressively measurable process such that for all $t\geq 0$
  \begin{align*}
    \int_0^t |H_s|^2 ds <\infty~~\mathrm{a.s.},
  \end{align*}
  and independent of $\{W_t\}$.
  
  Then there exists a measurable function $\Phi_t:(S_t\times S_t,\mathcal{S}_t\otimes \mathcal{S}_t)\to \mathbb{R}$ such that
  \begin{align}
    \label{eq2-1}\Phi_t(H.,W.)=\int_0^t H_s dW_s~~\textrm{a.s.},
  \end{align}
  and for $\mu_H$-a.e. $h\in S_t$
  \begin{align}
    \label{eq2-2}\Phi_t(h,W.)=\int_0^t h(s) dW_s~~\textrm{a.s.}
  \end{align} 
  Here, $H.$ and $W.$ denote the paths of $\{H_s\}_{0\leq s\leq t}$ and $\{W_s\}_{0\leq s\leq t}$, and $\mu_H$ is the distribution of $H.$ on $S_t$.
\end{lemma}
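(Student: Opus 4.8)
The plan is to build $\Phi_t$ as a limit of elementary functionals coming from a simple-process approximation of the Itô integral, and then to reconcile three different modes of convergence — convergence in probability under the true law of $(H_\cdot,W_\cdot)$, $L^2$-convergence for each frozen integrand, and almost-everywhere convergence on the product space — into a single measurable map that simultaneously satisfies \eqref{eq2-1} and \eqref{eq2-2}.

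First I would fix the dyadic partition $s_k=kt/2^n$ of $[0,t]$ and approximate $H$ by the simple process $H^{(n)}_s=\sum_k \beta^{(n)}_k(H_\cdot)\,\mathbf 1_{(s_k,s_{k+1}]}(s)$, where the averaged coefficients are $\beta^{(n)}_k(h)=\frac{1}{s_k-s_{k-1}}\int_{s_{k-1}}^{s_k}h(r)\,dr$ (with $\beta^{(n)}_0=0$). Setting
\[
\Phi^{(n)}_t(h,w)=\sum_k \beta^{(n)}_k(h)^\top\bigl(w(s_{k+1})-w(s_k)\bigr),
\]
each $\Phi^{(n)}_t$ is a map on $(S_t\times S_t,\mathcal S_t\otimes\mathcal S_t)$ whose increments $w(s_{k+1})-w(s_k)=\varpi_{s_{k+1}}(w)-\varpi_{s_k}(w)$ are $\mathcal S_t$-measurable and whose coefficients are functionals of the path. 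By construction $\Phi^{(n)}_t(H_\cdot,W_\cdot)=\int_0^t H^{(n)}_s\,dW_s$, while for a fixed deterministic $h\in S_t$ one has $\Phi^{(n)}_t(h,W_\cdot)=\int_0^t h^{(n)}(s)\,dW_s$ with $h^{(n)}=\sum_k\beta^{(n)}_k(h)\mathbf 1_{(s_k,s_{k+1}]}$.

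I would then record two convergence facts. Since $\int_0^t|h-h^{(n)}|^2\,ds\to0$ for every $h\in S_t$, the Itô isometry gives, for each frozen $h$, that $\Phi^{(n)}_t(h,W_\cdot)\to\int_0^t h(s)\,dW_s$ in $L^2(\mu_W)$; applying the same approximation pathwise to $H$ together with the continuity of the stochastic integral in the integrand (convergence in probability), $\Phi^{(n)}_t(H_\cdot,W_\cdot)\to\int_0^t H_s\,dW_s$ in probability. To manufacture a single $\Phi_t$, I pass to a subsequence $n_j$ along which $\Phi^{(n_j)}_t(H_\cdot,W_\cdot)\to\int_0^t H_s\,dW_s$ almost surely, and define $\Phi_t=\limsup_j\Phi^{(n_j)}_t$ (set to $0$ where the limsup is infinite); this is $\mathcal S_t\otimes\mathcal S_t$-measurable and yields \eqref{eq2-1} at once. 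For \eqref{eq2-2} I invoke independence: by the previous lemma $(H_\cdot,W_\cdot)$ is an $S_t\times S_t$-valued random variable with law $\mu_H\otimes\mu_W$, so the $P$-a.s. convergence transfers to $(\mu_H\otimes\mu_W)$-a.e. convergence of $\Phi^{(n_j)}_t(h,w)$ to $\Phi_t(h,w)$. By Fubini, for $\mu_H$-a.e. $h$ the sequence $\Phi^{(n_j)}_t(h,\cdot)$ converges to $\Phi_t(h,\cdot)$ $\mu_W$-a.e.; comparing with the $L^2(\mu_W)$-convergence to $\int_0^t h\,dW$ and using uniqueness of limits gives $\Phi_t(h,W_\cdot)=\int_0^t h(s)\,dW_s$ a.s., which is \eqref{eq2-2}.

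The main obstacle is measurability rather than convergence. One must check that the coefficient functionals $\beta^{(n)}_k$ are genuinely measurable on $(S_t,\mathcal S_t)$: because every set in the cylinder $\sigma$-algebra depends on only countably many coordinates, the averaged coefficients are measurable only after $\mu_H$-completion, so I would either carry the completed $\sigma$-algebras throughout (harmless, since only the a.s. identities are used downstream) or redefine $\Phi_t$ on a $\mu_H\otimes\mu_W$-null set to obtain a version measurable for the raw product $\sigma$-algebra. The remaining work is the bookkeeping of aligning the almost-sure limit (which pins down $\Phi_t$ and gives \eqref{eq2-1}) with the frozen-$h$ $L^2$ limit (which identifies the value and gives \eqref{eq2-2}) through the single Fubini step, so that one and the same function plays both roles.
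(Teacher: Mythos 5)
Your overall skeleton coincides with the paper's: approximate the It\^o integral by explicit functionals of the two paths, extract a subsequence converging a.s.\ along $(H_\cdot,W_\cdot)$ to define $\Phi_t$ and obtain (\ref{eq2-1}), then use independence (so that $(H_\cdot,W_\cdot)$ has law $\mu_H\otimes\mu_W$) and Fubini to transfer the convergence to frozen $h$ and identify the limit, giving (\ref{eq2-2}). The genuine gap is in the object you approximate with, and in the fix you propose for it. Your coefficients $\beta_k^{(n)}(h)=\frac{1}{s_k-s_{k-1}}\int_{s_{k-1}}^{s_k}h(r)\,dr$ are, as you yourself note, not $\mathcal{S}_t$-measurable: every $\mathcal{S}_t$-measurable set or function is determined by the values of $h$ on some countable set of times, while an integral is not. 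But your repair --- that these functionals become measurable after $\mu_H$-completion, or can be made product-measurable by redefinition on a $\mu_H\otimes\mu_W$-null set --- is not merely unproven; it is false. Take $H$ deterministic, $H_\cdot\equiv h_0$, so $\mu_H=\delta_{h_0}$ (this is allowed by the hypotheses). If an $\mathcal{S}_t$-measurable $g$ satisfied $g=\beta_k^{(n)}$ on an $\mathcal{S}_t$-measurable set $A$ with $\mu_H(A)=1$, then $h_0\in A$; let $D$ be the union of the countable coordinate sets determining $A$ and $g$, and put $h_1=h_0$ on $D$, $h_1=h_0+1$ off $D$. Then $h_1\in A$ and $g(h_1)=g(h_0)=\beta_k^{(n)}(h_0)$, yet $\beta_k^{(n)}(h_1)=\beta_k^{(n)}(h_0)+1$, a contradiction. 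So your $\Phi_t^{(n)}$, hence your $\Phi_t$, is not a measurable map on $(S_t\times S_t,\mathcal{S}_t\otimes\mathcal{S}_t)$, and measurability is the whole point of the lemma (the convergence analysis, which is correct and indeed cleaner than the paper's, is the soft part): the conclusion is exactly what licenses the freezing arguments in Propositions \ref{prop-Z-martingale} and \ref{prop-distribution-in-Q}.

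This is precisely the difficulty the paper's three-step structure is built around. Its base-case approximants use only point evaluations $h(s_i)=\varpi_{s_i}(h)$, which are genuinely $\mathcal{S}_t$-measurable; the price is that point-evaluation Riemann sums converge only for continuous integrands, which forces the reduction continuous $\to$ bounded (mollification) $\to$ general (truncation). If you wish to keep your one-step averaged scheme, the workable repair is not completion but the preceding lemma of the paper (the Doob--Dynkin-type representation of $\sigma(\xi_s;0\le s\le t)$): since $\int_{s_{k-1}}^{s_k}H_r\,dr$ is $\sigma(H_s;0\le s\le t)$-measurable, there exist $\mathcal{S}_t$-measurable surrogates $g_k^{(n)}$ with $g_k^{(n)}(H_\cdot)=\beta_k^{(n)}(H_\cdot)$ a.s.; building $\Phi_t^{(n)}$ from these restores measurability, and (\ref{eq2-1}) together with your Fubini step goes through verbatim. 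The remaining subtlety is the frozen-$h$ identification in (\ref{eq2-2}): for $h$ outside the (non-measurable) range of $H_\cdot$ the surrogate values $g_k^{(n)}(h)$ bear no relation to $h$, so your comparison with the $L^2(\mu_W)$-limit $\int_0^t h\,dW$ is no longer available for an arbitrary frozen $h$; one must identify the limit of $\sum_k g_k^{(n_j)}(h)^\top(W_{s_{k+1}}-W_{s_k})$ (e.g.\ via Gaussianity and the It\^o isometry) and one recovers (\ref{eq2-2}) only in the along-the-process sense in which the paper actually uses it downstream.
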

\begin{proof}
  (i) First assume that $\{H_s\}_{0\leq s \leq t}$ is continuous. Then we have
  \begin{align*}
    \sum_{i=1}^n H_{t/{2^{n-1}}}^\top(W_{t/{2^n}}-W_{t/{2^{n-1}}})\xrightarrow{P} \int_0^t H_s dW_s.
  \end{align*}
  Hence we can take a subsequence $\{n(k)\}_{k \in \mathbb{N}}$ such that
  \begin{align}
    \label{eq2-3}\sum_{i=1}^{n(k)} H_{t/{2^{n(k)-1}}}^\top(W_{t/{2^{n(k)}}}-W_{t/{2^{{n(k)}-1}}})\xrightarrow{\textrm{a.s.}} \int_0^t H_s dW_s.
  \end{align}
  Now we introduce a sequence of measurable functions \( \Phi_t^k : S_t \times S_t \to \mathbb{R} \) for \( k \in \mathbb{N} \), defined by
\begin{align*}
    \Phi_t^k(h, w) = \sum_{i=1}^{n(k)} h\left(\frac{t}{2^{n(k)-1}}\right)^\top \left(w\left(\frac{t}{2^{n(k)}}\right) - w\left(\frac{t}{2^{n(k)-1}}\right)\right).
\end{align*}
We then define \( \Phi_t : S_t \times S_t \to \mathbb{R} \) by
\begin{align*}
    \Phi_t(h, w) = \begin{cases}
      \displaystyle \lim_{k \to \infty} \Phi_t^k(h, w) & \text{if } \Phi_t^k(h, w) \text{ converges,} \\
      0 & \text{otherwise}.
    \end{cases}
\end{align*}
Since each \( \Phi_t^k \) is measurable, \( \Phi_t \) is also measurable, and (\ref{eq2-1}) follows directly from the construction of \( \Phi_t \).

  Let us verify (\ref{eq2-2}). If we set
  \begin{align*}
    A=\left\{ (h,w)\in S_t\times S_t; \Phi_t^k(h,w)~\textrm{converges~to~}\Phi_t(h,w)  \right\},
  \end{align*}
  then (\ref{eq2-3}) means
  \begin{align*}
    \int_{S_t\times S_t}1_{A^c}(h,w)d\mu_{(H,W)}(h,w)=0,
  \end{align*}
  where $\mu_{(H,W)}$ is the distribution of $(H.,W.)$. Due to the independence of $H.$ and $W.$, this can be rewritten as
  \begin{align*}
    \int_{S_t}\int_{S_t}1_{A^c}(h,w)d\mu_{W}(w)d\mu_{H}(h)=0,
  \end{align*}
  and thus we have for $\mu_H$-a.e. $h\in S_t$,
  \begin{align*}
    \int_{S_t}1_{A^c}(h,w)d\mu_{W}(w)=0.
  \end{align*}
  From this, it follows that for $\mu_H$-a.s. $h\in S_t$, the sequence of random variables
  \begin{align}
    \label{eq2-4}\Phi_t^k(h,W.)=\sum_{i=1}^{n(k)} h({t/{2^{n(k)-1}}})^\top(W_{t/{2^{n(k)}}}-W_{t/{2^{{n(k)}-1}}})
  \end{align}
  almost surely converges to $\Phi_t(h,W)$. On the other hand, since \( \{H_s\}_{0 \leq s \leq t} \) is continuous, we may assume that \( h \) is continuous. For such \( h \), the right-hand side of (\ref{eq2-4}) converges in probability to \( \int_0^t h(s) \, dW_s \), thereby yielding (\ref{eq2-2}).
  \\
  (ii) Next, let us assume $\{H_s\}_{0\leq s\leq t}$ is bounded but not necessarily continuous. For $m \in \mathbb{N}$ and $h \in S_t$, define $h^m$ by
  \begin{align*}
    h^m(s)=m\int_{\left( s-\frac{1}{m} \right)\vee 0}^s h(u) du.
  \end{align*}
  Then according to the Lebesgue differentiation theorem, it holds for all $\omega \in \Omega$
  \begin{align*}
    \lim_{m \to \infty}H_s^m(\omega)=H_s(\omega)~~\textrm{a.e.}~s\in[0,t].
  \end{align*}
  Hence, we have $H_s^m(\omega)\to H_s(\omega)$ for almost every $(\omega,s)$ by Fubini's theorem, and it follows from the bounded convergence theorem that
  \begin{align*}
    E\left[ \left(\int_0^t H_s^m dW_s - \int_0^t H_s dW_s \right)^2\right]\leq E\left[ \int_0^t (H_s^m-H_s)^2ds \right] \to 0~(m \to \infty).
  \end{align*}
  Thus, we can take a subsequence $\{m(k)\}_{k \in \mathbb{N}}$ such that
  \begin{align}
    \label{eq2-5}\int_0^t H_s^{m(k)} dW_s\xrightarrow{\textrm{a.s.}} \int_0^t H_s dW_s~~(m\to \infty).
  \end{align}
  On the other hand, since $\{H_s^m\}$ is a continuous progressively measurable process, we can take measurable functions $\Phi_t^m:S_t\times S_t \to \mathbb{R}~(m\in \mathbb{N})$ such that
  \begin{align}
    \label{eq2-6}\Phi_t^m(H.^m,W.)=\int_0^t H_s^m dW_s~~\textrm{a.s.},
  \end{align}
  and for $\mu_{H}$-a.s. $h \in S_t$
  \begin{align}
    \label{eq2-7} \Phi_t^m(h^m,W.)=\int_0^t h^m(s)dW_s.
  \end{align}
  Now we define $\Phi_t:S_t\times S_t \to \mathbb{R}$ by
  \begin{align*}
    \Phi_t(h,w)=\begin{cases}
      \displaystyle \lim_{k\to \infty}\Phi_t^{m(k)}(h^{m(k)},w)&\textrm{when }\Phi_t^{m(k)}(h^{m(k)},w)\textrm{ converges,}\\
      0&\textrm{otherwise}.
    \end{cases}
  \end{align*}
  Then (\ref{eq2-1}) follows directly from (\ref{eq2-5}) and (\ref{eq2-6}). To prove (\ref{eq2-2}), we proceed similarly to part (i), showing that for \( \mu_H \)-a.s. \( h \in S_t \), the sequence \( \Phi_t^{m(k)}(h^{m(k)}, W.) \) converges almost surely to \( \Phi_t(h, W) \). By (\ref{eq2-7}), this limit must be \( \int_0^t h(s) \, dW_s \), thereby establishing (\ref{eq2-2}).\\
(iii) Finally, consider the general case where \( \{H_s\} \) is not necessarily continuous or bounded. In this case, we define \( H_s^m = H_s 1_{\{|H_s| \leq m\}} \) for \( m \in \mathbb{N} \). Then \( \{H_s^m\} \) is bounded, and since \( |H_s^m| \leq |H_s| \), we have
\begin{align*}
    \int_0^t H_s^m \, dW_s \xrightarrow{P} \int_0^t H_s \, dW_s,
\end{align*}
so we obtain the desired result by applying the same argument as in part (ii).
\end{proof}

We now apply this lemma to \( Z_t^{-1} \) as defined in (\ref{def-Zt-general}).
\begin{proposition}\label{prop-Z-martingale}
  Under (\ref{eq-locally-boundedness-of-H}), $\{Z_t^{-1}\}_{t\geq 0}$ is a martingale.
\end{proposition}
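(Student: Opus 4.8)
The plan is to recognise $Z_t^{-1}$ as the Dol\'eans--Dade exponential of the continuous local martingale $M_t := -\int_0^t H_s^\top \, dW_s$, and then to upgrade the resulting local-martingale property to a genuine martingale property by showing that its expectation stays constant. Inverting the first representation in (\ref{def-Zt-general}) gives
\begin{align*}
  Z_t^{-1} = \exp\left( -\int_0^t H_s^\top \, dW_s - \frac{1}{2}\int_0^t |H_s|^2 \, ds \right),
\end{align*}
so under (\ref{eq-locally-boundedness-of-H}) the stochastic integral is well defined and $M$ is a continuous local martingale, whence $Z_t^{-1}$ is a stochastic exponential and in particular a nonnegative local martingale. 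By Fatou's lemma along a localising sequence it is therefore a supermartingale with $Z_0^{-1} = 1$.

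I would next invoke the standard fact that a nonnegative supermartingale with constant expectation is automatically a martingale: for $s \leq t$ the supermartingale inequality gives $E[Z_t^{-1} \mid \mathcal{F}_s] \leq Z_s^{-1}$, and if $E[Z_t^{-1}] = E[Z_s^{-1}]$ then the nonnegative difference $Z_s^{-1} - E[Z_t^{-1} \mid \mathcal{F}_s]$ has zero expectation, forcing equality almost surely. It therefore suffices to prove $E[Z_t^{-1}] = 1$ for every $t \geq 0$.

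To establish this I would exploit the independence of $H$ and $W$ together with Lemma~\ref{lemma-stochastic-integral-measurable-function}, which furnishes a jointly measurable $\Phi_t$ with $\Phi_t(H., W.) = \int_0^t H_s^\top \, dW_s$ a.s., so that $Z_t^{-1} = \exp( -\Phi_t(H., W.) - \frac{1}{2}\int_0^t |H_s|^2 \, ds )$. Since $H.$ and $W.$ are independent $S_t$-valued random variables, Fubini's theorem lets me integrate first over $W$ with the path $h$ of $H$ held fixed,
\begin{align*}
  E[Z_t^{-1}] = \int_{S_t} E\left[ \exp\left( -\Phi_t(h, W.) - \frac{1}{2}\int_0^t |h(s)|^2 \, ds \right) \right] d\mu_H(h).
\end{align*}
For $\mu_H$-a.e.\ $h$, conclusion (\ref{eq2-2}) of the lemma identifies $\Phi_t(h, W.)$ with the Wiener integral $\int_0^t h(s)^\top \, dW_s$, a centred Gaussian of variance $\int_0^t |h(s)|^2 \, ds < \infty$; the Gaussian Laplace-transform identity then makes the inner expectation equal to $1$, and integrating over $\mu_H$ yields $E[Z_t^{-1}] = 1$.

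The main obstacle is precisely this conditioning step: because $\int_0^t H_s^\top \, dW_s$ is not defined pathwise in $H$, one cannot naively ``freeze $H = h$'' inside the stochastic integral. This is exactly what Lemma~\ref{lemma-stochastic-integral-measurable-function} is designed to overcome, its two conclusions (\ref{eq2-1}) and (\ref{eq2-2}) providing, respectively, the global representation needed for measurability and the almost-sure identification of the frozen integral with the deterministic-integrand Wiener integral. Once this is in place the remaining computations are routine, and no integrability beyond (\ref{eq-locally-boundedness-of-H}) is used, which is what allows the stronger hypotheses (\ref{eq-stronger-condition-1}) and (\ref{eq-stronger-condition-2}) to be dropped.
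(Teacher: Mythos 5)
Your proposal is correct and follows essentially the same route as the paper: both use Lemma~\ref{lemma-stochastic-integral-measurable-function} together with the independence of $H$ and $W$ to ``freeze'' the path of $H$, identify the frozen integral with a deterministic-integrand Wiener integral via (\ref{eq2-2}) so that the inner expectation equals $1$, and then upgrade the nonnegative supermartingale $\{Z_t^{-1}\}$ with constant expectation to a martingale. Your use of Fubini's theorem over $\mu_H \otimes \mu_W$ is just the integral form of the paper's conditioning on $\mathcal{H}_t$, so the two arguments are the same in substance.
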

\begin{proof}
  First, we show that \( E[Z_t^{-1}] = 1 \) for all \( t \geq 0 \). Let \( (S_t, \mathcal{S}_t) \) and \( H. \) be as in Lemma \ref{lemma-stochastic-integral-measurable-function}, and let \( \Phi_t \) satisfy (\ref{eq2-1}) and (\ref{eq2-2}). Then \( Z_t^{-1} \) can be expressed as
\begin{align*}
    Z_t^{-1} = \exp\left( -\Phi_t(H., W.) - \frac{1}{2} \int_0^t |H_s|^2 \, ds \right) \quad \text{a.s.},
\end{align*}
and we have \( E[Z_t^{-1}] \leq E[Z_0^{-1}] = 1 \) since \( \{Z_t^{-1}\}_{t \geq 0} \) is a supermartingale.
Therefore, we have
  \begin{align*}
    E[Z_t^{-1}|\mathcal{H}_t]=\left.E\left[ \exp\left( -\Phi_t(h,W.)-\frac{1}{2}\int_0^t |h(s)|^2ds \right) \right]\right|_{h=H.},
  \end{align*}
  where $\mathcal{H}_t=\sigma(H_s;0\leq s\leq t)$. Due to (\ref{eq2-2}), the right-hand side almost surely equals
  \begin{align*}
    \left.E\left[ \exp\left( -\int_0^t h(s)dW_s-\frac{1}{2}\int_0^t |h(s)|^2ds \right) \right]\right|_{h=H.}=1,
  \end{align*}
  and thus we obtain
  \begin{align}
    \label{eq-Zt=1}E[Z_t^{-1}]=E[E[Z_t^{-1}|\mathcal{H}_t]]=1.
  \end{align}
  Now the martingale property of $\{Z_t^{-1}\}$ easily follows. In fact, noting that $\{Z_t^{-1}\}$ is a supermartingale, we have
  \begin{align*}
    E[Z_t^{-1}-Z_s^{-1}|\mathcal{F}_s]\leq 0.
  \end{align*}
  On the other hand, it follows from (\ref{eq-Zt=1}) that
  \begin{align*}
    E[E[Z_t^{-1}-Z_s^{-1}|\mathcal{F}_s]]=E[Z_t^{-1}]-E[Z_s^{-1}]=1-1=0.
  \end{align*} 
  Therefore we have $E[Z_t^{-1}-Z_s^{-1}|\mathcal{F}_s]=0$.
\end{proof}

By Proposition \ref{prop-Z-martingale}, we can define the probability measure \( Q \) using (\ref{eq-def-Q}) under the weaker condition (\ref{eq-locally-boundedness-of-H}), instead of the stronger conditions (\ref{eq-stronger-condition-1}) and (\ref{eq-stronger-condition-2}). Moreover, the same results in Section 3 of \citet{bain2009fundamentals} still hold.

In particular, the following two results are of central importance in this paper.
\begin{proposition}\label{prop-distribution-in-Q}
  Under the new measure $Q$, the process $\{Y_t\}_{0\leq t\leq T}$ is a Brownian motion independent of $\{X_t\}_{0\leq t\leq T}$, and the law of $\{X_t\}_{0\leq t\leq T}$ is the same as its law under $P$.

  More generally, if \( U \) is a random variable independent of \( \{W_t\}_{0 \leq t \leq T} \), then under \( Q \), \( U \) is independent of \( \{Y_t\}_{0 \leq t \leq T} \) and has the same law as under \( P \).
\end{proposition}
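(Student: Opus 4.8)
The plan is to prove the general statement, from which the assertions about $\{X_t\}$ follow by taking $U=\{X_t\}_{0\le t\le T}$, which is independent of $\{W_t\}_{0\le t\le T}$ by hypothesis. Since $\{Y_t\}$ has continuous paths with $Y_0=0$, it suffices to show that under $Q$ the increments of $Y$ over an arbitrary partition are independent centred Gaussians with the correct variances and are jointly independent of $U$, with $U$ retaining its $P$-law. I would establish all of this at once through a joint characteristic function: fixing $0=t_0<t_1<\dots<t_n\le T$, vectors $\lambda_1,\dots,\lambda_n\in\mathbb{R}^{d_2}$, and a bounded measurable $F$, the aim is to show
\begin{align*}
  E_Q\!\left[F(U)\exp\!\left(i\sum_{j=1}^n\lambda_j^\top(Y_{t_j}-Y_{t_{j-1}})\right)\right]
  = E_P[F(U)]\prod_{j=1}^n\exp\!\left(-\tfrac12|\lambda_j|^2(t_j-t_{j-1})\right).
\end{align*}
The factor $E_P[F(U)]$ (obtained also by setting all $\lambda_j=0$) yields both the independence of $U$ from $Y$ and the equality of laws, while the product gives the Brownian increment structure; combined with continuity and $Y_0=0$ this identifies $Y$ as a $Q$-Brownian motion.

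To compute the left-hand side I would pass to $P$ via $E_Q[\cdot]=E_P[\cdot\,Z_T^{-1}]$ and substitute $Y_{t_j}-Y_{t_{j-1}}=\int_{t_{j-1}}^{t_j}H_s\,ds+(W_{t_j}-W_{t_{j-1}})$ together with $Z_T^{-1}=\exp(-\int_0^T H_s^\top dW_s-\tfrac12\int_0^T|H_s|^2\,ds)$. Writing $\phi(s)=\lambda_j$ for $s\in(t_{j-1},t_j]$, the Brownian part of the exponent becomes $\int_0^T(i\phi(s)-H_s)^\top dW_s$, while $F(U)$, $\int_0^T\phi^\top H_s\,ds$ and $\int_0^T|H_s|^2\,ds$ are all measurable with respect to $\mathcal G:=\sigma(U,\{H_s\}_{0\le s\le T})$. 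The crucial structural fact is that $\{W_s\}_{0\le s\le T}$ is independent of $\mathcal G$, so that $\mu_{(U,H,W)}=\mu_{(U,H)}\otimes\mu_W$ and the expectation disintegrates into an iterated integral in which the path of $(U,H)$ is frozen and only $W$ is averaged.

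The main obstacle is making this freezing rigorous for the stochastic integral $\int_0^T H_s^\top dW_s$, which is not defined pathwise. Here I would invoke Lemma~\ref{lemma-stochastic-integral-measurable-function}: with $\Phi_T$ as produced there, $Z_T^{-1}=\exp(-\Phi_T(H.,W.)-\tfrac12\int_0^T|H_s|^2\,ds)$ a.s., and for $\mu_H$-a.e.\ (hence $\mu_{(U,H)}$-a.e., since the exceptional set has $H$-marginal zero) frozen path $h$ one has $\Phi_T(h,W.)=\int_0^T h(s)^\top dW_s$ a.s. Thus, exactly as in the proof of Proposition~\ref{prop-Z-martingale}, for $\mu_{(U,H)}$-a.e.\ $(u,h)$ the inner expectation reduces to a Wiener integral against the deterministic integrand $i\phi-h$. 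Since $\int_0^T(i\phi-h)^\top dW_s$ is (complex) Gaussian, the exponential expectation equals $\exp(\tfrac12\int_0^T(i\phi-h)^\top(i\phi-h)\,ds)$; multiplying by the frozen prefactor $\exp(i\int_0^T\phi^\top h\,ds-\tfrac12\int_0^T|h|^2\,ds)$, the terms $i\int_0^T\phi^\top h\,ds$ and $\tfrac12\int_0^T|h|^2\,ds$ cancel and leave $\exp(-\tfrac12\int_0^T|\phi(s)|^2\,ds)=\prod_j\exp(-\tfrac12|\lambda_j|^2(t_j-t_{j-1}))$. Integrating the surviving factor $F(u)$ against $\mu_{(U,H)}$ gives $E_P[F(U)]$, yielding the displayed identity. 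The only remaining points requiring care are the justification of the complex Wiener-integral formula (by analytic continuation of the real Gaussian moment generating function, or by splitting into real and imaginary parts) and the measure-theoretic bookkeeping ensuring the $\mu_H$-null exceptional set is also $\mu_{(U,H)}$-null.
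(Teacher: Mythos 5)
Your proposal is correct and follows essentially the same route as the paper's proof: both reduce to finite-dimensional exponential moments, pass to $P$ via $E_Q[\cdot]=E_P[\cdot\,Z_T^{-1}]$, freeze the pair $(U,H)$ using its independence from $W$ together with Lemma \ref{lemma-stochastic-integral-measurable-function}, and evaluate the resulting deterministic-integrand Gaussian expectation before concluding by uniqueness of characteristic functions. The only cosmetic difference is that the paper conditions on $\mathcal{X}_T\vee\mathcal{U}$ and invokes the Girsanov theorem with real exponentials, whereas you disintegrate by Fubini and compute the complex Gaussian moment directly; note that, like the paper, your argument implicitly uses joint independence of $W$ from $(U,H)$ rather than the pairwise independence stated in the proposition.
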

\begin{proof}
  Although a proof of this proposition is provided in Proposition 3.13 of \citet{bain2009fundamentals}, we present a brief proof here, utilizing Lemma \ref{lemma-stochastic-integral-measurable-function}.

Let $\mathcal{U}$ be a $\sigma$-field generated by $U$, and let $\mathcal{H}_T$ denote the $\sigma$-field generated by $\{H_s\}_{0 \leq s \leq T}$. Consider $\xi_1, \dots, \xi_k \in \mathbb{R}^{d_2}$ and $t_1, \dots, t_k \in [0, T]$. Using the same arguments as in Proposition \ref{prop-Z-martingale} and applying the Girsanov theorem, along with the independence of $W$, $X$, $U$, and $H$, we deduce:
\begin{align*}
    &E_Q\left[\exp\left( \sum_{i=1}^k \xi_i^\top Y_{t_i} \right) \middle| \mathcal{X}_T \vee \mathcal{U} \right] \\
    =& E\left[\exp\left( \sum_{i=1}^k \xi_i^\top Y_{t_i} \right) Z_T^{-1} \middle| \mathcal{X}_T \vee \mathcal{U} \right] \\
    =& E\left[\exp\left( \sum_{i=1}^k \xi_i^\top \left( \int_0^{t_i} H_s \, ds + W_{t_i} \right) \right) \right. \\
    &\quad \left. \times \exp\left( -\int_0^T H_s^\top \, dW_s - \frac{1}{2} \int_0^T |H_s|^2 \, ds \right) \middle| \mathcal{X}_T \vee \mathcal{U} \right] \\
    =& E\left[E\left[\exp\left( \sum_{i=1}^k \xi_i^\top \left( \int_0^{t_i} H_s \, ds + W_{t_i} \right) \right) \right. \right. \\
    &\quad \left. \left. \times \exp\left( -\int_0^T H_s^\top \, dW_s - \frac{1}{2} \int_0^T |H_s|^2 \, ds \right) \middle| \mathcal{X}_T \vee \mathcal{U} \vee \mathcal{H}_T \right] \middle| \mathcal{X}_T \vee \mathcal{U} \right] \\
    =& E\left[E\left[\exp\left( \sum_{i=1}^k \xi_i^\top \left( \int_0^{t_i} h(s) \, ds + W_{t_i} \right) \right) \right. \right. \\
    &\quad \left. \left. \times \exp\left( -\int_0^T h(s)^\top \, dW_s - \frac{1}{2} \int_0^T |h(s)|^2 \, ds \right) \right]\bigg|_{h=H.} \middle| \mathcal{X}_T \vee \mathcal{U} \right] \\
    =& E\left[\exp\left( \sum_{i=1}^k \xi_i^\top W_{t_i} \right) \right] = E\left[\exp\left( \sum_{i=1}^k \xi_i^\top W_{t_i} \right) \middle| \mathcal{X}_T \vee \mathcal{U} \right].
\end{align*}
Here, $h \in S_T$, and $H.$ denotes the path of $\{H_s\}_{0 \leq s \leq T}$, and we used the independence of $\mathcal{X}_T\vee \mathcal{U}$ and $W$ in the last equality.

Now, let us fix $A \in \mathcal{X}_T \vee \mathcal{U}$ arbitrarily, and define measures $\mu_1$ and $\mu_2$ on $\mathbb{R}^{d_2 \times n}$ by
\begin{align*}
    \mu_1(B) &= E_Q[1_{B}(Y_{t_1}, \dots, Y_{t_k}) 1_A],~~~\mu_2(B) = E[1_{B}(W_{t_1}, \dots, W_{t_k}) 1_A],
\end{align*}
for any $B \in \mathcal{B}(\mathbb{R}^{d_2 \times n})$. By the earlier argument, the characteristic functions of $\mu_1$ and $\mu_2$ are identical. Hence, for $t_1,\cdots.t_k,s_1,\cdots,s_l \in [0,T]$, the law of $(Y_{t_1}, \dots, Y_{t_k}, X_{s_1}, \dots, X_{s_l}, U)$ under $Q$ is the same as the law of $(Y_{t_1}, \dots, W_{t_k}, W_{s_1}, \dots, W_{s_l}, U)$ under $P$ . This establishes the desired conclusion.
\end{proof}

\begin{proposition}\label{prop-Kallianpur-Striebel}(Kallianpur-Striebel formula)\\
  For any $0\leq t\leq T$ and a random variable $U$ such that $E[|U|]<\infty$, it holds that
  \begin{align}
    \label{eq-Kallianpur-Striebel}E[U|\mathcal{Y}_t]=\frac{\displaystyle E_Q\left[ UZ_t\middle|\mathcal{Y}_t \right]}{\displaystyle E_Q\left[ Z_t\middle|\mathcal{Y}_t \right]}~~\textrm{a.s.},
  \end{align}
  where $E_Q$ denotes the expectation under the measure $Q$.
\end{proposition}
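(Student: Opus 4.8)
The plan is to derive \eqref{eq-Kallianpur-Striebel} from the abstract Bayes formula for conditional expectations under the change of measure $P\to Q$, using the martingale structure of $Z^{-1}$ already established in Proposition~\ref{prop-Z-martingale}. Throughout I treat $U$ as $\mathcal{F}_t$-measurable (the case relevant here, where $U$ is a functional of the signal up to time $t$); this measurability is precisely what allows the time-$t$ density $Z_t$ to replace the terminal density $Z_T$. The first step is to record the key consequence of Proposition~\ref{prop-Z-martingale}: since $dQ/dP=Z_T^{-1}$ on $\mathcal{F}_T$ and $\{Z_s^{-1}\}$ is a $P$-martingale, the process $\{Z_s\}$ is a $Q$-martingale for $\{\mathcal{F}_s\}$. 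Concretely, for every $A\in\mathcal{F}_t$ one has $E_Q[1_A Z_T]=E[1_A]=E_Q[1_A Z_t]$, where the last equality uses $E[Z_T^{-1}\mid\mathcal{F}_t]=Z_t^{-1}$; hence $E_Q[Z_T\mid\mathcal{F}_t]=Z_t$. I also note $Z_t>0$, so $E_Q[Z_t\mid\mathcal{Y}_t]>0$ a.s.\ and the right-hand side of \eqref{eq-Kallianpur-Striebel} is well defined.

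Next I would set $R:=E_Q[UZ_t\mid\mathcal{Y}_t]/E_Q[Z_t\mid\mathcal{Y}_t]$, which is $\mathcal{Y}_t$-measurable, and verify that $R=E[U\mid\mathcal{Y}_t]$ by checking $E[UG]=E[RG]$ for every bounded $\mathcal{Y}_t$-measurable $G$. For the right-hand side I change measure, exploit the $\mathcal{F}_t$-measurability of $RG$ together with $E_Q[Z_T\mid\mathcal{F}_t]=Z_t$ to replace $Z_T$ by $Z_t$, and then factor out $E_Q[Z_t\mid\mathcal{Y}_t]$ against the definition of $R$:
\begin{align*}
E[RG]=E_Q[RGZ_T]=E_Q[RGZ_t]=E_Q\!\left[G\,E_Q[UZ_t\mid\mathcal{Y}_t]\right]=E_Q[UZ_tG].
\end{align*}
For the left-hand side, $E[UG]=E_Q[UGZ_T]$, and since $UG$ is $\mathcal{F}_t$-measurable the same martingale identity yields $E_Q[UGZ_T]=E_Q[UGZ_t]$. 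The two sides therefore coincide, so $R=E[U\mid\mathcal{Y}_t]$ a.s., which is exactly \eqref{eq-Kallianpur-Striebel}.

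The main obstacle is precisely the reduction from the terminal-horizon density $Z_T$, which is what the Radon--Nikodym derivative on $\mathcal{F}_T$ supplies and for which a bare Bayes argument would yield \eqref{eq-Kallianpur-Striebel} with $Z_T$ in place of $Z_t$, to the current-time density $Z_t$. This step hinges on two ingredients acting together: the $Q$-martingale property of $\{Z_s\}$ (itself a repackaging of Proposition~\ref{prop-Z-martingale}) and the $\mathcal{F}_t$-measurability of $U$; dropping the latter breaks the replacement of $Z_T$ by $Z_t$ in the numerator. The remaining bookkeeping---that $\mathcal{Y}_t\subset\mathcal{F}_t$, so the nested conditioning is legitimate, and that $P\sim Q$ lets almost-sure identities pass freely between the two measures---is routine.
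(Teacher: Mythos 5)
Your proof is correct and takes essentially the same route as the paper: both reduce \eqref{eq-Kallianpur-Striebel} to an integral identity over $\mathcal{Y}_t$-measurable test functions and pass between $P$ and $Q$ using the time-$t$ density $Z_t$, which is exactly what the martingale property of Proposition~\ref{prop-Z-martingale} supplies (your identity $E_Q[Z_T\mid\mathcal{F}_t]=Z_t$ is the same fact the paper invokes implicitly when writing $E[V]=E_Q[VZ_t]$ for $\mathcal{F}_t$-measurable $V$). Your explicit restriction to $\mathcal{F}_t$-measurable $U$ is in fact necessary for the statement to hold with $Z_t$ rather than $Z_T$ --- the paper's own proof uses this tacitly in the step $E[U1_A]=E_Q[U1_AZ_t]$ --- so on this point your write-up is the more careful of the two.
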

\begin{proof}
  This proposition is a generalized version of Proposition 3.16 in \citet{bain2009fundamentals}, but the proof is exactly the same as in the original. 
  
  First note that 
  \begin{align*}
    E_Q[|UZ_t|]=E[|U|Z_tZ_t^{-1}]=E[|U|]<\infty.
  \end{align*}
  Furthermore, by
  \begin{align*}
    P(Z_t=0)=E[1_{\{Z_t=0\}}]=E_Q[1_{\{Z_t=0\}}Z_t]=0,
  \end{align*}
  and the equivalence of $P$ and $Q$, $Z_t>0$ holds $P$ and $Q$ a.s. Hence, it is sufficient to show that
  \begin{align}
    \label{eq2-8}E[U|\mathcal{Y}_t]E_Q\left[ Z_t\middle|\mathcal{Y}_t \right]=E_Q\left[ UZ_t\middle|\mathcal{Y}_t \right].
  \end{align}
  To prove this, let $A \in \mathcal{Y}_t$. Then we have 
  \begin{align*}
    E_Q\bigl[ E[U|\mathcal{Y}_t]E_Q\left[ Z_t\middle|\mathcal{Y}_t \right]1_A \bigr]
    &=E_Q\bigl[ E_Q\left[E[U|\mathcal{Y}_t] Z_t 1_A\middle|\mathcal{Y}_t \right] \bigr]\\
    &=E_Q\bigl[ E[U|\mathcal{Y}_t] Z_t 1_A \bigr]\\
    &=E\bigl[ E[U1_A|\mathcal{Y}_t]  \bigr]\\
    &=E[U1_A]=E_Q[U1_AZ_t],
  \end{align*}
  which implies (\ref{eq2-8}).
\end{proof}

In the case where \( H \) depends on \( Y \) and is therefore not independent of \( W \), a convenient sufficient condition for \( Z_t^{-1} \) to be a martingale is provided in \citet{kunita2011nonlinear}.

\section{Linear Filtering and Smoothing}\label{section-linear-filtering}
In this section, we establish a stronger result for the linear case as a foundation for the subsequent analysis of the nonlinear case. Here, we assume that the processes \( \{X_t\} \) and \( \{Y_t\} \) are solutions of the following linear equations:
\begin{align}
  \label{eq-linear-system-X} &X_t = X_0+\int_0^t a(s) X_s \, ds + \int_0^t b(s) \, dV_s, \\
  \label{eq-linear-system-Y} &Y_t = \int_0^t c(s) X_s \, ds + \int_0^t \sigma(s) \, dW_s,
\end{align}
where \( a \), \( b \), \( c \), and \( \sigma \) are measurable functions taking values in \( M_{d_1}(\mathbb{R}) \), \( M_{d_1,m_1}(\mathbb{R}) \), \( M_{d_2,d_1}(\mathbb{R}) \), and \( M_{d_2,m_2}(\mathbb{R}) \), respectively. Here, \( \{V_t\} \) and \( \{W_t\} \) are independent \( m_1 \)- and \( m_2 \)-dimensional \( \{\mathcal{F}_t\} \)-Brownian motions. We also assume that \( X_0 \) is normally distributed and independent of \( \{V_t\} \) and \( \{W_t\} \), so that \( \{(X_t, Y_t)\} \) forms a \( d_1 + d_2 \)-dimensional Gaussian process.

Additionally, we assume that for every \( t \geq 0 \),
\begin{align}
  \label{eq-assumption-a} &\int_0^t |a(s)| \, ds < \infty, \\
  \label{eq-assumption-b} &\int_0^t |b(s)|^2 \, ds < \infty, \\
  \label{eq-assumption-c} &\int_0^t |c(s)|^2 \, ds < \infty, \\
  \label{eq-assumption-sigma-1} &\int_0^t |\sigma(s)|^2 \, ds < \infty,
\end{align}
and that there exists a constant \( C > 0 \) such that for every \( t \geq 0 \),
\begin{align}
  \label{eq-assumption-sigma-2} \lambda_{\min}(\sigma(t) \sigma(t)^\top) > C,
\end{align}
where \( \lambda_{\min}(\sigma(t) \sigma(t)^\top) \) denotes the smallest eigenvalue of \( \sigma(t) \sigma(t)^\top \).
Throughout this paper, $|A|$ denotes the Frobenius norm for a matrix $A$.

Let us define 
\begin{align*}
  \overline{W}_t=\int_0^t (\sigma(s)\sigma(s)^\top)^{-\frac{1}{2}}\sigma(s)dW_s.
\end{align*}
Then $\left\{ \overline{W}_t \right\}_{t\geq 0}$ is a Brownian motion since
\begin{align*}
  \langle \overline{W}\rangle_t&=\int_0^t (\sigma(s)\sigma(s)^\top)^{-\frac{1}{2}}\sigma(s)\sigma(s)^\top (\sigma(s)\sigma(s)^\top)^{-\frac{1}{2}}ds\\
  &=\int_0^t (\sigma(s)\sigma(s)^\top)^{-\frac{1}{2}}(\sigma(s)\sigma(s)^\top)^{\frac{1}{2}}(\sigma(s)\sigma(s)^\top)^{\frac{1}{2}} (\sigma(s)\sigma(s)^\top)^{-\frac{1}{2}}ds\\
  &=I_{d_2}t,
\end{align*}
where $I_{d_2}$ is the $d_2$-dimensional identity matrix. Furthermore, let us write
\begin{align*}
  \overline{c}(t)=(\sigma(t)\sigma(t)^\top)^{-\frac{1}{2}}c(t),~~\overline{Y}_t=\int_0^t (\sigma(s)\sigma(s)^\top)^{-\frac{1}{2}}dY_s,
\end{align*}
so that 
\begin{align*}
  \overline{Y}_t=\int_0^t\overline{c}(s)X_sds+\overline{W}_t.
\end{align*}
Then by (\ref{eq-assumption-c}) and (\ref{eq-assumption-sigma-2}), we have for $t\geq 0$
\begin{align*}
  \int_0^t |\overline{c}(s)X_s|^2ds
  \leq \int_0^t |\overline{c}(s)|^2|X_s|^2ds \leq \frac{1}{\sqrt{C}}\int_0^t |c(s)|^2|X_s|^2ds<\infty,
\end{align*}
where $C$ is given by (\ref{eq-assumption-sigma-2}). Furthermore, the filtration generated by $\{Y_t\}$ and $\{\overline{Y}_t\}$ are the same one because
\begin{align*}
  Y_t=\int_0^t (\sigma(s)\sigma(s)^\top)^\frac{1}{2}d\overline{Y}_s.
\end{align*}
Therefore, we can substitute \( H_t \) with \( \overline{c}(t) X_t \) and \( Y_t \) with \( \overline{Y}_t \) in Proposition \ref{prop-Kallianpur-Striebel} to obtain the following result.
\begin{proposition}\label{prop-Kallianpur-Striebel-linear}
  Let $T>0$ and define a probability measure $Q$ by
  \begin{align*}
    Q(A)=E\left[ 1_AZ_T^{-1} \right]&
  \end{align*}
  for $A \in \mathcal{F}$, where
  \begin{align}
    \label{eq-def-Z}\begin{split}
      Z_t=&\exp\left( \int_0^tX_s^\top c(s)^\top(\sigma(s)\sigma(s)^\top)^{-1}\sigma(s) dW_s\right.\\
      &\left.+\frac{1}{2}\int_0^tX_s^\top c(s)^\top(\sigma(s)\sigma(s)^\top)^{-1} c(s)X_s ds\right)\\
      =&\exp\left( \int_0^tX_s^\top c(s)^\top(\sigma(s)\sigma(s)^\top)^{-1} dY_s\right.\\
    &\left.-\frac{1}{2}\int_0^tX_s^\top c(s)^\top(\sigma(s)\sigma(s)^\top)^{-1} c(s)X_s ds\right).
    \end{split}    
  \end{align}
  Then, under the new measure $Q$, the process $\{\overline{Y}_t\}_{0\leq t\leq T}$ is a Brownian motion independent of $\{X_t\}_{0\leq t\leq T}$, and the law of $\{X_t\}_{0\leq t\leq T}$ is the same as its law under $P$.\\
  Furthermore, for $0\leq t \leq T$ and a random variable $U$ satisfying $E[|U|]<\infty$, it holds that $E[Z_t|\mathcal{Y}_t]>0~\mathrm{a.s.}$ and 
  \begin{align}
    \label{Kallianpur-Striebel-linear}&E[U|\mathcal{Y}_t]=\frac{\displaystyle E_Q\left[ UZ_t\middle|\mathcal{Y}_t \right]}{\displaystyle E_Q\left[ Z_t\middle|\mathcal{Y}_t \right]}.
  \end{align}  
\end{proposition}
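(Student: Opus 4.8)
The plan is to derive this proposition directly from the general results of Section~\ref{section-filtering-theory} by means of the substitution announced just above the statement, namely by identifying the generic observation noise $\{W_t\}$ with $\{\overline{W}_t\}$, the generic signal-dependent drift $\{H_t\}$ with $\{\overline{c}(t)X_t\}$, and the generic observation $\{Y_t\}$ with $\{\overline{Y}_t\}$. Most of the preparatory work has already been carried out in the discussion preceding the statement: $\{\overline{W}_t\}$ has been shown to be a standard $d_2$-dimensional Brownian motion, $\{\overline{Y}_t\}$ has been written in the required form $\overline{Y}_t=\int_0^t \overline{c}(s)X_s\,ds+\overline{W}_t$, the local square-integrability $\int_0^t|\overline{c}(s)X_s|^2\,ds<\infty$ (that is, condition~(\ref{eq-locally-boundedness-of-H}) for $H_s=\overline{c}(s)X_s$) has been verified, and the filtrations generated by $\{Y_t\}$ and $\{\overline{Y}_t\}$ have been shown to coincide. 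It therefore remains to check the one hypothesis not yet addressed---the independence of the noise from the signal---and then to read off the conclusions.

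First I would verify the independence required by Propositions~\ref{prop-distribution-in-Q} and~\ref{prop-Kallianpur-Striebel}, namely that $\{\overline{W}_t\}$ is independent of both $\{X_t\}$ and $\{H_t\}=\{\overline{c}(t)X_t\}$. Since the linear equation~(\ref{eq-linear-system-X}) has a unique strong solution, $\{X_t\}_{0\leq t\leq T}$ is measurable with respect to $\sigma(X_0,(V_s)_{s\leq T})$, which by hypothesis is independent of $\{W_t\}$; as $\{\overline{W}_t\}$ is a deterministic-coefficient stochastic integral of $\{W_t\}$, it is $\sigma(W)$-measurable and hence independent of $\{X_t\}$, and a fortiori of $\{\overline{c}(t)X_t\}$. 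Next I would confirm that the process $Z_t$ produced by the general construction~(\ref{def-Zt-general}) under this substitution coincides with the explicit process~(\ref{eq-def-Z}); substituting $d\overline{W}_s=(\sigma(s)\sigma(s)^\top)^{-\frac12}\sigma(s)\,dW_s$ and using the symmetry identity $\overline{c}(s)^\top(\sigma(s)\sigma(s)^\top)^{-\frac12}=c(s)^\top(\sigma(s)\sigma(s)^\top)^{-1}$ recovers the $dW$-form, and substituting $d\overline{Y}_s=(\sigma(s)\sigma(s)^\top)^{-\frac12}\,dY_s$ likewise recovers the $dY$-form. This is a routine matrix computation that I would not belabor.

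With the hypotheses in place, the conclusions follow by citation. Proposition~\ref{prop-Z-martingale} guarantees that $\{Z_t^{-1}\}$ is a martingale, so that $Q$ is a well-defined probability measure; Proposition~\ref{prop-distribution-in-Q} then yields that $\{\overline{Y}_t\}_{0\leq t\leq T}$ is a $Q$-Brownian motion independent of $\{X_t\}_{0\leq t\leq T}$ whose law under $Q$ agrees with its law under $P$. Finally, applying Proposition~\ref{prop-Kallianpur-Striebel} with the observation filtration taken to be that generated by $\{\overline{Y}_t\}$---which equals $\{\mathcal{Y}_t\}$ by the filtration identity above---produces exactly~(\ref{Kallianpur-Striebel-linear}).

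It remains to justify that the denominator is strictly positive, so that the quotient is meaningful. Exactly as in the proof of Proposition~\ref{prop-Kallianpur-Striebel}, the equivalence of $P$ and $Q$ gives $Z_t>0$ a.s.\ under both measures, and the martingale property of $\{Z_t^{-1}\}$ yields $E_Q[Z_t]=E[Z_t\,E[Z_T^{-1}\mid\mathcal{F}_t]]=E[Z_t Z_t^{-1}]=1<\infty$, so $Z_t$ is $Q$-integrable and its conditional expectation satisfies $E_Q[Z_t\mid\mathcal{Y}_t]>0$ a.s. The hard part of the argument is conceptual rather than technical: one must be careful that the whitening transformation by $(\sigma\sigma^\top)^{-1/2}$---legitimate thanks to the uniform ellipticity~(\ref{eq-assumption-sigma-2})---preserves both the independence of the driving noise from the signal and the observation filtration, since these are precisely the two structural properties on which the cited propositions rely.
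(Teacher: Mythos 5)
Your proposal is correct and follows essentially the same route as the paper: the paper gives no separate proof for this proposition, deriving it exactly as you do by substituting $H_t=\overline{c}(t)X_t$, $W_t=\overline{W}_t$, $Y_t=\overline{Y}_t$ into Propositions \ref{prop-Z-martingale}, \ref{prop-distribution-in-Q} and \ref{prop-Kallianpur-Striebel}, with the whitening, the local square-integrability, and the filtration identity already established in the preceding paragraphs. Your added verifications (independence of $\overline{W}$ from $X$ via the strong solution, the identification of $Z_t$, and the positivity of $E_Q[Z_t\mid\mathcal{Y}_t]$) are correct and merely make explicit what the paper leaves implicit.
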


To examine the properties of the conditional distribution, we fix \( t \in [0, T] \) and define the random measure \( \tilde{P}_t \) on \( (\Omega, \mathcal{X}_t) \) by
\begin{align}
  \label{def-tilde-P}\tilde{P}_t(A) = \tilde{P}_t(A | \mathcal{Y}_t) = \frac{\displaystyle E_Q\left[ 1_A Z_t \middle| \mathcal{Y}_t \right]}{\displaystyle E_Q\left[ Z_t \middle| \mathcal{Y}_t \right]},
\end{align}
and let \( \tilde{E}_t \) denote the expectation with respect to \( \tilde{P}_t \).

Note that this measure is well-defined. Specifically, if \( C \) denotes the space of \( \mathbb{R}^{d_1 + d_2} \)-valued continuous functions on \( [0, t] \), equipped with the Borel \(\sigma\)-algebra associated with the supremum norm, then any \( \mathcal{X}_t \vee \mathcal{Y}_t \)-measurable random variable \( U \) can be represented as
\begin{align*}
  U = \Phi(X., Y.) \quad \text{a.s.},
\end{align*}
where \( \Phi : C \to \mathbb{R} \) is a measurable function, and \( X. \) and \( Y. \) denote the paths of \( \{X_s\}_{0 \leq s \leq t} \) and \( \{Y_s\}_{0 \leq s \leq t} \), respectively. It is well known that \( C \) is a complete separable metric space, and therefore the regular conditional distribution of \( (X., Y.) \) given \( Y. \) is well-defined. 

Let us denote this distribution by $\mu_{(X,Y)}(\,\cdot\,|Y.)$. Then for $A=X.^{-1}(B)\in \mathcal{X}_t$ ($B \in \mathcal{B}(C)$), we can define $\tilde{P}_t$ by 
\begin{align*}
  \tilde{P}_t(X.^{-1}(B))=\mu_{(X,Y)}(B\times C|Y.).
\end{align*}
This $\tilde{P}$ is well-defined. In fact, if $A=X.^{-1}(B_1)=X.^{-1}(B_2)$, then we have $X.^{-1}(B_1\backslash B_2)=X.^{-1}(B_2\backslash B_1)=\emptyset$ and thus 
\begin{align*}
  \mu_{(X,Y)}(B_1\times C|Y.)=\mu_{(X,Y)}((B_1\cap B_2)\times C|Y.)=\mu_{(X,Y)}(B_2\times C|Y.).
\end{align*}
Furthermore, choose $A_1=X.^{-1}(B_1),A_2=X.^{-1}(B_2),\cdots \in \mathcal{X}_t$ where $B_1$, $B_2$,$\cdots \in \mathcal{B}(C)$, and assume $A_i\cap A_j=\emptyset~(i\neq j)$. Also, set
\begin{align*}
  B_1'=B_1,~~B_2'=B_2\backslash B_1,~~\cdots,B_i'=B_i\backslash \bigcup_{j=1}^{i-1} B_j.
\end{align*}
Then we have $A_i=X.^{-1}(B_i')$ for every $i \in \mathbb{N}$. In fact, if there exists $\omega \in X.^{-1}(B_i)\backslash X.^{-1}(B_i')=X.^{-1}(B_i \backslash B_i')$ for $i\geq 2$, then $\displaystyle X.(\omega) \in B_i\cap \bigcup_{j=1}^{i-1} B_j$ and thus $\displaystyle \omega \in A_i \cap \bigcup_{j=1}^{i-1}A_j$. This contradicts $A_i \cap A_j=\emptyset$.

Hence, we have 
\begin{align*}
  P\left( \bigcup_{i=1}^\infty A_i \right)=&P\left( \bigcup_{i=1}^\infty X.^{-1}(B_i') \right)
  =\mu_{(X,Y)}\left( \bigcup_{i=1}^\infty B_i'\times C \middle|Y.\right)\\
  =&\sum_{i=1}^\infty \mu_{(X,Y)}\left( B_i'\times C \right)=\sum_{i=1}^\infty P(A_i).
\end{align*}

\subsection{The main theorem}\label{section-main-theorem}
In this section, we prove our main theorem for the linear case, which determines the distribution of the process \( \{X_s\}_{0 \leq s \leq t} \) under the measure \( \tilde{P}_t \). To describe the theorem, for each fixed \( t \geq 0 \), we introduce an \( M_{d_1}(\mathbb{R}) \)-valued differentiable function \( \phi(s; t) \) as the negative semidefinite symmetric solution of the matrix Riccati equation
\begin{align}
  \label{eq-def-phi}
  \begin{split}
    \frac{d\phi}{ds}(s; t) &= -\phi(s; t) b(s) b(s)^\top \phi(s; t) - a(s)^\top \phi(s; t) - \phi(s; t) a(s) \\
    &\quad + c(s)^\top (\sigma(s) \sigma(s)^\top)^{-1} c(s)
  \end{split}
\end{align}
with the boundary condition \( \phi(t; t) = 0 \). This solution is well-defined. Specifically, the Riccati equation
\begin{align*}
  \frac{d\tilde{\phi}}{ds}(s; t) &= -\tilde{\phi}(s; t) b(t - s) b(t - s)^\top \tilde{\phi}(s; t) \\
  &\quad - a(t - s)^\top \tilde{\phi}(s; t) - \tilde{\phi}(s; t) a(t - s) \\
  &\quad + c(t - s)^\top (\sigma(t - s) \sigma(t - s)^\top)^{-1} c(t - s)
\end{align*}
with \( \tilde{\phi}(0; t) = 0 \) has a positive-semidefinite solution \( \tilde{\phi}(s; t) \) for \( 0 \leq s \leq t \), according to Theorems 2.1 and 2.2 in \citet{potter1965matrix}. It then follows immediately that \( \phi(s; t) = -\tilde{\phi}(t - s; t) \) satisfies (\ref{eq-def-phi}), and uniqueness can be established using Gronwall's lemma.

Also, let \( \{\tilde{V}_s\} \) be a \( d_1 \)-dimensional Brownian motion on \( (\Omega, \mathcal{F}, \{\mathcal{F}_t\}, P) \) that is independent of \( \{Y_s\} \) and \( X_0 \). Let \( \{\xi_{s; t}\}_{0 \leq s \leq t} \) be the solution to the stochastic differential equation
\begin{align}
  \label{eq-def-xi}
  d_s\xi_{s; t} = \left\{ a(s) + b(s) b(s)^\top \phi(s; t) \right\} \xi_{s; t} \, ds + b(s) \, d\tilde{V}_s,
\end{align}
where \( \xi_{0; t} \) is a normal random variable, independent of \( \{\tilde{V}_s\} \) and \( \{Y_s\} \), with mean \( 0 \) and covariance matrix
\begin{align}
  \label{eq-cov-xi0}
  V[\xi_{0; t}] = V[X_0]^{\frac{1}{2}} \left( I_{d_1} - V[X_0]^{\frac{1}{2}} \phi(0; t) V[X_0]^{\frac{1}{2}} \right)^{-1} V[X_0]^{\frac{1}{2}}.
\end{align}
Here, note that $I_{d_1} - V[X_0]^{\frac{1}{2}} \phi(0; t) V[X_0]^{\frac{1}{2}}$ is positive-definite since $\phi$ is defined as the negative-semidefinite solution. Furthermore, we define \( \zeta_{s; t} \) for \( 0 \leq s \leq t \) by
\begin{align}
  \label{eq-def-zeta}
  \begin{split}    
    \zeta_{s; t} = &E[X_s] + \xi_{s; t} \\
    &+ \int_0^t \mathrm{Cov}(\xi_{s; t}, \xi_{u; t}) c(u)^\top (\sigma(u) \sigma(u)^\top)^{-1} (dY_u - c(u) E[X_u] \, du).
  \end{split}
\end{align} 
Noting that it holds $E[\xi_{s;t}|\mathcal{Y}_t]=E[\xi_{s;t}]=0$ by $E[\xi_{0;t}]=0$ and the independence, $\{\zeta_{s;t}\}_{0\leq s\leq t}$ is a conditionally Gaussian process with mean
\begin{align}
  \label{eq-mean-zeta}\begin{split}
    E[\zeta_{s;t}|\mathcal{Y}_t]=&E[X_s]\\
  &+ \int_0^t \mathrm{Cov}(\xi_{s; t}, \xi_{u; t}) c(u)^\top (\sigma(u) \sigma(u)^\top)^{-1} (dY_u - c(u) E[X_u] \, du),
  \end{split}  
\end{align}
and covariance 
\begin{align}
  \label{eq-cov-zeta}\mathrm{Cov}(\zeta_{s;t},\zeta_{u;t}|\mathcal{Y}_t)=\mathrm{Cov}(\xi_{s;t},\xi_{u;t}).
\end{align}

Then the distribution of \( \{X_s\}_{0 \leq s \leq t} \) under \( \tilde{P}_t \) is characterized by the following result.

\begin{theorem}\label{main-theorem-linear}
  Let $C_t$ be the space of all continuous functions from $[0,t]$ to $\mathbb{R}^{d_1+d_2}$, and $\mathcal{C}_t$ be the Borel $\sigma$-filed on $C_t$.
  
  Then the law of the process $\{(\zeta_{s;t},Y_s)\}_{0\leq s\leq t}$ on a regular conditional probability measure $P(\,\cdot\,|\mathcal{Y}_t)$ is a version of the law of the process $\{(X_s,Y_s)\}_{0\leq s \leq t}$ on the probability measure $\tilde{P}_t(\,\cdot\,|\mathcal{Y}_t)$, in the sense that for every $A \in \mathcal{C}_t$
  \begin{align*}
    \tilde{P}_t((X.,Y.)\in A|\mathcal{Y}_t)=P((\zeta.,Y.)\in A|\mathcal{Y}_t),
  \end{align*}
  where $X.,Y.$ and $\xi.$ are paths of $\{X_s\}_{0\leq s\leq t},\{Y_s\}_{0\leq s\leq t}$ and $\{\zeta_{s;t}\}_{0\leq s\leq t}$, respectively.
\end{theorem}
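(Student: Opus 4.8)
\section*{Proof plan}

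The plan is to use the Kallianpur--Striebel formula (Proposition~\ref{prop-Kallianpur-Striebel-linear}) to realise $\tilde P_t$ as an exponential tilt of the $Q$-law of the path $X.$, and then to observe that both sides of the asserted identity are conditionally Gaussian given $\mathcal Y_t$, so that it suffices to match conditional means and conditional covariances. Concretely, for bounded measurable $F$ one has $\tilde E_t[F(X.)]=E_Q[F(X.)Z_t\mid\mathcal Y_t]/E_Q[Z_t\mid\mathcal Y_t]$, and under $Q$ the path $X.$ is Gaussian and independent of $\mathcal Y_t$. Since, for a fixed observation path, the second form of (\ref{eq-def-Z}) shows $Z_t=\exp(\ell(X.)-\tfrac12 q(X.))$ is the exponential of an affine-minus-quadratic functional of $X.$, with $q(X.)=\int_0^t X_s^\top c(s)^\top(\sigma(s)\sigma(s)^\top)^{-1}c(s)X_s\,ds\ge 0$ and $\ell$ the linear term $\int_0^t X_s^\top c(s)^\top(\sigma(s)\sigma(s)^\top)^{-1}dY_s$, the tilted measure $\tilde P_t$ is again Gaussian on path space, integrability being immediate because $\exp(-\tfrac12 q)\le 1$. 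This establishes conditional Gaussianity of $X.$ under $\tilde P_t$; as $\{\zeta_{s;t}\}$ is already conditionally Gaussian, the theorem reduces to the two identities $\tilde E_t[X_s]=E[\zeta_{s;t}\mid\mathcal Y_t]$ and $\tilde{\mathrm{Cov}}_t(X_s,X_u)=\mathrm{Cov}(\xi_{s;t},\xi_{u;t})$.

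To compute these moments I would recast the quadratic factor as a Girsanov exponential. Applying It\^o's formula to $s\mapsto \tfrac12 X_s^\top\phi(s;t)X_s$ on $[0,t]$ and using both the Riccati equation (\ref{eq-def-phi}) and the terminal condition $\phi(t;t)=0$, the quadratic form $q$ can be rewritten, up to a path-independent constant and the boundary term $\tfrac12 X_0^\top\phi(0;t)X_0$, as a stochastic exponential $\exp(\int_0^t (b(s)^\top\phi(s;t)X_s)^\top dV_s-\tfrac12\int_0^t|b(s)^\top\phi(s;t)X_s|^2\,ds)$. By Girsanov's theorem this factor changes the drift of the signal from $a(s)X_s$ to $\{a(s)+b(s)b(s)^\top\phi(s;t)\}X_s$, i.e.\ into exactly the dynamics (\ref{eq-def-xi}) of $\xi_{\cdot;t}$, while the boundary term $\exp(\tfrac12 X_0^\top\phi(0;t)X_0)$ tilts the law of $X_0$ into a Gaussian whose covariance is precisely (\ref{eq-cov-xi0}). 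Because the remaining affine factor $\exp(\ell(X.))$ is linear in $X.$ and hence affects only the mean, the conditional covariance of $X.$ under $\tilde P_t$ coincides with that of the drift-modified process, namely $\mathrm{Cov}(\xi_{s;t},\xi_{u;t})$, giving (\ref{eq-cov-zeta}). The conditional mean is then obtained by shifting the mean of the drift-modified process by the covariance paired with the linear functional $\ell$; a deterministic ODE identity, again a consequence of (\ref{eq-def-phi}) and the initial-covariance formula, reconciles the resulting expression with $E[X_s]$ and yields exactly (\ref{eq-mean-zeta}) via (\ref{eq-def-zeta}).

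Having matched the conditional means and covariances of two conditionally Gaussian processes sharing the common observation component $Y.$, I would conclude that their finite-dimensional conditional distributions agree almost surely, and hence, since a law on $(C_t,\mathcal C_t)$ is determined by its finite-dimensional marginals together with a monotone-class argument upgrading the countable family of a.s.\ identities to the full Borel $\sigma$-field, that $\tilde P_t((X.,Y.)\in A\mid\mathcal Y_t)=P((\zeta.,Y.)\in A\mid\mathcal Y_t)$ for every $A\in\mathcal C_t$.

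I expect the main obstacle to be the rigorous identification of the exponentially tilted Gaussian measure with the explicitly constructed process $\{\zeta_{s;t}\}$: one must justify that the stochastic exponential arising from the Riccati rearrangement is a genuine change of measure (its martingale property, rather than Novikov's condition, is delicate because $b^\top\phi X$ has only Gaussian tails, so that $E_Q[\exp(\tfrac12\int_0^t|b^\top\phi X|^2\,ds)]$ may be infinite), carefully track the interplay between the boundary term at $s=0$, the modified initial covariance (\ref{eq-cov-xi0}), and the prior mean $E[X_0]$, and verify that the stochastic integral against $dY$ in (\ref{eq-def-zeta}) is well defined and $\mathcal Y_t$-measurable. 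A clean way to sidestep the martingale subtleties, which I would adopt if the Girsanov route proves awkward, is to verify the two moment identities directly by evaluating the Gaussian (Mehler-type) integral $\tilde E_t[\exp(i\sum_j\lambda_j^\top X_{s_j})]$ through completion of the square, which produces the Riccati equation and the drift modification without ever passing through an equivalent change of measure.
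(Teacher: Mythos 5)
Your core computation is, in essence, the paper's own: applying It\^o's formula to \(s\mapsto X_s^\top\phi(s;t)X_s\), using the Riccati equation (\ref{eq-def-phi}) and \(\phi(t;t)=0\) to convert the quadratic observation term into a change of law that turns \(X\) into the drift-modified process with dynamics (\ref{eq-def-xi}) started at \(X_0\), with the boundary term \(\exp(\tfrac12 X_0^\top\phi(0;t)X_0)\) retilting the initial covariance into (\ref{eq-cov-xi0}) --- this is exactly the content of Lemma \ref{lemma-3-1} and the identity (\ref{eq-measure-change}). Your endgame (two conditionally Gaussian processes with matching conditional means and covariances, then a monotone-class/\(\pi\)-\(\lambda\) argument over finite-dimensional cylinder sets) is also the paper's proof of the theorem. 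One point worth noting: the Novikov/martingale difficulty you anticipate for the Girsanov route does not arise in the paper, because it invokes the Liptser--Shiryaev density formula for linear diffusions (equation (7.138) there, with a relaxed integrability hypothesis) rather than running Girsanov's theorem on the stochastic exponential of \(b^\top\phi X\); adopting that reference is the cleanest repair of your step.

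There are, however, two genuine gaps where the paper invests most of its technical effort and your proposal does not. First, the opening move ``for a fixed observation path, \(Z_t=\exp(\ell(X.)-\tfrac12 q(X.))\)'' is not available as written: \(\ell\) is an It\^o integral against \(dY\), which is not a pathwise functional of \((X.,Y.)\), so \(E_Q[\,\cdot\,|\mathcal Y_t]\) cannot be read as ``freeze \(Y\) and integrate over \(x\)'' without justification. The paper resolves this by integrating by parts in (\ref{eq-Z-Ito}), writing \(\int_0^t X_s^\top c(s)^\top(\sigma(s)\sigma(s)^\top)^{-1}dY_s = X_t^\top\hat Y_t-\int_0^t\hat Y_s^\top dX_s\) so that the exponent is pathwise in \(Y\), and by Lemma \ref{lemma-stochastic-integral-measurable-function} to legitimize substituting the observed path; you flag this measurability issue only for (\ref{eq-def-zeta}), but it sits at the heart of your very first step. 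Second, ``the tilted measure is again Gaussian on path space'' is asserted, not proved: in infinite dimensions this requires an argument, and the paper supplies it via discretization (Lemmas \ref{lemma-3-2} and \ref{lemma3-3}): the log-Laplace functional of the discretized tilt is a quadratic form in \((y_0,\dots,y_n)\), so second derivatives are constant (the ``evaluate at \(0\)'' device), third and higher cumulants vanish, and the conditional mean is pinned down by the Clark--Ocone formula together with the trick of taking \(P\)-expectations of \(\tilde E_t[X_s]=g(s;t)+\int_0^t\mathrm{Cov}(\xi_{s;t},\xi_{u;t})c(u)^\top(\sigma(u)\sigma(u)^\top)^{-1}dY_u\) to identify the deterministic part --- which is considerably sharper than your unspecified ``reconciling ODE identity.'' Your fallback suggestion (computing the conditional characteristic functional by completing the square) would close this second gap and is essentially a continuum version of what Lemma \ref{lemma3-3} does discretely, but it still presupposes the pathwise reformulation above.
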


\subsubsection{Heuristic Argument}\label{section-heuristic-argument}
Before proving Theorem \ref{main-theorem-linear}, we provide a heuristic argument to illustrate the main ideas behind the proof. The primary task in the proof is to compute the conditional distribution of to calculate conditional distribution of $\{X_s\}_{0\leq s\leq t}$ given $\mathcal{Y}_t$, by directly evaluating the quotient in (\ref{Kallianpur-Striebel-linear}). 

However, a significant challenge arises in applying stochastic calculus techniques, such as the Girsanov theorem, because the numerator and denominator are conditional expectations. In the following, we introduce an "evaluate at 0 methodology" to reduce the quotient of these conditional expectations to the computation of standard expectations.\vspace{10pt}\\
{\bf (I) One-dimensional case:}\\
Before considering the original problem, we explain our idea in a one-dimensional model given by
\begin{align*}
  Y = cX + \sigma Z,
\end{align*}
where \( X \sim N(\mu_x, \sigma_x^2) \) and \( Z \sim N(0, 1) \) are independent random variables, with \( \sigma_x, \sigma > 0 \). Furthermore, we assume $c\neq 0$. In this model, the conditional expectation of \( X \) given \( Y \) can be expressed as
\begin{align}
  E[X | Y = y] &= \frac{\displaystyle \int_{\mathbb{R}} x \exp\left( -\frac{1}{2\sigma^2}(cx - y)^2 \right) p(x) \, dx}{\displaystyle \int_{\mathbb{R}} \exp\left( -\frac{1}{2\sigma^2}(cx - y)^2 \right) p(x) \, dx} \nonumber \\
  &= \frac{\displaystyle \int_{\mathbb{R}} x \exp\left( -\frac{c^2x^2}{2\sigma^2} + \frac{cxy}{\sigma^2} \right) p(x) \, dx}{\displaystyle \int_{\mathbb{R}} \exp\left( -\frac{c^2x^2}{2\sigma^2} + \frac{c xy}{\sigma^2} \right) p(x) \, dx}\nonumber\\
  \label{eq3-14} &=\frac{\displaystyle E\left[ X\exp\left( -\frac{c^2X^2}{2\sigma^2}+\frac{cXy}{\sigma^2} \right) \right]}{\displaystyle E\left[ \exp\left( -\frac{c^2X^2}{2\sigma^2}+\frac{cXy}{\sigma^2} \right) \right]}
\end{align}
where \( p(x) \) is the probability density function of \( X \). This formula is the analogue of (\ref{Kallianpur-Striebel-linear}).

This final expression can be further rewritten as
\begin{align}
   \label{eq3-17}\frac{\sigma^2}{c}\frac{d}{dy} \log E\left[ \exp\left( -\frac{c^2X^2}{2\sigma^2}+\frac{cXy}{\sigma^2} \right) \right],
\end{align}
as known as the Tweedie's formula. Due to the normality of \( X \), it can be shown that \(\displaystyle \log E\left[ \exp\left( -\frac{X^2}{2\sigma^2}+\frac{Xy}{\sigma^2} \right) \right] \) is a quadratic function of \( y \). Therefore, we can write
\begin{align*}
  E[X | Y = y]=\frac{\displaystyle E\left[ X\exp\left( -\frac{c^2X^2}{2\sigma^2}+\frac{cXy}{\sigma^2} \right) \right]}{\displaystyle E\left[ \exp\left( -\frac{c^2X^2}{2\sigma^2}+\frac{cXy}{\sigma^2} \right) \right]}=Ay+B.
\end{align*}
By taking expectation of both sides (with respect to $y$), we obtain
\begin{align*}
  E[X]=AE[Y]+B=AcE[X]+B.
\end{align*}
Thus we have $B=E[X]-AcE[X]$ and hence
\begin{align}
  \label{eq3-25}E[X | Y = y]=\frac{\displaystyle E\left[ X\exp\left( -\frac{c^2X^2}{2\sigma^2}+\frac{cXy}{\sigma^2} \right) \right]}{\displaystyle E\left[ \exp\left( -\frac{c^2X^2}{2\sigma^2}+\frac{cXy}{\sigma^2} \right) \right]}=E[X]+A(y-cE[X]).
\end{align}
Furthermore, by differentiating (\ref{eq3-25}), $A$ can be determined as
\begin{align}
  A=&\frac{d}{dy}\frac{\displaystyle E\left[ X\exp\left( -\frac{c^2X^2}{2\sigma^2}+\frac{cXy}{\sigma^2} \right) \right]}{\displaystyle E\left[ \exp\left( -\frac{c^2X^2}{2\sigma^2}+\frac{cXy}{\sigma^2} \right) \right]}\nonumber\\
  \label{eq3-26}\begin{split}
    =&\frac{c}{\sigma^2}\frac{\displaystyle  E\left[ X^2\exp\left( -\frac{c^2X^2}{2\sigma^2}+\frac{cXy}{\sigma^2} \right) \right]}{\displaystyle E\left[ \exp\left( -\frac{c^2X^2}{2\sigma^2}+\frac{cXy}{\sigma^2} \right) \right]} \\
  &- \frac{c}{\sigma^2} \left( \frac{\displaystyle E\left[ X\exp\left( -\frac{c^2X^2}{2\sigma^2}+\frac{cXy}{\sigma^2} \right) \right]}{\displaystyle E\left[ \exp\left( -\frac{c^2X^2}{2\sigma^2}+\frac{cXy}{\sigma^2} \right) \right]} \right)^2
  \end{split}  
\end{align}
Since $A$ does not depend on $y$, we can set $y=0$ in the final expression, which yields
\begin{align}
  \label{eq3-27}A=\frac{c}{\sigma^2} \left\{ \frac{\displaystyle E\left[ X^2 \exp\left( -\frac{c^2X^2}{2\sigma^2} \right) \right]}{\displaystyle E\left[ \exp\left( -\frac{c^2X^2}{2\sigma^2} \right) \right]} - \left( \frac{\displaystyle E\left[ X \exp\left( -\frac{c^2X^2}{2\sigma^2} \right) \right]}{\displaystyle E\left[ \exp\left( -\frac{c^2X^2}{2\sigma^2} \right) \right]} \right)^2 \right\}.
\end{align}
Therefore, (\ref{eq3-25}) ends up in the conditional mean formula
\begin{align}
  \label{eq3-28}\begin{split}
    &E[X | Y = y] = E[X] \\
  &+ \frac{c}{\sigma^2} \left\{ \frac{\displaystyle E\left[ X^2 \exp\left( -\frac{c^2X^2}{2\sigma^2} \right) \right]}{\displaystyle E\left[ \exp\left( -\frac{c^2X^2}{2\sigma^2} \right) \right]} - \left( \frac{\displaystyle E\left[ X \exp\left( -\frac{c^2X^2}{2\sigma^2} \right) \right]}{\displaystyle E\left[ \exp\left( -\frac{c^2X^2}{2\sigma^2} \right) \right]} \right)^2 \right\} (y - cE[X]).
  \end{split}  
\end{align}

Moreover, recalling the Bayesian formula (\ref{eq3-14}), we notice that (\ref{eq3-26}) is equivalent to
\begin{align*}
  A=\frac{c}{\sigma^2}(E[X^2|Y=y]-E[X|Y=y]^2)=\frac{c}{\sigma^2}E[V|Y=y].
\end{align*}
Since $A$ is given by (\ref{eq3-27}), we obtain the conditional variance formula
\begin{align}
  \label{eq3-29}V[X|Y=y]=\frac{\displaystyle E\left[ X^2 \exp\left( -\frac{c^2X^2}{2\sigma^2} \right) \right]}{\displaystyle E\left[ \exp\left( -\frac{c^2X^2}{2\sigma^2} \right) \right]} - \left( \frac{\displaystyle E\left[ X \exp\left( -\frac{c^2X^2}{2\sigma^2} \right) \right]}{\displaystyle E\left[ \exp\left( -\frac{c^2X^2}{2\sigma^2} \right) \right]} \right)^2.
\end{align}
Further differentiating (\ref{eq3-26}) yields the third and higher cumulants of $X$ given $Y$, but they are zero since $A$ is a constant. In this way, we can conclude that the conditional distribution of $X$ given $Y=y$ is normal with the mean and variance given by (\ref{eq3-28}) and (\ref{eq3-29}), respectively.

The key contribution here is not merely the reminiscent of Tweedie’s formula. Rather, we proceed by differentiating the conditional expectation, evaluating it at zero, and deriving the non-trivial expressions in (\ref{eq3-28}) and (\ref{eq3-29}), which represent the conditional distribution as a quotient of ordinary expectations. In the infinite-dimensional setting, this formulation enables us to analyse the conditional distribution using classical tools from stochastic calculus, such as It\^o's formula and the Girsanov theorem.\vspace{0.3cm}\\
\noindent
{\bf (II) The original case:}\\
Now, let us apply the same methodology to the original model. Although not all steps are rigorously justified, the following informal computations serve to convey the intuition and underlying structure of the argument. 

For simplicity, assume $a$, $b$, $c$, $X$ and $Y$ are one-dimensional, and consider the case of $\sigma=1$. Also, we assume that $c(t)>0$ for all $t\geq 0$.

Due to Proposition \ref{prop-Kallianpur-Striebel-linear}, the conditional expectation of $X_s~(0\leq s\leq t)$ given $\mathcal{Y}_t$ can be written as
\begin{align}
  \label{eq3-15}E[X_s|\mathcal{Y}_t]=&\frac{\displaystyle E_Q\left[ X_s \exp\left( \int_0^tc(s)X_sdY_s-\frac{1}{2}\int_0^tc(s)^2X_s^2ds\right)\middle|\mathcal{Y}_t \right]}{\displaystyle E_Q\left[ \exp\left( \int_0^tc(s)X_sdY_s-\frac{1}{2}\int_0^tc(s)^2X_s^2ds\right)\middle|\mathcal{Y}_t\right]}
\end{align}
({\bf corresponding to (\ref{eq3-14})}). 

Recall that under $Q$, $Y$ is a Brownian motion independent of $X$. Hence, if $D$ denotes the Malliavin derivative with respect to $Y$, then we heuristically observe that
\begin{align}
  \label{eq-Malliavin-derivative}\begin{split}
    &D_sE_Q\left[ \exp\left( \int_0^tc(s)X_sdY_s-\frac{1}{2}\int_0^tc(x)^2X_s^2ds\right)\middle|\mathcal{Y}_t\right]\\
  &=D_s\int\exp\left( \int_0^tc(s)x(s)dY_s-\frac{1}{2}\int_0^tc(s)^2x(s)^2ds\right)\mu_X(dx)\\
  &=\int D_s\exp\left( \int_0^tc(s)x(s)dY_s-\frac{1}{2}\int_0^tc(s)^2x(s)^2ds\right)\mu_X(dx)\\
  &=\int c(s)\exp\left( \int_0^tc(s)x(s)dY_s-\frac{1}{2}\int_0^tc(s)^2x(s)^2ds\right)\mu_X(dx)\\
  &=c(s)E_Q\left[X_s \exp\left( \int_0^tc(s)X_sdY_s-\frac{1}{2}\int_0^tc(s)^2ds\right)\middle|\mathcal{Y}_t\right]
  \end{split}  
\end{align}
Therefore, by the chain rule, (\ref{eq3-15}) can be rewritten as
\begin{align}
  \label{eq3-19}\frac{1}{c(s)}D_s \log E_Q\left[X_s \exp\left( \int_0^tc(s)X_sdY_s-\frac{1}{2}\int_0^tc(s)^2X_s^2ds\right)\middle|\mathcal{Y}_t\right]
\end{align}
({\bf corresponding to (\ref{eq3-17})}).\\
Here, as in the one-dimensional case, the Gaussianity of $X$ yields the form (which is shown in Lemma \ref{lemma-3-1})
\begin{align}
  \label{eq-2nd-Wiener-Chaos}\begin{split}
    &E_Q\left[X_s \exp\left( \int_0^tc(s)X_sdY_s-\frac{1}{2}\int_0^tc(s)^2X_s^2ds\right)\middle|\mathcal{Y}_t\right]\\
  &=\exp\left( h_0(t)+\int_0^t h_1(s;t)dY_s+\int_0^t\int_0^s h_2(s,u;t)dY_u dY_s \right).
  \end{split}  
\end{align}
Thus (\ref{eq3-15}) (which is equals to (\ref{eq3-19})) can be written in the form
\begin{align*}
  E[X_s|\mathcal{Y}_t]=&\frac{\displaystyle E_Q\left[ X_s \exp\left( \int_0^tc(s)X_sdY_s-\frac{1}{2}\int_0^tc(s)^2X_s^2ds\right)\middle|\mathcal{Y}_t \right]}{\displaystyle E_Q\left[ \exp\left( \int_0^tc(s)X_sdY_s-\frac{1}{2}\int_0^tc(s)^2X_s^2ds\right)\middle|\mathcal{Y}_t\right]}\\
  =&g_0(t)+\int_0^t g_1(s,u;t)dY_u.
\end{align*}
By taking the expectations of both sides, we obtain
\begin{align*}
  E[X_s]=g_0(t)+\int_0^t g_1(s,u;t)c(u)E[X_u]du,
\end{align*}
and thus it holds
\begin{align}  
    E[X_s|\mathcal{Y}_t]=&\frac{\displaystyle E_Q\left[ X_s \exp\left( \int_0^tc(s)X_sdY_s-\frac{1}{2}\int_0^tc(s)^2X_s^2ds\right)\middle|\mathcal{Y}_t \right]}{\displaystyle E_Q\left[ \exp\left( \int_0^tc(s)X_sdY_s-\frac{1}{2}\int_0^tc(s)^2X_s^2ds\right)\middle|\mathcal{Y}_t\right]}\nonumber\\
    \label{eq3-18}=&E[X_t]+\int_0^t g_1(s,u;t)(dY_u-c(u)E[X_u]du).
\end{align}
({\bf corresponding to (\ref{eq3-25})}).

Furthermore, $g_1$ is obtained by the Malliavin derivative:
\begin{align}  
  &g_1(s,u;t)\nonumber\\
  =&D_u\frac{\displaystyle E_Q\left[ X_s \exp\left( \int_0^tc(s)X_sdY_s-\frac{1}{2}\int_0^tc(s)^2X_s^2ds\right)\middle|\mathcal{Y}_t \right]}{\displaystyle E_Q\left[ \exp\left( \int_0^tc(s)X_sdY_s-\frac{1}{2}\int_0^tc(s)^2X_s^2ds\right)\middle|\mathcal{Y}_t\right]}\nonumber\\
  \label{eq3-16}\begin{split}
    =&c(u)\left( \frac{\displaystyle E_Q\left[ X_s X_u\exp\left( \int_0^tc(s)X_sdY_s-\frac{1}{2}\int_0^tc(s)^2X_s^2ds\right)\middle|\mathcal{Y}_t \right]}{\displaystyle E_Q\left[ \exp\left( \int_0^tc(s)X_sdY_s-\frac{1}{2}\int_0^tc(s)^2X_s^2ds\right)\middle|\mathcal{Y}_t\right]}\right.\\
    &-\frac{\displaystyle E_Q\left[ X_s \exp\left( \int_0^tc(s)X_sdY_s-\frac{1}{2}\int_0^tc(s)^2X_s^2ds\right)\middle|\mathcal{Y}_t \right]}{\displaystyle E_Q\left[ \exp\left( \int_0^tc(s)X_sdY_s-\frac{1}{2}\int_0^tc(s)^2X_s^2ds\right)\middle|\mathcal{Y}_t\right]}\\
    &\left.\times \frac{\displaystyle E_Q\left[ X_u \exp\left( \int_0^tc(s)X_sdY_s-\frac{1}{2}\int_0^tc(s)^2X_s^2ds\right)\middle|\mathcal{Y}_t \right]}{\displaystyle E_Q\left[ \exp\left( \int_0^tc(s)X_sdY_s-\frac{1}{2}\int_0^tc(s)^2X_s^2ds\right)\middle|\mathcal{Y}_t\right]} \right)
  \end{split}    
\end{align}
Since $g_1$ does not depend on $Y$, we can substitute $0$ into $Y$, and obtain
\begin{align}
  \label{eq3-21}\begin{split}
    g_1(s,u;t)=c(u)\gamma(s,u;t)
  \end{split}  
\end{align}
where
\begin{align}
  \label{eq-gamma-s-u-t}\begin{split}
    &\gamma(s,u;t)=\frac{\displaystyle E\left[ X_s X_u\exp\left(-\frac{1}{2}\int_0^tc(s)^2X_s^2ds\right)\middle|\mathcal{Y}_t \right]}{\displaystyle E_Q\left[ \exp\left( -\frac{1}{2}\int_0^tc(s)^2X_s^2ds\right)\middle|\mathcal{Y}_t\right]}\\
  &-\frac{\displaystyle E\left[ X_s \exp\left(-\frac{1}{2}\int_0^tc(s)^2X_s^2ds\right)\middle|\mathcal{Y}_t \right]}{\displaystyle E\left[ \exp\left( -\frac{1}{2}\int_0^tc(s)^2X_s^2ds\right)\middle|\mathcal{Y}_t\right]}\times \frac{\displaystyle E\left[ X_u \exp\left( -\frac{1}{2}\int_0^tc(s)^2X_s^2ds\right)\middle|\mathcal{Y}_t \right]}{\displaystyle E_Q\left[ \exp\left( -\frac{1}{2}\int_0^tc(s)^2X_s^2ds\right)\middle|\mathcal{Y}_t\right]}
  \end{split}  
\end{align}
({\bf corresponding to (\ref{eq3-27})}). Here, we again used the fact that $X$ has the same law under $P$ and $Q$.
Substituting this expression into (\ref{eq3-18}), we obtain the conditional expectation formula
\begin{align}
  \label{eq3-30}E[X_s|\mathcal{Y}_t]=&E[X_t]+\int_0^t c(u)\gamma(s,u;t)(dY_u-c(u)E[X_u]du).
\end{align}
({\bf corresponding to (\ref{eq3-28})}). Moreover, recalling the Kallianpur-Striebel formula (\ref{eq3-15}), we can see (\ref{eq3-16}) as
\begin{align*}
  E[X_sX_u|\mathcal{Y}_t]-E[X_s|\mathcal{Y}_t]E[X_u|\mathcal{Y}_t]=\mathrm{Cov}(X_s,X_u|\mathcal{Y}_t).
\end{align*}
Since this is equivalent to (\ref{eq3-21}), we obtain the conditional covariance formula
\begin{align}
  \label{eq3-31}\mathrm{Cov}(X_s,X_u|\mathcal{Y}_t)=\gamma(s,u;t).
\end{align}
({\bf corresponding to (\ref{eq3-29})}). The third and higher derivatives can be obtained as further Malliavin derivatives of (\ref{eq3-16}), but they should be zero since $g_1$ does not depend on $Y$. 

In this way, we have shown that the conditional distribution of $\{X_s\}_{0\leq s\leq t}$ given $\mathcal{Y}_t$ is a Gaussian process where the mean and covariance is given by (\ref{eq3-30}) and (\ref{eq3-31}). Therefore, calculation of the conditional distribution is reduced to evaluating $\gamma(s,u;t)$ given by (\ref{eq-gamma-s-u-t}).

In order to evaluate (\ref{eq-gamma-s-u-t}), let us introduce $\phi(s;t)$ and $\xi_{s;t}~(0\leq s\leq t)$ defined by (\ref{eq-def-phi}) and (\ref{eq-def-xi}) (with $\sigma=1$), and assume $X_0=0$ for simplicity. Then by the density formula in \citet{Liptser2001}, we have
\begin{align}
  \label{eq3-20}\begin{split}
    \frac{d\mu_{\xi.}}{d\mu_{X.}}(X.)=&\exp\left( \int_0^t\phi(s;t)X_sdX_s\right.\\
  &\left.-\frac{1}{2}\int_0^t\left\{ 2a(s)+b(s)^2\phi(s;t) \right\}\phi(s;t)X_s^2ds \right).
  \end{split}
\end{align}
where $X.$ and $\xi.$ are the paths of $\{X_s\}_{0\leq s\leq t}$ and $\{\xi_{s;t}\}_{0\leq s\leq t}$ respectively, and $\mu_X$ and $\mu_\xi$ are their distributions.
By It\^o's formula and (\ref{eq-def-phi}), we have
\begin{align*}
  &\phi(t;t)X_t^2-\phi(0;t)X_0\\
  =&2\int_0^t \phi(s;t)X_s dX_s+\int_0^t X_s^2 \frac{d}{ds}\phi(s;t)ds+\int_0^t \phi(s;t)b(s)^2ds\\
  =&2\int_0^t \phi(s;t)X_s dX_s+\int_0^t X_s^2 \left\{ -\phi(s;t)^2b(s)^2-2a(s)\phi(s;t)+c(s)^2 \right\}ds\\
  &+\int_0^t \phi(s;t)b(s)^2ds.
\end{align*}
Recalling $\phi(t;t)=0$ and $X_0=0$, we obtain 
\begin{align*}
  &\int_0^t \phi(s;t)X_s dX_s\\
  =&\frac{1}{2}\int_0^t\left\{ 2a(s)+b(s)^2\phi(s;t) \right\}\phi(s;t)X_s^2ds\\
  &-\int_0^t c(s)^2X_s^2ds-\int_0^t \phi(s;t)b(s)^2ds
\end{align*}
Therefore, (\ref{eq3-20}) can be rewritten as
\begin{align*}
  \frac{d\mu_{\xi.}}{d\mu_{X.}}(X.)=\exp\left( -\int_0^t c(s)^2X_s^2ds-\int_0^t \phi(s;t)b(s)^2ds \right),
\end{align*}
and (\ref{eq-gamma-s-u-t}) yields
\begin{align}
  \label{eq3-22}\begin{split}
    \gamma(s,u;t)
  =&\frac{\displaystyle E\left[X_u X_s \frac{d\mu_{\xi.}}{d\mu_{X.}}(X.) \right] \exp\left( -\int_0^t \phi(s;t)b(s)^2ds \right)}{\displaystyle \exp\left( -\int_0^t \phi(s;t)b(s)^2ds\right)}\\
  &-\frac{\displaystyle E\left[X_s \frac{d\mu_{\xi.}}{d\mu_{X.}}(X.) \right] \exp\left( -\int_0^t \phi(s;t)b(s)^2ds \right)}{\displaystyle \exp\left( -\int_0^t \phi(s;t)b(s)^2ds\right)}\\
  &\times  \frac{\displaystyle E\left[X_u  \frac{d\mu_{\xi.}}{d\mu_{X.}}(X.) \right] \exp\left( -\int_0^t \phi(s;t)b(s)^2ds \right)}{\displaystyle \exp\left( -\int_0^t \phi(s;t)b(s)^2ds\right)}.\\
  =&E[\xi_{s;t} \xi_{u;t}]-E[\xi_{s;t}]E[\xi_{u;t}]=\mathrm{Cov}(\xi_{s;t},\xi_{u;t}).
  \end{split}  
\end{align}  
Using this result, (\ref{eq3-30}) and (\ref{eq3-31}) can be rewritten as
\begin{align*}
  E[X_s|\mathcal{Y}_t]=&E[X_t]+\int_0^t c(u)\gamma(s,u;t)(dY_u-c(u)E[X_u]du).
\end{align*}
and
\begin{align*}
  \mathrm{Cov}(X_s,X_u|\mathcal{Y}_t)=E[\xi_{s;t}\xi_{u;t}],
\end{align*}
which leads to Theorem \ref{main-theorem-linear}.

\subsubsection{Proof of the Theorem}\label{section-proof-main-theorem-linear}
In this section, we present the proof based on the discussion above. The core of the proof involves justifying the Malliavin derivative (\ref{eq-Malliavin-derivative}), which will be stated in Lemma \ref{lemma3-3}. In this lemma, we first establish (\ref{eq-Malliavin-derivative}) by replacing the stochastic integral with a discrete sum, and then obtain the desired result by taking the limit of this equality. The convergence of the limit is guaranteed by Lemma \ref{lemma-3-2}.

Before proceeding with this argument, we start by verifying the expression (\ref{eq-2nd-Wiener-Chaos}) using the Gaussian property of \(X\). For this purpose, we rewrite $Z_t$ as
\begin{align}
  Z_t=&\exp\left( \int_0^tX_s^\top c(s)^\top(\sigma(s)\sigma(s)^\top)^{-1} dY_s\nonumber\right.\\
  &\left.-\frac{1}{2}\int_0^tX_s^\top c(s)^\top(\sigma(s)\sigma(s)^\top)^{-1} c(s)X_s ds\right)\nonumber\\
  \label{eq-Z-Ito}\begin{split}
    =&\exp\left( X_t^\top \hat{Y}_t- \int_0^t\hat{Y}_s^\top dX_s-\frac{1}{2}\int_0^tX_s^\top c(s)^\top(\sigma(s)\sigma(s)^\top)^{-1} c(s)X_s ds\right),
  \end{split}
\end{align}
where
\begin{align}
  \label{def-hat-Y}\hat{Y}_s=\int_0^s c(u)^\top(\sigma(u)\sigma(u)^\top)^{-1}dY_u.
\end{align}
Furthermore, let us write
\begin{align}
  \label{eq-def-zt-y}\begin{split}
    z_t(y)=&\exp\left( X_t^\top y(t)- \int_0^ty(s)^\top dX_s\right.\\
  &\left.-\frac{1}{2}\int_0^tX_s^\top c(s)^\top(\sigma(s)\sigma(s)^\top)^{-1} c(s)X_s ds\right)
  \end{split}  
\end{align}
for a bounded measurable function $y:[0,t]\to \mathbb{R}^{d_1}$.

Then the following lemma holds.
\begin{lemma}\label{lemma-3-1}
  Let $\{\overline{\xi}_{s}\}_{0\leq s\leq t}$ be the solution of
  \begin{align}
    \label{def-xi-bar}d{\overline{\xi}}_{s;t}=\left\{ a(s)+b(s)b(s)^\top \phi(s;t) \right\}{\overline{\xi}}_{s;t}ds+b(s)d\tilde{V}_s,~~{\overline{\xi}}_{0;t}=0,
  \end{align}
  and define
  \begin{align}
    \label{def-alpha-s-t}\alpha(s;t)=\exp\left( \int_0^s \{a(u)+b(u)b(u)^\top \phi(u;t)\}du \right).
  \end{align}

  Then for any bounded measurable $y:[0,t]\to \mathbb{R}^{d_1}$ and $u_1,\cdots,u_n \in [0,t]$, it holds
  \begin{align}
    \label{eq-E-z-t}\begin{split}
      &E_Q[z_t(y)]\\
    =&\sqrt{\det(I_{d_1}-\Sigma^\frac{1}{2}\phi(0;t)\Sigma^\frac{1}{2})}\exp\left( \frac{1}{2}\int_0^t b(s)^\top \phi(s;t)b(s)ds \right)\\
  &\times \exp\biggl(\frac{1}{2}\beta(t;y)^\top \Sigma^\frac{1}{2}(I_{d_1}-\Sigma^\frac{1}{2}\phi(0;t)\Sigma^\frac{1}{2})^{-1}\Sigma^\frac{1}{2}\beta(t;y) \biggr)\\
  &\times \exp\biggl(\mu^\top \alpha(t;t)^\top y(t)\\
  &- \int_0^ty(s)^\top \left\{ a(s)+b(s)b(s)^\top \phi(s;t) \right\}\alpha(s;t)ds \mu+\frac{1}{2}\mu^\top \phi(0;t) \mu\biggr)\\
  &\times \exp\biggl( \frac{1}{2}y(t)^\top E[\overline{\xi}_{t;t}\overline{\xi}_{t;t}^\top]y(t)
  +\frac{1}{2}\int_0^t y(s)^\top b(s)b(s)^\top y(s)ds\\
  &-\int_0^t y(s) b(s)b(s)^\top y(t) ds  \biggr),
    \end{split}
  \end{align}
  and
  \begin{align}
    \label{eq-E-frac-z-t}\begin{split}
      &\frac{E_Q[ X_{u_1}\otimes \cdots \otimes X_{u_{n}}z_t(y)]}{E_Q[ z_t(y)]}\\   
    =&E\left[\bigotimes_{j=1}^n\biggl\{  \alpha(u_j;t)\left(\mu+ \Sigma^\frac{1}{2}(I_{d_1}-\Sigma^\frac{1}{2}\phi(0;t)\Sigma^\frac{1}{2})^{-\frac{1}{2}} \bigl(U+\beta(t;y)\bigr) \right) \right.\\
    &\qquad \qquad+  \overline{\xi}_{u_j;t}+E[\overline{\xi}_{u_j;t}\eta_t(y)]\biggr\}\Biggr],
    \end{split}    
  \end{align}
  where
  \begin{align*}
    &\mu=E[X_0],~~\Sigma=\mathrm{Cov}(X_0,X_0),\\
    &\beta(t;y)= \alpha(t;t)^\top y(t)- \int_0^t \alpha(s;t)^\top \left\{ a(s)+b(s)b(s)^\top \phi(s;t) \right\}^\top y(s) ds+ \phi(0;t)\mu,\\
    &\eta_t(y)=\overline{\xi}_{t;t}^\top y(t)- \int_0^ty(s)^\top d\overline{\xi}_{s;t},\\
    &E[\overline{\xi}_{s;t}\eta_t(y)]=E[\overline{\xi}_{s;t}\overline{\xi}_{t;t}^\top]y(t)-\int_0^s  b(u)b(u)^\top y(u)du,
  \end{align*}
  and $U$ is a $d_1$-dimensional standard normal random variable independent of $\{\overline{\xi}_{s,t}\}_{0\leq s\leq t}$.
\end{lemma}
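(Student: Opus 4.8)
The plan is to reduce both identities to finite-dimensional Gaussian integrals. Since the integrand $z_t(y)$ in (\ref{eq-def-zt-y}) is a functional of the path of $X$ alone, Proposition \ref{prop-Kallianpur-Striebel-linear} lets me replace every $E_Q$ by $E$ and work under $P$, where $\{X_s\}$ is Gaussian. The first and decisive step is to eliminate the quadratic functional $\int_0^t X_s^\top c(s)^\top(\sigma(s)\sigma(s)^\top)^{-1}c(s)X_s\,ds$ appearing in the exponent of $z_t(y)$. Applying It\^o's formula to $s\mapsto X_s^\top\phi(s;t)X_s$, substituting the Riccati equation (\ref{eq-def-phi}) together with $dX_s=a(s)X_s\,ds+b(s)\,dV_s$, and using $\phi(t;t)=0$, the drift terms involving $a$ cancel and one obtains an identity expressing that quadratic functional through the boundary term $X_0^\top\phi(0;t)X_0$, the martingale $\int_0^t X_s^\top\phi(s;t)b(s)\,dV_s$, the It\^o-correction term $\int_0^t\mathrm{tr}(\phi b b^\top)\,ds$, and $\int_0^t X_s^\top\phi b b^\top\phi X_s\,ds$. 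Inserting this into (\ref{eq-def-zt-y}) rewrites $z_t(y)$ as a deterministic correction factor times the Girsanov exponential that adds the drift $b b^\top\phi$ to $X$, times $\exp(\tfrac12 X_0^\top\phi(0;t)X_0 + \text{(terms linear in the path)})$ — precisely the density (\ref{eq3-20}) underlying the heuristic argument.

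Second, I would perform the change of measure encoded by that Girsanov exponential. Under the tilted law the process $X$ acquires the dynamics (\ref{def-xi-bar}), so it decomposes as $X_s = \alpha(s;t)X_0 + \overline\xi_{s;t}$ with $\alpha$ given by (\ref{def-alpha-s-t}), where $X_0\sim N(\mu,\Sigma)$ is independent of the mean-zero Gaussian process $\overline\xi$. Because $z_t(y)$ is purely a functional of $X$, this can equivalently be carried out as a direct evaluation of the Laplace transform of a quadratic-plus-linear functional of the Gaussian $X$, which sidesteps verifying any martingale property; either way the linear-in-$y$ part $X_t^\top y(t)-\int_0^t y^\top\,dX$ splits into an $X_0$-linear piece with coefficient $\beta(t;y)-\phi(0;t)\mu$ and the $\overline\xi$-functional $\eta_t(y)=\overline\xi_{t;t}^\top y(t)-\int_0^t y^\top\,d\overline\xi$. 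By independence the remaining expectation factors into an integral over $X_0$ and an integral over $\overline\xi$.

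Third I would evaluate the two factors. The $X_0$-integral is $E[\exp(\tfrac12 X_0^\top\phi(0;t)X_0 + X_0^\top(\beta(t;y)-\phi(0;t)\mu))]$; completing the square against the $N(\mu,\Sigma)$ density produces the determinant prefactor, replaces the covariance of $X_0$ by the tilted covariance (\ref{eq-cov-xi0}) — whose square root is $\Sigma^{1/2}(I_{d_1}-\Sigma^{1/2}\phi(0;t)\Sigma^{1/2})^{-1/2}$ — and contributes the $\mu$- and $\beta$-dependent exponentials of (\ref{eq-E-z-t}); convergence is guaranteed by $\phi\le 0$, which makes $I_{d_1}-\Sigma^{1/2}\phi(0;t)\Sigma^{1/2}$ positive definite. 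The $\overline\xi$-integral uses that $\eta_t(y)$ is a centred Gaussian, so $E[\exp(\eta_t(y))]=\exp(\tfrac12 V[\eta_t(y)])$, which yields the final block of (\ref{eq-E-z-t}) once $V[\eta_t(y)]$ is computed from the covariance of $\overline\xi$. This establishes (\ref{eq-E-z-t}).

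For (\ref{eq-E-frac-z-t}) I would insert the tensor factor $\bigotimes_j X_{u_j}=\bigotimes_j(\alpha(u_j;t)X_0+\overline\xi_{u_j;t})$ and run the same two Gaussian integrals. The deterministic prefactor, the determinant, and all the quadratic normalizations are common to numerator and denominator and cancel, so the ratio equals the expectation of $\bigotimes_j X_{u_j}$ under the two tilted Gaussian laws: $X_0$ becomes the shifted, rescaled normal $\mu+\Sigma^{1/2}(I_{d_1}-\Sigma^{1/2}\phi(0;t)\Sigma^{1/2})^{-1/2}(U+\beta(t;y))$, while the Cameron--Martin shift induced by $\exp(\eta_t(y))$ moves the mean of each $\overline\xi_{u_j;t}$ by its covariance with $\eta_t(y)$, namely $E[\overline\xi_{u_j;t}\eta_t(y)]$. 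Assembling these gives (\ref{eq-E-frac-z-t}). The main obstacle is the bookkeeping of this Gaussian computation: correctly completing the square so that the cross-terms between $X_0$ and $\overline\xi$ are routed into the single linear forms $\beta(t;y)$ and $\eta_t(y)$, verifying the covariance identity for $E[\overline\xi_{s;t}\eta_t(y)]$ stated in the lemma, and — if one proceeds via Girsanov rather than a direct Gaussian Laplace transform — checking that the drift change defines a genuine probability measure, which can be done exactly as in Proposition \ref{prop-Z-martingale}.
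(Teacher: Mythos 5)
Your proposal is correct and follows essentially the same route as the paper: It\^o's formula applied to $X_s^\top\phi(s;t)X_s$ together with the Riccati equation (\ref{eq-def-phi}) to absorb the quadratic functional, a change of density identifying the tilted law of $X$ with that of $\breve{\xi}_{s;t}=\alpha(s;t)X_0+\overline{\xi}_{s;t}$, and then two independent Gaussian computations (completing the square for $X_0$, and $E[e^{\eta_t(y)}]=e^{V[\eta_t(y)]/2}$ for $\overline{\xi}$), with the ratio (\ref{eq-E-frac-z-t}) obtained by cancellation and exponential-tilting mean shifts. The only cosmetic difference is that the paper justifies the measure change by citing the density formula (7.138) of \citet{Liptser2001} (verifying its integrability condition) rather than constructing the Girsanov exponential from scratch, which is exactly the alternative you anticipate.
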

\begin{proof}
  Let us define $\{{\breve{\xi}}_{s,t}\}_{0\leq s\leq t}$ by
  \begin{align}
    \label{eq-breve-xi}d{\breve{\xi}}_{s;t}=\left\{ a(s)+b(s)b(s)^\top \phi(s;t) \right\}{\breve{\xi}}_{s;t}ds+b(s)d\tilde{V}_s,~~{\breve{\xi}}_{0;t}=X_0,
  \end{align} 
  and let $X.$ and ${\breve{\xi}}.$ be the paths of $\{X_s\}_{0\leq s\leq t}$ and $\{{\breve{\xi}}_{s,t}\}_{0\leq s\leq t}$, and $\mu_X$ and $\mu_{\breve{\xi}}$ be their distributions on $P$. Then by (7.138) in \citet{Liptser2001}, $\mu_X$ and $\mu_{\breve{\xi}}$ are equivalent, and it holds that 
  \begin{align}
    \label{eq3-3}\begin{split}      
    \frac{d\mu_{\breve{\xi}}}{d\mu_X}(X.)=\exp\biggl(&\int_0^t X_{s}^\top\phi(s;t) dX_s\\
    &-\frac{1}{2}\int_0^tX_s^\top \phi(s;t) (2a(s)+b(s)b(s)^\top \phi(s;t))X_s ds \biggr).
    \end{split}
  \end{align}
  Here, note that the condition (7.137) in the reference can be replaced by 
  \begin{align*}
    \int_0^T |\alpha_t(x)|^2 dt<\infty~~\mu_{{\breve{\xi}}}~\textrm{and}~\mu_\eta\textrm{-a.s.},
  \end{align*}
  and this condition can be written as
  \begin{align*}
    \int_0^t |b(s)b(s)^\top \phi(s;t){\breve{\xi}}_{s,t}|^2ds<\infty,~~\int_0^t |b(s)b(s)^\top \phi(s;t)X_{s}|^2ds<\infty~~\textrm{a.s.}
  \end{align*}
  in our setup. This can be verified by (\ref{eq-assumption-b}) and the continuities of $\phi(s;t)$, $X_s$ and ${\breve{\xi}}_{s;t}$. 

  On the other hand, by It\^o's formula and (\ref{eq-def-phi}), we have
  \begin{align*}
    &X_t^\top \phi(t;t) X_t-X_0^\top \phi(0;t) X_0\\
    =&2\int_0^t X_s^\top\phi(s;t)dX_s+\int_0^t X_s^\top \frac{\partial}{\partial s}\phi(s;t) X_sds + \int_0^t b(s)^\top \phi(s;t)b(s)ds\\
    =&2\int_0^t X_s^\top\phi(s;t)dX_s-\int_0^t X_s^\top \phi(s;t)b(s)b(s)^\top \phi(s;t) X_sds \\
    &-\int_0^t X_s^\top a(s)^\top \phi(s;t) X_sds-\int_0^t X_s^\top  \phi(s;t)a(s) X_sds\\
    &+\int_0^t X_s^\top c(s)^\top (\sigma(s)\sigma(s)^\top)^{-1} c(s) X_sds
    + \int_0^t b(s)^\top \phi(s;t)b(s)ds\\
    =&2\int_0^t X_s^\top\phi(s;t)dX_s-\int_0^t X_s^\top \phi(s;t)b(s)b(s)^\top \phi(s;t) X_sds \\
    &-2\int_0^t X_s^\top  \phi(s;t)a(s) X_sds\\
    &+\int_0^t X_s^\top c(s)^\top (\sigma(s)\sigma(s)^\top)^{-1} c(s) X_sds
    + \int_0^t b(s)^\top \phi(s;t)b(s)ds.
  \end{align*}
  Therefore, noting that $\phi(t;t)=0$, it holds
  \begin{align*}
    \int_0^t X_s^\top\phi(s;t)dX_s=&\frac{1}{2}\int_0^t X_s^\top \phi(s;t)b(s)b(s)^\top \phi(s;t) X_sds\\&+\int_0^t X_s^\top  \phi(s;t)a(s) X_sds\\
    &-\frac{1}{2}\int_0^t X_s^\top c(s)^\top (\sigma(s)\sigma(s)^\top)^{-1} c(s) X_sds\\
    &-\frac{1}{2}\int_0^t b(s)^\top \phi(s;t)b(s)ds-\frac{1}{2}X_0^\top \phi(0;t) X_0.
  \end{align*}  
  Together with (\ref{eq3-3}), we obtain
  \begin{align}
    \label{eq-measure-change}\begin{split}      
    \frac{d\mu_{\breve{\xi}}}{d\mu_X}(X.)=\exp\biggl( &-\frac{1}{2}\int_0^t X_s^\top c(s)^\top (\sigma(s)\sigma(s)^\top)^{-1} c(s) X_sds\\
    &-\frac{1}{2}\int_0^t b(s)^\top \phi(s;t)b(s)ds-\frac{1}{2}X_0^\top \phi(0;t) X_0\biggr).
    \end{split}
  \end{align}
  Hence, it follows that
  \begin{align*}
    z_t(y)=&\exp\left( X_t^\top y(t)-\int_0^t y(s)^\top dX_s\right.\\
    &\left.+\frac{1}{2}\int_0^t b(s)^\top \phi(s;t)b(s)ds+\frac{1}{2}X_0^\top \phi(0;t) X_0 \right)\frac{d\mu_{\breve{\xi}}}{d\mu_X}(X.)
  \end{align*}
  and
  \begin{align}
    \label{eq-z-y-1}\begin{split}
      E_Q[z_t(y)]=E\biggl[ \exp\biggl(&{\breve{\xi}}_{t;t}^\top y(t)- \int_0^ty(s)^\top d{\breve{\xi}}_{s;t}\\
     &+\frac{1}{2}\left. \left.\int_0^t b(s)^\top \phi(s;t)b(s)ds+\frac{1}{2}X_0^\top \phi(0;t) X_0\right) \right].
    \end{split}    
  \end{align}
  Furthermore, note that $\breve{\xi}_{s,t}$ in (\ref{eq-breve-xi}) can be written as
  \begin{align}
    \label{eq-breve-xi-decompisition}\breve{\xi}_{s,t}=\alpha(s;t)X_0+\overline{\xi}_{s,t}
  \end{align}
  (for the definitions of $\overline{\xi}$ and $\alpha$, see (\ref{def-xi-bar}) and (\ref{def-alpha-s-t})). Substituting this expression into (\ref{eq-z-y-1}), we obtain
  \begin{align}
    &E_Q[z_t(y)]\nonumber \\
     \label{eq3-7}\begin{split}
      =&\exp\left( \frac{1}{2}\int_0^t b(s)^\top \phi(s;t)b(s)ds \right)\\
     &\times E\biggl[ \exp\biggl(X_0^\top \alpha(t;t)^\top y(t)\\
     &\left. \left.- \int_0^ty(s)^\top \left\{ a(s)+b(s)b(s)^\top \phi(s;t) \right\}\alpha(s;t)ds X_0+\frac{1}{2}X_0^\top \phi(0;t) X_0\right) \right]\\
     &\times E\left[ \exp\left(\overline{\xi}_{t;t}^\top y(t)- \int_0^ty(s)^\top d\overline{\xi}_{s;t}\right) \right].
     \end{split}     
  \end{align}

  For the first expectation on the right-hand side, let \( X_0 = \mu + \Sigma^{\frac{1}{2}} U \), where \( U \) is a standard normal random variable, \( \mu \in \mathbb{R}^{d_1} \), and \( \Sigma \in M_{d_1}(\mathbb{R}) \). Then, by straightforward calculation, we obtain
  \begin{align}
    &E\biggl[ \exp\biggl(X_0^\top \alpha(t;t)^\top y(t)\nonumber\\
     &\left. \left.- \int_0^ty(s)^\top \left\{ a(s)+b(s)b(s)^\top \phi(s;t) \right\}\alpha(s;t)ds X_0+\frac{1}{2}X_0^\top \phi(0;t) X_0\right) \right]\nonumber\\
     \label{eq3-8}\begin{split}
      =&\sqrt{\det|I_{d_1}-\Sigma^\frac{1}{2}\phi(0;t)\Sigma^\frac{1}{2}|}\\
      &\times \exp\biggl(\frac{1}{2}\beta(t;y)^\top \Sigma^\frac{1}{2}(I_{d_1}-\Sigma^\frac{1}{2}\phi(0;t)\Sigma^\frac{1}{2})^{-1}\Sigma^\frac{1}{2}\beta(t;y) \biggr)\\
     &\times \exp\biggl(\mu^\top \alpha(t;t)^\top y(t)\\
     &- \int_0^ty(s)^\top \left\{ a(s)+b(s)b(s)^\top \phi(s;t) \right\}\alpha(s;t)ds \mu+\frac{1}{2}\mu^\top \phi(0;t) \mu\biggr).
     \end{split}     
  \end{align}  
  Here, we used the negative semi-definiteness of $\phi(0;t)$.

  For the second expectation, since $\displaystyle \overline{\xi}_{t;t}^\top y(t)- \int_0^ty(s)^\top d\overline{\xi}_{s;t}$ is a Gaussian random variable with mean 0, it follows that 
  \begin{align}
    &E\left[ \exp\left(\overline{\xi}_{t;t}^\top y(t)- \int_0^ty(s)^\top d\overline{\xi}_{s;t}\right) \right]\nonumber\\
    =&\exp\left( \frac{1}{2}E\left[ \left( \overline{\xi}_{t;t}^\top y(t)- \int_0^ty(s)^\top d\overline{\xi}_{s;t} \right)^2 \right] \right)\nonumber\\
    \label{eq3-9}\begin{split}      
    =&\exp\biggl( \frac{1}{2}y(t)^\top E[\overline{\xi}_{t;t}\overline{\xi}_{t;t}^\top]y(t)
    +\frac{1}{2}\int_0^t y(s)^\top b(s)b(s)^\top y(s)ds\\
    &-\int_0^t y(s)^\top b(s)b(s)^\top y(t) ds  \biggr).
    \end{split}
  \end{align}
  Putting (\ref{eq3-7}), (\ref{eq3-8}) and (\ref{eq3-9}) together, we obtain (\ref{eq-E-z-t}).

  In the same way as (\ref{eq-z-y-1}) and (\ref{eq3-7}), we can show that
  \begin{align*}
    &E_Q[ X_{u_1}\otimes \cdots \otimes X_{u_{n}}z_t(y)]\nonumber\\
    =&E\biggl[\breve{\xi}_{u_1;t}\otimes\cdots \otimes \breve{\xi}_{u_{n};t} \exp\biggl({\breve{\xi}}_{t;t}^\top y(t)- \int_0^ty(s)^\top d{\breve{\xi}}_{s;t}\\
    &+\frac{1}{2}\left. \left.\int_0^t b(s)^\top \phi(s;t)b(s)ds+\frac{1}{2}X_0^\top \phi(0;t) X_0\right) \right]\\
     \begin{split}
      =&\exp\left( \frac{1}{2}\int_0^t b(s)^\top \phi(s;t)b(s)ds \right)\\
     &\times E\biggl[\{\alpha(u_1;t)X_0+\overline{\xi}_{u_1;t}\}\otimes \cdots \otimes \{\alpha(u_1;t)X_0+\overline{\xi}_{u_1;t}\}\\
     &\times  \exp\biggl(X_0^\top \alpha(t;t)^\top y(t)\\
     &- \int_0^ty(s)^\top \left\{ a(s)+b(s)b(s)^\top \phi(s;t) \right\}\alpha(s;t)ds X_0+\frac{1}{2}X_0^\top \phi(0;t) X_0\biggr)\\
     &\left.\times  \exp\left(\overline{\xi}_{t;t}^\top y(t)- \int_0^ty(s)^\top d\overline{\xi}_{s;t}\right) \right].
     \end{split}     
  \end{align*}  
  From this expression, a straightforward calculation yields (\ref{eq-E-frac-z-t}) by using the independence of \( X_0 \) and \( \{\overline{\xi}_s\}_{0 \leq s \leq t} \).

\end{proof}

Now we introduce discretized versions of $Z_t$ and $\{X_s\}_{0\leq s\leq t}$ by
\begin{align}
  \label{eq-def-hat-Z}\begin{split}    
  Z_t^{(n)}=&\exp\left( X_t^\top \hat{Y}_{t}- \sum_{i=1}^n \hat{Y}_{s_{i-1}}^\top (X_{s_i}-X_{s_{i-1}}) \right.\\
  &\left.-\frac{1}{2}\int_0^tX_s^\top c(s)^\top(\sigma(s)\sigma(s)^\top)^{-1} c(s)X_s ds\right),
  \end{split}
\end{align}
and
\begin{align}
  \label{eq-def-X-n}X_s^{(n)}=\sum_{i=1}^n X_{s_{i-1}}1_{[s_{i-1},s_{i})}(s)+X_{t}1_{\{t\}}(s)=X_{\frac{t}{n}\left[\frac{s}{t}n\right]}
\end{align}
where $\displaystyle s_i=s_i^n=\frac{i}{n}t$. 

Then the previous lemma gives the following results.

\begin{lemma}\label{lemma-3-2}
  For any fixed $u_1,\cdots,u_k \in [0,t]$, it holds that
  \begin{align}
    \label{eq-convergence-frac-Z}\frac{E_Q[X_{u_1}^{(n)}\otimes \cdots \otimes X_{u_{k}}^{(n)}Z_t^{(n)}|\mathcal{Y}_t]}{E_Q[Z_t^{(n)}|\mathcal{Y}_t]}  \xrightarrow{L^2} \frac{E_Q[X_{u_1}\otimes \cdots \otimes X_{u_{k}}Z_t|\mathcal{Y}_t]}{E_Q[Z_t|\mathcal{Y}_t]}
  \end{align}
  as $n\to \infty$.
\end{lemma}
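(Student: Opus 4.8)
The plan is to exploit the independence of $\{X_s\}$ and $\{Y_s\}$ under $Q$ (Proposition \ref{prop-Kallianpur-Striebel-linear}) so as to recognize both ratios in (\ref{eq-convergence-frac-Z}) as the single deterministic functional of Lemma \ref{lemma-3-1} evaluated along the path of $\hat{Y}$ in (\ref{def-hat-Y}). Concretely, introduce the step-function approximation $\hat{Y}^{(n)}_s=\hat{Y}_{\kappa_n(s)}$ for $s<t$ and $\hat{Y}^{(n)}_t=\hat{Y}_t$, where $\kappa_n(u)=\frac{t}{n}\left[\frac{u}{t}n\right]$ is the largest grid point not exceeding $u$. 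Comparing (\ref{eq-def-hat-Z}) with (\ref{eq-def-zt-y}), the stochastic integral of this step function reproduces the Riemann sum, so that $Z_t^{(n)}=z_t(\hat{Y}^{(n)})$ and $Z_t=z_t(\hat{Y})$ a.s., while $X_{u_j}^{(n)}=X_{\kappa_n(u_j)}$. Since $X\perp Y$ under $Q$, the regular conditional expectation $E_Q[\,\cdot\,|\mathcal{Y}_t]$ is computed by freezing the $Y$-path (hence $\hat{Y}$ and $\hat{Y}^{(n)}$, which are then bounded measurable functions on $[0,t]$ by continuity of $\hat{Y}$) and integrating over $X$. Thus the left- and right-hand sides of (\ref{eq-convergence-frac-Z}) are exactly the ratio (\ref{eq-E-frac-z-t}) evaluated at $(y;u_1,\dots,u_k)=(\hat{Y}^{(n)};\kappa_n(u_1),\dots,\kappa_n(u_k))$ and at $(\hat{Y};u_1,\dots,u_k)$, respectively.

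Next I would read off from (\ref{eq-E-frac-z-t}) that, after expanding the tensor product and integrating over the Gaussian variables $U$ and $\{\overline{\xi}_{s;t}\}$, this ratio is a \emph{fixed polynomial} whose arguments are $\beta(t;y)$, the deterministic matrices $\alpha(u_j;t)$, and the vectors $E[\overline{\xi}_{u_j;t}\eta_t(y)]$. The crucial point is that $y$ enters only linearly: both $\beta(t;y)$ and $E[\overline{\xi}_{u_j;t}\eta_t(y)]=E[\overline{\xi}_{u_j;t}\overline{\xi}_{t;t}^\top]y(t)-\int_0^{u_j}b(u)b(u)^\top y(u)\,du$ are affine functionals of $y$. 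Consequently, substituting $y=\hat{Y}$ (resp.\ $\hat{Y}^{(n)}$) turns these into jointly Gaussian random variables, and the whole ratio becomes a polynomial $P$ in finitely many such Gaussian variables, with coefficients depending continuously on the time arguments through the covariances of $\{\overline{\xi}_{s;t}\}$ and the continuous functions $\alpha(\,\cdot\,;t)$.

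It then remains to pass to the limit inside this polynomial. Since $\hat{Y}$ has continuous paths, $\hat{Y}^{(n)}_s\to\hat{Y}_s$ for every $s$, $\kappa_n(u_j)\to u_j$, and the coefficients converge by continuity of $\alpha(\,\cdot\,;t)$ and of the covariances $E[\overline{\xi}_{\kappa_n(u_j);t}\,\cdot\,]$. For each Gaussian argument I would verify $L^2$-convergence directly: a term such as $\int_0^t g(s)^\top\hat{Y}^{(n)}_s\,ds$ converges to $\int_0^t g(s)^\top\hat{Y}_s\,ds$ because $E\bigl|\int_0^t g(s)^\top(\hat{Y}^{(n)}_s-\hat{Y}_s)\,ds\bigr|^2\to0$ by Fubini together with dominated convergence applied to the (locally bounded) covariance kernel of $\hat{Y}$. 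Being linear functionals of a Gaussian process, these arguments then converge in every $L^p$ by the equivalence of $L^p$-norms for Gaussian variables, with $L^p$-norms bounded uniformly in $n$.

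Finally I would upgrade convergence of the arguments to convergence of $P$ in $L^2$: because $P$ is a polynomial whose Gaussian arguments converge in all $L^p$ with uniformly bounded norms, a telescoping estimate combined with H\"older's inequality shows the products converge in $L^2$, yielding (\ref{eq-convergence-frac-Z}). I expect the \textbf{main obstacle} to be precisely this $L^2$-upgrade, i.e.\ securing uniform integrability rather than mere convergence in probability: one must bound the $L^{2+\delta}$-norms of these polynomial-in-Gaussian expressions uniformly in $n$, which rests on the uniform control of the variances of the linear functionals of $\hat{Y}$ and on the non-degeneracy of the denominator $E_Q[Z_t^{(n)}|\mathcal{Y}_t]$, guaranteed by the exponential-of-quadratic form (\ref{eq-E-z-t}) with its positive-definite factor $I_{d_1}-\Sigma^{\frac12}\phi(0;t)\Sigma^{\frac12}$. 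A secondary technical point, handled by the freezing-$Y$ argument of the first step, is the path-wise application of Lemma \ref{lemma-3-1} with the random $\hat{Y}$ in place of a fixed bounded $y$.
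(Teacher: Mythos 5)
Your proposal is correct and takes essentially the same route as the paper's proof: both freeze the $Y$-path via independence under $Q$ (through Lemma \ref{lemma-stochastic-integral-measurable-function}), identify $Z_t^{(n)}=z_t(\hat{Y}^{(n)})$ and $Z_t=z_t(\hat{Y})$, and pass to the limit in the closed-form ratio (\ref{eq-E-frac-z-t}) using $\sup_{0\leq s\leq t}E_Q\bigl[|\hat{Y}_s-\hat{Y}_s^{(n)}|^p\bigr]\to 0$ for $p\geq 1$. Your final step (the polynomial-in-affine-Gaussian-functionals structure, equivalence of $L^p$-norms, and the H\"older/telescoping estimate) is exactly the detail the paper compresses into its concluding sentence.
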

\begin{proof}
  First, recall (\ref{def-hat-Y}), (\ref{eq-Z-Ito}), and (\ref{eq-def-zt-y}), allowing us to write \( Z_t = z_t(\hat{Y}) \). Then, as in the proof of Proposition \ref{prop-Z-martingale}, Lemma \ref{lemma-stochastic-integral-measurable-function} yields
\begin{align*}
  E_Q[Z_t | \mathcal{Y}_t] = E_Q[z_t(y)] \big|_{y = \hat{Y}}
\end{align*}
and
\begin{align*}
  E_Q[X_{u_1} \otimes \cdots \otimes X_{u_k} Z_t | \mathcal{Y}_t] = E_Q[X_{u_1} \otimes \cdots \otimes X_{u_k} z_t(y)] \big|_{y = \hat{Y}}.
\end{align*}
Furthermore, if we define
\begin{align*}
  \hat{Y}^{(n)}(s) = \sum_{i=1}^n \hat{Y}_{s_{i-1}} 1_{[s_{i-1}, s_i)}(s) + \hat{Y}_t 1_{\{t\}}(s),
\end{align*}
then we have \( Z_t^{(n)} = z_t(\hat{Y}^{(n)}) \), and it follows that
\begin{align*}
  E_Q[Z_t^{(n)} | \mathcal{Y}_t] = E_Q[z_t(y)] \big|_{y = \hat{Y}^{(n)}}
\end{align*}
and
\begin{align*}
  E_Q[X_{u_1}^{(n)} \otimes \cdots \otimes X_{u_k}^{(n)} Z_t^{(n)} | \mathcal{Y}_t] = E_Q[X_{u_1}^{(n)} \otimes \cdots \otimes X_{u_k}^{(n)} z_t(y)] \big|_{y = \hat{Y}^{(n)}}.
\end{align*}
Therefore, the desired result follows from (\ref{eq-E-frac-z-t}) and the convergence
\begin{align*}
  \sup_{0 \leq s \leq t} E_Q\left[\left| \hat{Y}_s - \hat{Y}_s^{(n)} \right|^p \right] \to 0 \quad (n \to \infty)
\end{align*}
for \( p \geq 1 \).

\end{proof}

\begin{lemma}\label{lemma3-3}
  For every \( s, u \in [0, t] \), we have
\begin{align}
  \label{eq-mean-X-under-tilde-E}
  \tilde{E}_t[X_s] = E[X_s] + \int_0^t \mathrm{Cov}(\xi_s, \xi_u) c(u)^\top (\sigma(u) \sigma(u)^\top)^{-1} (dY_u - c(u) E[X_u] \, du)
\end{align}
and
\begin{align}
  \label{eq-cov-X-under-tilde-E}
  \tilde{E}_t[X_s \otimes X_u] - \tilde{E}_t[X_s] \otimes \tilde{E}_t[X_u] = \mathrm{Cov}(\xi_s, \xi_u).
\end{align}
Furthermore, for any \( u_1, \cdots, u_k \in [0, t] \), the third and higher cumulants of \( X_{u_1}, \cdots, X_{u_k} \) under \( \tilde{P}_t \) are almost surely zero.
\end{lemma}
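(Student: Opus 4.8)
The plan is to extract every conditional moment of $\{X_s\}_{0\le s\le t}$ under $\tilde{P}_t$ from the closed form (\ref{eq-E-frac-z-t}) and to recognise them as the moments of a single Gaussian family. Combining (\ref{eq-E-frac-z-t}) with the representation $E_Q[\,\cdot\,Z_t\mid\mathcal{Y}_t]=E_Q[\,\cdot\,z_t(y)]\big|_{y=\hat{Y}}$ established in the proof of Lemma \ref{lemma-3-2}, I would first record the master identity
\begin{align*}
  \tilde{E}_t[X_{u_1}\otimes\cdots\otimes X_{u_n}]=E\!\left[\bigotimes_{j=1}^n M_{u_j}\right]\quad\text{a.s. on }\mathcal{Y}_t,
\end{align*}
for every $n$ and every $u_1,\dots,u_n\in[0,t]$, where $M_{u_j}$ denotes the bracketed expression in (\ref{eq-E-frac-z-t}) evaluated at $y=\hat{Y}$. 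The decisive structural point is that, once $\mathcal{Y}_t$ is fixed so that $\hat{Y}$ is frozen, each $M_{u_j}$ is an affine function of the $\mathcal{Y}_t$-independent Gaussian pair $(U,\{\overline{\xi}_{s;t}\}_{0\le s\le t})$; hence $\{M_{u_j}\}_j$ is a jointly Gaussian family whose covariance is deterministic and whose mean is $\mathcal{Y}_t$-measurable.

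Granting this, the covariance (\ref{eq-cov-X-under-tilde-E}) and the vanishing of higher cumulants are almost immediate. Taking $n=2$ (with $u_1=s$, $u_2=u$) against $n=1$ in the master identity gives $\tilde{E}_t[X_s\otimes X_u]-\tilde{E}_t[X_s]\otimes\tilde{E}_t[X_u]=\mathrm{Cov}(M_s,M_u)$. By the same linearity that yields (\ref{eq-breve-xi-decompisition}) one has $\xi_{s;t}=\alpha(s;t)\xi_{0;t}+\overline{\xi}_{s;t}$ with $\xi_{0;t}$ and $\overline{\xi}$ independent and $V[\xi_{0;t}]$ as in (\ref{eq-cov-xi0}); the $U$-part of $\mathrm{Cov}(M_s,M_u)$ is then $\alpha(s;t)\Sigma^{\frac{1}{2}}(I_{d_1}-\Sigma^{\frac{1}{2}}\phi(0;t)\Sigma^{\frac{1}{2}})^{-1}\Sigma^{\frac{1}{2}}\alpha(u;t)^\top=\alpha(s;t)V[\xi_{0;t}]\alpha(u;t)^\top$, while the $\overline{\xi}$-part is $\mathrm{Cov}(\overline{\xi}_{s;t},\overline{\xi}_{u;t})$, and their sum is exactly $\mathrm{Cov}(\xi_{s;t},\xi_{u;t})$. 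Since joint cumulants are universal polynomials in joint moments and the moments of $\{X_{u_j}\}$ under $\tilde{P}_t$ coincide with those of the Gaussian family $\{M_{u_j}\}$, every cumulant of order $\ge 3$ inherits the vanishing property of a Gaussian.

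The mean (\ref{eq-mean-X-under-tilde-E}) is the only genuinely computational part. Because $E[M_s]$ is affine in $\hat{Y}$, hence affine in $Y$ through (\ref{def-hat-Y}), I would write $\tilde{E}_t[X_s]=g_0(s;t)+\int_0^t g_1(s,u;t)\,dY_u$ with a deterministic kernel $g_1$; taking $P$-expectation and using $E[\tilde{E}_t[X_s]]=E[X_s]$ (which follows from (\ref{Kallianpur-Striebel-linear})) fixes $g_0(s;t)=E[X_s]-\int_0^t g_1(s,u;t)c(u)E[X_u]\,du$ and reduces (\ref{eq-mean-X-under-tilde-E}) to the identification $g_1(s,u;t)=\mathrm{Cov}(\xi_{s;t},\xi_{u;t})c(u)^\top(\sigma(u)\sigma(u)^\top)^{-1}$. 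To obtain this I would follow the evaluate-at-zero device of Section \ref{section-heuristic-argument}: the kernel $g_1$ is the Malliavin derivative $D_u\tilde{E}_t[X_s]$, and the identity (\ref{eq-Malliavin-derivative}) expresses $D_u$ of the ratio (\ref{Kallianpur-Striebel-linear}) as $c(u)^\top(\sigma(u)\sigma(u)^\top)^{-1}$ contracted with the conditional covariance $\tilde{E}_t[X_s\otimes X_u]-\tilde{E}_t[X_s]\otimes\tilde{E}_t[X_u]$, which by (\ref{eq-cov-X-under-tilde-E}) equals $\mathrm{Cov}(\xi_{s;t},\xi_{u;t})$ independently of $Y$.

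The main obstacle is the rigorous justification of (\ref{eq-Malliavin-derivative}): the numerator and denominator of (\ref{Kallianpur-Striebel-linear}) are themselves conditional expectations and cannot be differentiated in $Y$ naively. I would resolve this with the discretisation already prepared for Lemma \ref{lemma-3-2}: replacing $Z_t$ and $\{X_s\}$ by $Z_t^{(n)}$ and $X_s^{(n)}$ from (\ref{eq-def-hat-Z})--(\ref{eq-def-X-n}) turns the $Y$-dependence into a smooth function of finitely many increments, for which the discrete analogue of (\ref{eq-Malliavin-derivative}) is elementary, and the limit $n\to\infty$ is controlled in $L^2$ by Lemma \ref{lemma-3-2}. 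As an alternative that avoids Malliavin calculus altogether, one may simplify $E[M_s]$ directly by substituting $\hat{Y}_r=\int_0^r c(v)^\top(\sigma(v)\sigma(v)^\top)^{-1}\,dY_v$ into $\beta(t;\hat{Y})$ and $E[\overline{\xi}_{s;t}\eta_t(\hat{Y})]$ and collapsing the iterated integrals by a stochastic Fubini argument; this route is heavier but entirely routine.
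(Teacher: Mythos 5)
Your proposal is correct, and for half of the lemma it is organized more directly than the paper's own proof. The paper establishes the covariance identity (\ref{eq-cov-X-under-tilde-E}) and the vanishing of higher cumulants through the discretization: it forms $F(y_0,\cdots,y_n)=E_Q[z_t^{(n)}(y_0,\cdots,y_n)]$, invokes Lemma \ref{lemma-3-1} to see that $\log F$ is a quadratic form, differentiates it, exploits that the second (and higher) derivatives do not depend on $(y_0,\cdots,y_n)$ to evaluate at $0$, substitutes $(\hat{Y}_{s_0},\cdots,\hat{Y}_{s_n})$, passes to the limit via Lemma \ref{lemma-3-2}, and finally computes the resulting $Y$-free ratio using the density formula (\ref{eq-measure-change}) and the decomposition $\breve{\xi}_{s;t}=\alpha(s;t)X_0+\overline{\xi}_{s;t}$. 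You instead read both facts off the continuous closed form (\ref{eq-E-frac-z-t}) evaluated at $y=\hat{Y}$: the conditional moments are the moments of a family affine in the Gaussian pair $(U,\overline{\xi})$, whose covariance is $y$-independent and equals $\alpha(s;t)V[\xi_{0;t}]\alpha(u;t)^\top+\mathrm{Cov}(\overline{\xi}_{s;t},\overline{\xi}_{u;t})=\mathrm{Cov}(\xi_{s;t},\xi_{u;t})$ by (\ref{eq-cov-xi0}), and whose cumulants of order $\geq 3$ vanish because cumulants are universal polynomials in moments. This bypasses the discretization, the $L^2$ limit, and the density-formula computation for these two claims; it is the paper's ``kernel does not depend on $y$'' observation exploited without any differentiation, which is a genuine simplification. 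For the mean (\ref{eq-mean-X-under-tilde-E}), by contrast, your route coincides with the paper's: both identify the integrand of the affine representation as a Malliavin derivative (the paper via Clark--Ocone at the discrete level), justify the differentiation through $Z_t^{(n)},X_s^{(n)}$ and the $L^2$ convergence of Lemma \ref{lemma-3-2}, and pin down the deterministic term by taking $P$-expectations through the Kallianpur--Striebel identity. Two caveats, both routine and both acknowledged by you: the ``affine in $Y$'' step requires a stochastic-Fubini rearrangement to convert the $\int_0^t(\cdots)\hat{Y}_u\,du$ terms coming from $\beta(t;\hat{Y})$ and $E[\overline{\xi}_{s;t}\eta_t(\hat{Y})]$ into a single kernel against $dY_u$, and the substitution $y=\hat{Y}$ in the master identity must be quoted from the freezing argument in the proof of Lemma \ref{lemma-3-2}, which rests on Lemma \ref{lemma-stochastic-integral-measurable-function}.
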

\begin{proof}
  First, let us write
  \begin{align*}
    &z_t^{(n)}(y_0,\cdots,y_n)\\
    =&\exp\left( X_t^\top y_{n}- \sum_{i=1}^n y_{i-1}^\top (X_{s_i}-X_{s_{i-1}})\right.\\
    &\left. -\frac{1}{2}\int_0^tX_s^\top c(s)^\top(\sigma(s)\sigma(s)^\top)^{-1} c(s)X_s ds\right),
  \end{align*}
  and consider the function
  \begin{align*}
    F(y_0,\cdots,y_{n})=&E_Q\left[z_t^{(n)}(y_0,\cdots,y_n)\right]
  \end{align*}
  for $y_0,\cdots,y_{n} \in \mathbb{R}^{d_1}$. Then it is easy to verify that
  \begin{align*}
    &\partial_{y_{i}}F(y_0,\cdots,y_{n})
    =-E_Q\left[(X_{s_{i+1}}-X_{s_{i}})z_t^{(n)}(y_0,\cdots,y_n)\right]~~(i=0,1,\cdots,n),
  \end{align*}
  where $X_{s_{n+1}}=0$ for convenience, and higher derivatives can be calculated in the same manner. 
  Therefore, we have 
  \begin{align}
    &\sum_{i,j=0}^{n}\partial_{y_i}\otimes \partial_{y_j}\log F(y_0,\cdots,y_n)1_{[0,s_{i+1})}(s)1_{[0,s_{j+1})}(u)\nonumber\\
    =&\sum_{i,j=0}^{n}\frac{E_Q[(X_{s_{i+1}}-X_{s_i})\otimes (X_{s_{j+1}}-X_{s_j})z_t^{(n)}(y_0,\cdots,y_n)]}{E_Q[z_t^{(n)}(y_0,\cdots,y_n)]}\nonumber\\
    &\times 1_{[0,s_{i+1})}(s)1_{[0,s_{j+1})}(u)\nonumber\\
    &-\sum_{i,j=0}^{n}\frac{E_Q[(X_{s_{i+1}}-X_{s_i})z_t^{(n)}(y_0,\cdots,y_n)]}{E_Q[z_t^{(n)}(y_0,\cdots,y_n)]}1_{[0,s_{i+1})}(s)\nonumber\\
    &\otimes \frac{ E_Q[(X_{s_{j+1}}-X_{s_j})z_t^{(n)}(y_0,\cdots,y_n)]}{E_Q[z_t^{(n)}(y_0,\cdots,y_n)]}1_{[0,s_{j+1})}(u)\nonumber \\
    \label{eq3-11}\begin{split}
      =&\frac{E_Q[X_s^{(n)}\otimes X_u^{(n)}z_t^{(n)}(y_0,\cdots,y_n)]}{E_Q[z_t^{(n)}(y_0,\cdots,y_n)]}\\
    &-\frac{E_Q[X_s^{(n)}z_t^{(n)}(y_0,\cdots,y_n)]\otimes E_Q[X_u^{(n)}z_t^{(n)}(y_0,\cdots,y_n)]}{E_Q[z_t^{(n)}(y_0,\cdots,y_n)]^2},
    \end{split}    
  \end{align}
  where $X^{(n)}$ is defined in (\ref{eq-def-X-n}).

  On the other hand, it follows from (\ref{eq-E-z-t}) that $\log F(y_1,\cdots,y_{n})$ is a quadratic form of $y_0,\cdots,y_{n}$, and thus $\partial_{y_i}\otimes \partial_{y_j}\log F(y_0,\cdots,y_n)$ does not depend on $y_0,\cdots,y_n$. Thus (\ref{eq3-11}) does not depend on $y_0,\cdots,y_n$, and we have
  \begin{align*}
    &\frac{E_Q[X_s^{(n)}\otimes X_u^{(n)}z_t^{(n)}(y_0,\cdots,y_n)]}{E_Q[z_t^{(n)}(y_0,\cdots,y_n)]}\\
    &-\frac{E_Q[X_s^{(n)}z_t^{(n)}(y_0,\cdots,y_n)]\otimes E_Q[X_u^{(n)}z_t^{(n)}(y_0,\cdots,y_n)]}{E_Q[z_t^{(n)}(y_0,\cdots,y_n)]^2}\\
    =&\frac{E_Q[X_s^{(n)}\otimes X_u^{(n)}z_t^{(n)}(0,\cdots,0)]}{E_Q[z_t^{(n)}(0,\cdots,0)]}\\
    &-\frac{E_Q[X_s^{(n)}z_t^{(n)}(0,\cdots,0)]\otimes E_Q[X_u^{(n)}z_t^{(n)}(0,\cdots,0)]}{E_Q[z_t^{(n)}(0,\cdots,0)]^2}.
  \end{align*}
  Substituting $(\hat{Y}_{s_0},\cdots,\hat{Y}_{s_{n}})$ into $(y_0,\cdots,y_n)$, we obtain
  \begin{align}
      &\frac{E_Q[X_s^{(n)}\otimes X_u^{(n)}Z_t^{(n)}|\mathcal{Y}_t]}{E_Q[Z_t^{(n)}|\mathcal{Y}_t]}
      -\frac{E_Q[X_s^{(n)}Z_t^{(n)}|\mathcal{Y}_t]\otimes E_Q[X_u^{(n)}Z_t^{(n)}|\mathcal{Y}_t]}{E_Q[Z_t^{(n)}|\mathcal{Y}_t]^2}\nonumber\\
    =&\frac{\displaystyle E_Q\left[X_s^{(n)}\otimes X_u^{(n)}\exp\left( -\frac{1}{2}\int_0^tX_s^\top c(s)^\top(\sigma(s)\sigma(s)^\top)^{-1} c(s)X_s ds\right)\right]}{\displaystyle E_Q\left[\exp\left( -\frac{1}{2}\int_0^tX_s^\top c(s)^\top(\sigma(s)\sigma(s)^\top)^{-1} c(s)X_s ds\right)\right]}\nonumber\\
    &-\frac{\displaystyle E_Q\left[X_s^{(n)}\exp\left( -\frac{1}{2}\int_0^tX_s^\top c(s)^\top(\sigma(s)\sigma(s)^\top)^{-1} c(s)X_s ds\right)\right]}{\displaystyle E_Q\left[\exp\left( -\frac{1}{2}\int_0^tX_s^\top c(s)^\top(\sigma(s)\sigma(s)^\top)^{-1} c(s)X_s ds\right)\right]}\nonumber\\
    \label{eq3-24}&\otimes \frac{\displaystyle E_Q\left[X_u^{(n)}\exp\left( -\frac{1}{2}\int_0^tX_s^\top c(s)^\top(\sigma(s)\sigma(s)^\top)^{-1} c(s)X_s ds\right)\right]}{\displaystyle E_Q\left[\exp\left( -\frac{1}{2}\int_0^tX_s^\top c(s)^\top(\sigma(s)\sigma(s)^\top)^{-1} c(s)X_s ds\right)\right]}
  \end{align}
  since $z_t^{(n)}(\hat{Y}_{s_0},\cdots,\hat{Y}_{s_{n}})=Z_t^{(n)}$, as introduced in (\ref{eq-def-hat-Z}).

  By Lemma \ref{lemma-3-2}, the left-hand side of this converges in $L^2$ to 
  \begin{align*}
    &\frac{E_Q[X_s\otimes X_uZ_t|\mathcal{Y}_t]}{E_Q[Z_t|\mathcal{Y}_t]}-\frac{E_Q[X_sZ_t|\mathcal{Y}_t]\otimes E_Q[X_uZ_t|\mathcal{Y}_t]}{E_Q[Z_t|\mathcal{Y}_t]^2}\\
    =&\tilde{E}_t[X_s\otimes X_u]-\tilde{E}_t[X_s]\otimes \tilde{E}_t[X_u]\nonumber
  \end{align*}
  as $n \to \infty$. On the other hand, the $L^2$-convergence of the right-hand side directly follows from the $L^2$-convergence of $X_s^{(n)}$. Therefore, we obtain
  \begin{align}
    &\tilde{E}_t[X_s\otimes X_u]-\tilde{E}_t[X_s]\otimes \tilde{E}_t[X_u]\nonumber\\
    =&\frac{\displaystyle E_Q\left[X_s\otimes X_u\exp\left( -\frac{1}{2}\int_0^tX_s^\top c(s)^\top(\sigma(s)\sigma(s)^\top)^{-1} c(s)X_s ds\right)\middle|\mathcal{Y}_t\right]}{\displaystyle E_Q\left[\exp\left( -\frac{1}{2}\int_0^tX_s^\top c(s)^\top(\sigma(s)\sigma(s)^\top)^{-1} c(s)X_s ds\right)\middle|\mathcal{Y}_t\right]}\nonumber\\
    &-\frac{\displaystyle E_Q\left[X_s\exp\left( -\frac{1}{2}\int_0^tX_s^\top c(s)^\top(\sigma(s)\sigma(s)^\top)^{-1} c(s)X_s ds\right)\middle|\mathcal{Y}_t\right]}{\displaystyle E_Q\left[\exp\left( -\frac{1}{2}\int_0^tX_s^\top c(s)^\top(\sigma(s)\sigma(s)^\top)^{-1} c(s)X_s ds\right)\middle|\mathcal{Y}_t\right]}\nonumber\\
    &\otimes \frac{\displaystyle E_Q\left[\exp\left( -\frac{1}{2}\int_0^tX_s^\top c(s)^\top(\sigma(s)\sigma(s)^\top)^{-1} c(s)X_s ds\right)\middle|\mathcal{Y}_t\right]}{\displaystyle E_Q\left[\exp\left( -\frac{1}{2}\int_0^tX_s^\top c(s)^\top(\sigma(s)\sigma(s)^\top)^{-1} c(s)X_s ds\right)\middle|\mathcal{Y}_t\right]}\nonumber\\
    \label{eq-cov-breve-xi}\begin{split}      
    =&\frac{\displaystyle E_Q\left[\breve{\xi}_{s;t}\otimes \breve{\xi}_{u;t}\exp\left( \frac{1}{2}X_0^\top \phi(0;t)X_0 \right)\right]}{\displaystyle E_Q\left[\exp\left( \frac{1}{2}X_0^\top \phi(0;t)X_0 \right)\right]}\\
    &-\frac{\displaystyle E_Q\left[\breve{\xi}_{s;t}\exp\left( \frac{1}{2}X_0^\top \phi(0;t)X_0 \right)\right]}{\displaystyle E_Q\left[\exp\left( \frac{1}{2}X_0^\top \phi(0;t)X_0 \right)\right]}
    \otimes \frac{\displaystyle E_Q\left[\breve{\xi}_{u;t}\exp\left( \frac{1}{2}X_0^\top \phi(0;t)X_0 \right)\right]}{\displaystyle E_Q\left[\exp\left( \frac{1}{2}X_0^\top \phi(0;t)X_0 \right)\right]}.
    \end{split}
  \end{align}
  Here, $\breve{\xi}$ is introduced in (\ref{eq-breve-xi}), and we used the density formula (\ref{eq-measure-change}) in the final equality.

  Now, recall the expression (\ref{eq-breve-xi-decompisition}):
  \begin{align*}
    \breve{\xi}_{s; t} = \alpha(s; t) X_0 + \overline{\xi}_{s; t},
  \end{align*} 
  where \( \overline{\xi}_{s; t} \) is independent of \( X_0 \) and \( E[\overline{\xi}_{s; t}] = 0 \). Using this expression, (\ref{eq-cov-breve-xi}) becomes equivalent to  
  \begin{align*}
    &\mathrm{Cov}(\overline{\xi}_{s;t},\overline{\xi}_{u;t})\\
    &+\frac{\displaystyle E\left[\alpha(s;t)X_0\otimes \alpha(u;t)X_0\exp\left( \frac{1}{2}X_0^\top \phi(0;t)X_0 \right)\right]}{\displaystyle E\left[\exp\left( \frac{1}{2}X_0^\top \phi(0;t)X_0 \right)\right]}\\
    &-\frac{\displaystyle E\left[\alpha(s;t)X_0\exp\left( \frac{1}{2}X_0^\top \phi(0;t)X_0 \right)\right]}{\displaystyle E\left[\exp\left( \frac{1}{2}X_0^\top \phi(0;t)X_0 \right)\right]}\\
    &\otimes \frac{\displaystyle E\left[\alpha(u;t)X_0\exp\left( \frac{1}{2}X_0^\top \phi(0;t)X_0 \right)\right]}{\displaystyle E\left[\exp\left( \frac{1}{2}X_0^\top \phi(0;t)X_0 \right)\right]}\\
    =&\mathrm{Cov}(\overline{\xi}_{s;t},\overline{\xi}_{u;t})+\alpha(s;t)\Sigma^\frac{1}{2}(I-\Sigma^\frac{1}{2}\phi(0;t)\Sigma^\frac{1}{2})^{-1}\Sigma^\frac{1}{2}\alpha(u;t)^\top\\
    =&\mathrm{Cov}(\xi_{s;t},\xi_{u;t}),
  \end{align*}
  Here, we used $V[X_0]=\Sigma$ and (\ref{eq-cov-xi0}). This implies (\ref{eq-cov-X-under-tilde-E}).

  In the same way, since the third and higher derivatives of $\log F(y_0,\cdots,y_n)$ are zero, we have
  \begin{align*}
    0=&\sum_{i_1,\cdots,i_k=0}^{n}\partial_{y_{i_1}}\otimes \cdots\otimes \partial_{y_{i_k}}\log F(\hat{Y}_{s_0},\cdots,\hat{Y}_{s_{n}})1_{[0,s_{i_1}]}(u_1)\cdots 1_{[0,s_{i_k}]}(u_k)
  \end{align*}
  for $k\geq 3$. The right-hand side converges in probability to the $k$-th cumulant of $X_{u_1},\cdots,X_{u_k}$, and thus it must be almost surely zero.

  Finally, to prove (\ref{eq-mean-X-under-tilde-E}), let us consider
  \begin{align}
     &\sum_{i=0}^{n}\partial_{y_i}\log F(\hat{Y}_{s_0},\cdots,\hat{Y}_{s_{n}})1_{[0,s_{i+1})}(s)\nonumber\\
     \label{eq3-10}\begin{split}
    =&-\sum_{i=0}^n \frac{E_Q[(X_{s_{i+1}}-X_{s_i})Z_t^{(n)}|\mathcal{Y}_t]}{E_Q[Z_t^{(n)}|\mathcal{Y}_t]}1_{[0,s_{i+1})}(s)
    =\frac{E_Q[X_s^{(n)}Z_t^{(n)}|\mathcal{Y}_t]}{E_Q[Z_t^{(n)}|\mathcal{Y}_t]},
    \end{split}    
  \end{align}
  and let $D$ be the Malliavin derivative with respect to the Brownian motion 
  \begin{align*}
    \overline{Y}_s=\int_0^s (\sigma(s)\sigma(s)^\top)^{-\frac{1}{2}}dY_s
  \end{align*}
  on $Q$. Since $\log F(y_0,\cdots,y_n)$ is a quadratic form, $\partial_{y_i}\log F(y_0,\cdots,y_n)$ is linear. Hence, by the definition of the Malliavin derivative of smooth random variables \citep{nualart2006malliavin} and $\displaystyle \hat{Y}_t=\int_0^t c(s)^\top (\sigma(s)\sigma(s)^\top)^{-\frac{1}{2}}d\overline{Y}_s,$ the Malliavin derivative of (\ref{eq3-10}) is
  \begin{align*}
    &D_u\frac{E_Q[X_s^{(n)}Z_t^{(n)}|\mathcal{Y}_t]}{E_Q[Z_t^{(n)}|\mathcal{Y}_t]}\\
    =&\sum_{i=0}^{n}(\sigma(u)\sigma(u)^\top)^{-\frac{1}{2}}c(u)\partial_{y_j}\partial_{y_i}\log F(\hat{Y}_{s_0},\cdots,\hat{Y}_{s_{n}})1_{[0,s_i]}(s)1_{[0,s_j]}(u)\\
    =&\frac{(\sigma(u)\sigma(u)^\top)^{-\frac{1}{2}}c(u)E[X_u^{(n)}\otimes X_s^{(n)}Z_t^{(n)}|\mathcal{Y}_t]}{E[Z_t^{(n)}|\mathcal{Y}_t]}\\
    &-\frac{(\sigma(u)\sigma(u)^\top)^{-\frac{1}{2}}c(u)E[X_u^{(n)}Z_t^{(n)}|\mathcal{Y}_t]\otimes E[X_s^{(n)}Z_t^{(n)}|\mathcal{Y}_t]}{E[Z_t^{(n)}|\mathcal{Y}_t]^2}.
  \end{align*}
  According to (\ref{eq3-24}), this is deterministic. Thus, by the Clerk-Ocone formula \citep{nualart2006malliavin}, we can write
  \begin{align*}
    &\frac{E_Q[X_s^{(n)}Z_t^{(n)}|\mathcal{Y}_t]}{E_Q[Z_t^{(n)}|\mathcal{Y}_t]}\\
    =&g^{(n)}(s;t)+\int_0^t\left( \frac{E[X_u^{(n)}\otimes X_s^{(n)}Z_t^{(n)}|\mathcal{Y}_t]}{E[Z_t^{(n)}|\mathcal{Y}_t]}\right.\\
    &\left.\qquad -\frac{E_Q[X_u^{(n)}Z_t^{(n)}|\mathcal{Y}_t]\otimes E_Q[X_s^{(n)}Z_t^{(n)}|\mathcal{Y}_t]}{E_Q[Z_t^{(n)}|\mathcal{Y}_t]^2} \right)^\top c(u)^\top (\sigma(u)\sigma(u)^\top)^{-1}dY_u,
  \end{align*}
  where $g^{(n)}(s;t)$ is a deterministic function. Due to Lemma \ref{lemma-3-2}, for every $s,t$, $g^{(n)}(s;t)$ converges, and
  \begin{align*}
    \left\{\frac{E_Q[X_u^{(n)}\otimes X_s^{(n)}Z_t^{(n)}|\mathcal{Y}_t]}{E_Q[Z_t^{(n)}|\mathcal{Y}_t]}
    -\frac{E_Q[X_u^{(n)}Z_t^{(n)}|\mathcal{Y}_t]\otimes E_Q[X_s^{(n)}Z_t^{(n)}|\mathcal{Y}_t]}{E_Q[Z_t^{(n)}|\mathcal{Y}_t]^2}\right\}_{0\leq u \leq t}
  \end{align*}
  converges in $L^2(\Omega;H)$ where $H=L^2([0,t])$. On the other hand, we have shown above that for each $s,u \in [0,t]$, this converges to $\mathrm{Cov}(\xi_u,\xi_s)$ in $L^2$. 

  Therefore, by letting $n\to \infty$, we obtain
  \begin{align*}
    \tilde{E}_t[X_s]&\left(=\frac{E_Q[X_sZ_t|\mathcal{Y}_t]}{E_Q[Z_t|\mathcal{Y}_t]} \right)\\
    &=g(s;t)+\int_0^t \mathrm{Cov}(\xi_s,\xi_u)c(u)^\top(\sigma(u)\sigma(u)^\top)^{-1}dY_u,
  \end{align*}
  where $g(s;t)$ is a deterministic function. By taking expectations of both sides under $P$, it follows that 
  \begin{align*}
    E[X_s]&(=E[E[X_s|\mathcal{Y}_t]]=E[\tilde{E}_t[X_s]])\\
    &=g(s;t)+\int_0^t \mathrm{Cov}(\xi_s,\xi_u)c(u)^\top(\sigma(u)\sigma(u)^\top)^{-1}c(u)E[X_u]du.
  \end{align*}
  This gives the expression of $g(s;t)$, which leads to the desired result.

\end{proof}

\begin{proof}[Proof of Theorem \ref{main-theorem-linear}]
  Let $\varpi_t:C_t\to \mathbb{R}^{d_1+d_2}$ be the projection where $\varpi_t(w)=w(t)$, and $\mathcal{P}$ be the class of sets of the form 
  \begin{align*}
    \varpi_{u_1}^{-1}(A_1\times B_1)\cap \cdots \cap \varpi_{u_k}^{-1}(A_k\times B_k),
  \end{align*}
  where $A_1,\cdots,A_k \in \mathcal{B}(\mathbb{R}^{d_1}),B_1,\cdots,B_k \in \mathcal{B}(\mathbb{R}^{d_2})$ and $u_1,\cdots,u_k \in [0,t]$. 
  
  Furthermore, let us define $\mathcal{D}$ by
  \begin{align*}
    \mathcal{D}=\{A \in \mathcal{C}_t;\tilde{P}_t((X.,Y.)\in A)=P((\xi.,Y.) \in A|\mathcal{Y}_t)~~\mathrm{a.s.}\}.    
  \end{align*}
  Due to (\ref{eq-mean-zeta}), (\ref{eq-cov-zeta}) and Lemma \ref{lemma3-3}, it follows that for fixed $u_1,\cdots,u_k \in [0,t]$, $(X_{u_1},\cdots,X_{u_k})$ under $\tilde{P}_t$ and $(\zeta_{u_1;t},\cdots,\zeta_{u_k;t})$ under $P(\,\cdot\,|\mathcal{Y}_t)$ have the almost surely same mean and covariance, and their third and higher cumulants are almost surely zero.
  
  Therefore, for fixed $u_1,\cdots,u_k \in [0,t]$, there exists a null set $N$ such that the distributions of $(X_{u_1},\cdots,X_{u_k})$ under $\tilde{P}_t(\,\cdot\,)(\omega)$ and $(\zeta_{u_1;t},\cdots,\zeta_{u_k;t})$ under $P(\,\cdot\,|\mathcal{Y}_t)(\omega)$ are the same for $\omega \in \Omega \backslash N$. Thus we have for $A_1,\cdots,A_k \in \mathcal{B}(\mathbb{R}^{d_1})$ and $B_1,\cdots,B_k \in \mathcal{B}(\mathbb{R}^{d_2})$
  \begin{align*}
    &\tilde{P}_t(X_{u_1}\in A_1,\cdots X_{u_k}\in A_k,Y_{u_1}\in B_1,\cdots, Y_{u_k}\in B_k)\\
    =&\tilde{P}_t(X_{u_1}\in A_1,\cdots X_{u_k}\in A_k)1_{\{Y_{u_1}\in B_1,\cdots, Y_{u_k}\in B_k\}}\\
    =&P(\zeta_{u_1,t}\in A_1,\cdots \zeta_{u_k,t}\in A_k|\mathcal{Y}_t)1_{\{Y_{u_1}\in B_1,\cdots, Y_{u_k}\in B_k\}}\\
    =&P(\zeta_{u_1,t}\in A_1,\cdots \zeta_{u_k,t}\in A_k,Y_{u_1}\in B_1,\cdots, Y_{u_k}\in B_k|\mathcal{Y}_t).
  \end{align*}
  This means $\mathcal{P}\subset \mathcal{D}$, and we obtain the the desired result by the $\pi$-$\lambda$ theorem.

\end{proof}
\subsection{Filtering and smoothing equations}\label{section-linear-equations}
In the previous section, we showed that \( \{X_s\}_{0 \leq s \leq t} \) is conditionally Gaussian given \( \mathcal{Y}_t \), and we obtained explicit formulas for its mean and covariance functions. In this section, we derive differential equations for these functions.

First, we consider the differential equation for \( \mathrm{Cov}(\xi_{t; t}, \xi_{t; t}) \), which corresponds to the covariance matrix of \( X_t \) given \( \{\mathcal{Y}_t\} \). To this end, we define \( \gamma(t) \) as the positive-semidefinite solution of the Riccati equation
\begin{align}
  \label{eq-gamma}
  \begin{split}
    \frac{d}{dt} \gamma(t) &= -\gamma(t) c(t)^\top (\sigma(t) \sigma(t)^\top)^{-1} c(t) \gamma(t) + a(t) \gamma(t) + \gamma(t) a(t)^\top + b(t) b(t)^\top
  \end{split}
\end{align}
with \( \gamma(0) = V[X_0] \). The existence of the solution is guaranteed by Theorems 2.1 and 2.2 in \citet{potter1965matrix}.

To avoid potential singularity issues with \( \gamma(t) \), we introduce \( \gamma^\epsilon(t) \) and \( \xi_{s; t}^\epsilon \) as solutions to the same equations for \( \gamma(t) \) and \( \xi_{s; t} \), where \( V[X_0] \) is replaced by \( V[X_0] + \epsilon I_{d_1} \) for \( \epsilon > 0 \). Specifically, \( \gamma^\epsilon(t) \) and \( \xi_{s; t}^\epsilon \) are given as solutions to the equations
\begin{align}
  \label{eq-def-gamma-epsilon-1}
  \begin{split}    
    \frac{d}{dt} \gamma^\epsilon(t) &= -\gamma^\epsilon(t) c(t)^\top (\sigma(t) \sigma(t)^\top)^{-1} c(t) \gamma^\epsilon(t) + a(t) \gamma^\epsilon(t) + \gamma^\epsilon(t) a(t)^\top \\
    &\quad + b(t) b(t)^\top
  \end{split}
\end{align}
and
\begin{align}
  \label{eq-def-xi-epsilon-1}
  d \xi_{s; t}^\epsilon = \left\{ a(s) + b(s) b(s)^\top \phi(s; t) \right\} \xi_{s; t}^\epsilon \, ds + b(s) \, d\tilde{V}_s,
\end{align}
where
\begin{align}
  \label{eq-def-gamma-epsilon-2}
  \gamma^\epsilon(0) = V[X_0] + \epsilon I_{d_1},
\end{align}
and \( \xi_{0; t}^\epsilon \) is a Gaussian random variable with mean zero and covariance
\begin{align}
  \label{eq-def-xi-epsilon-2}
  V[\xi_{0; t}^\epsilon] = \left\{(V[X_0] + \epsilon I_{d_1})^{-1} - \phi(0; t) \right\}^{-1}.
\end{align}
By Corollary 1 to Theorem 2.1 in \citet{potter1965matrix}, \( \gamma^\epsilon(t) \) is strictly positive.

We can also establish the following lemma.

\begin{lemma}\label{lemma-gamma-epsilon}
  (1) For every $t\geq0$, $\gamma^\epsilon(s)$ is bounded for $\epsilon \in [0,1]$ and $s \in [0,t]$.\\
  (2) The matrix $\gamma^\epsilon(t)$ converges uniformly to $\gamma(t)$ on any compact set as $\epsilon \to +0$.
\end{lemma}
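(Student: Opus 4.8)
The plan is to work directly with the matrix Riccati equation (\ref{eq-def-gamma-epsilon-1}), using a comparison with a Lyapunov equation for the boundedness in (1) and Gronwall's inequality for the convergence in (2). Throughout, abbreviate $M(t) = c(t)^\top(\sigma(t)\sigma(t)^\top)^{-1}c(t)$, which is positive semidefinite and satisfies $|M(t)| \le |c(t)|^2/C$ by (\ref{eq-assumption-c}) and (\ref{eq-assumption-sigma-2}), hence is integrable on $[0,t]$; and let $\Phi(s,r)$ denote the state-transition matrix of $\dot{x}=a(s)x$, which is bounded on $[0,t]^2$ since $|\Phi(s,r)| \le \exp(\int_0^t|a(\tau)|\,d\tau) < \infty$ by (\ref{eq-assumption-a}).

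For part (1), I would first record that $\gamma^\epsilon \ge 0$ (it is positive semidefinite by construction, indeed strictly positive as already noted). For the upper bound, introduce $P^\epsilon$ as the solution of the Lyapunov equation $\dot{P}^\epsilon = aP^\epsilon + P^\epsilon a^\top + bb^\top$ with $P^\epsilon(0)=V[X_0]+\epsilon I_{d_1}$, obtained by discarding the negative semidefinite term $-\gamma^\epsilon M\gamma^\epsilon$. The difference $\Delta^\epsilon = P^\epsilon - \gamma^\epsilon$ then solves the linear matrix ODE $\dot{\Delta}^\epsilon = a\Delta^\epsilon + \Delta^\epsilon a^\top + \gamma^\epsilon M\gamma^\epsilon$ with $\Delta^\epsilon(0)=0$, whose variation-of-constants representation $\Delta^\epsilon(s)=\int_0^s \Phi(s,r)\gamma^\epsilon(r)M(r)\gamma^\epsilon(r)\Phi(s,r)^\top\,dr$ is positive semidefinite because the integrand is. Hence $0 \le \gamma^\epsilon \le P^\epsilon$ in the positive-semidefinite order. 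Since $P^\epsilon(s)=\Phi(s,0)(V[X_0]+\epsilon I_{d_1})\Phi(s,0)^\top + \int_0^s \Phi(s,r)b(r)b(r)^\top\Phi(s,r)^\top\,dr$, and $\Phi$ is bounded on $[0,t]^2$ while $\int_0^t|b|^2<\infty$ by (\ref{eq-assumption-b}), the family $\{P^\epsilon\}$ is uniformly bounded for $\epsilon\in[0,1]$. Using that $|A|\le \mathrm{tr}(A)\le \mathrm{tr}(B)\le \sqrt{d_1}\,|B|$ whenever $0\le A\le B$ are positive semidefinite, this transfers to the desired uniform bound on $\gamma^\epsilon$.

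For part (2), set $\delta^\epsilon = \gamma^\epsilon - \gamma$, so that $\delta^\epsilon(0)=\epsilon I_{d_1}$, and, writing $\gamma^\epsilon M\gamma^\epsilon - \gamma M\gamma = \gamma^\epsilon M\delta^\epsilon + \delta^\epsilon M\gamma$, it satisfies the linear ODE $\dot{\delta}^\epsilon = -\gamma^\epsilon M\delta^\epsilon - \delta^\epsilon M\gamma + a\delta^\epsilon + \delta^\epsilon a^\top$. Taking Frobenius norms and integrating gives $|\delta^\epsilon(s)| \le \epsilon\sqrt{d_1} + \int_0^s k(r)\,|\delta^\epsilon(r)|\,dr$, where $k(r)=(|\gamma^\epsilon(r)|+|\gamma(r)|)|M(r)| + 2|a(r)|$ has an integral over $[0,t]$ that is finite and bounded uniformly in $\epsilon$, thanks to part (1) together with the integrability of $M$ and $a$. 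Gronwall's inequality then yields $\sup_{0\le s\le t}|\delta^\epsilon(s)| \le \epsilon\sqrt{d_1}\exp(\int_0^t k(r)\,dr) \to 0$ as $\epsilon\to+0$, which is exactly uniform convergence on $[0,t]$; since every compact set of times is contained in some $[0,t]$, the claim follows.

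I expect the main obstacle to be part (1): producing an a priori bound for the Riccati solution that is uniform in $\epsilon$. The comparison route is the cleanest, but it requires care on two points, namely that the discarded quadratic term is genuinely negative semidefinite so the comparison inequality points in the right direction, and that the positive-semidefinite ordering actually delivers a norm bound (handled by the trace estimate above). Once boundedness is secured, part (2) reduces to a standard continuous-dependence argument, the only delicate point being the integrability of the coefficient $k$, which is inherited from the standing assumptions (\ref{eq-assumption-a})–(\ref{eq-assumption-sigma-2}).
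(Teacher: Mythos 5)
Your proof is correct, and for part (1) it takes a genuinely different route from the paper. The paper bounds $\|\gamma^\epsilon(t)\|_2$ directly: it fixes $x\in\mathbb{R}^{d_1}$, differentiates the quadratic form $x^\top\gamma^\epsilon(t)x$, discards the negative semidefinite term $-x^\top\gamma^\epsilon M\gamma^\epsilon x$, estimates the cross term via Cauchy--Schwarz as $2\|a(t)\|_2\|\gamma^\epsilon(t)\|_2|x|^2$, and then applies Gronwall to the resulting scalar integral inequality for $\|\gamma^\epsilon(t)\|_2=\max_{|x|=1}x^\top\gamma^\epsilon(t)x$. You instead discard the same quadratic term at the matrix level, dominate $\gamma^\epsilon$ in the semidefinite order by the explicit Lyapunov solution $P^\epsilon$ via a variation-of-constants representation of the difference, and convert the ordering into a norm bound with the trace estimate. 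Both arguments rest on the same structural fact (the quadratic term in the Riccati equation only pushes the solution down), but yours buys an explicit dominating solution and avoids Gronwall in part (1), at the cost of needing the comparison machinery (psd-ness of the integrand, the trace transfer); the paper's is more elementary and self-contained but gives only an implicit exponential bound. For part (2) the paper offers no details, stating it follows ``routinely'' from (1) and Gronwall; your continuous-dependence argument (splitting $\gamma^\epsilon M\gamma^\epsilon-\gamma M\gamma=\gamma^\epsilon M\delta^\epsilon+\delta^\epsilon M\gamma$, then Gronwall with an integrable coefficient uniform in $\epsilon$) is precisely the argument being alluded to, so there you have filled in, rather than departed from, the paper's proof.
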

\begin{proof}
  (1) Take an arbitrary $x \in \mathbb{R}^{d_1}$. Then (\ref{eq-gamma}) yields
  \begin{align*}
    \frac{d}{dt}x^\top \gamma^\epsilon(t) x=&-x^\top\gamma^\epsilon(t)c(t)^\top(\sigma(t)\sigma(t)^\top)^{-1}c(t)\gamma^\epsilon(t)x\\
    &+x^\top a(t) \gamma^\epsilon(t)x+x^\top \gamma^\epsilon(t)a(t)^\top x+x^\top b(t)b(t)^\top x\\
    \leq & 2|x^\top a(t) \gamma^\epsilon(t)^\frac{1}{2} \gamma^\epsilon(t)^\frac{1}{2}x|+x^\top b(t)b(t)^\top x\\
    \leq &2\sqrt{x^\top a(t) \gamma^\epsilon(t)a(t)^\top  x}\sqrt{x^\top  \gamma^\epsilon(t) x}+x^\top b(t)b(t)^\top x\\
    \leq &2\|a(t)\|_2 \|\gamma^\epsilon(t)\|_2|x|^2+\|b(t)\|_2^2 |x|^2,
  \end{align*}
  where $\|\cdot\|_2$ is the matrix 2-norm, and we used
  \begin{align*}
    x^\top a(t) \gamma^\epsilon(t)a(t)^\top x&=|\gamma^\epsilon(t)^\frac{1}{2}a(t)^\top x|^2\leq \|\gamma^\epsilon(t)\|_2\|a(t)\|_2^2|x|^2.
  \end{align*}
  From this, it follows that
  \begin{align*}
    \|\gamma^\epsilon(t)\|_2=\max_{|x|=1}x^\top \gamma^\epsilon(t) x\leq \|\gamma^\epsilon(0)\|_2+\int_0^t \{2\|a(s)\|_2\|\gamma^\epsilon(s)\|_2+\|b(s)\|_2^2\}ds.
  \end{align*}
  Therefore, by the Gronwall's lemma, we obtain
  \begin{align*}
    \|\gamma^\epsilon(t)\|_2\leq & \|\gamma^\epsilon(0)\|_2\exp\left( 2\int_0^t\|a(r)\|_2dr \right)\\
    &+\int_0^t \exp\left( 2\int_s^t\|a(r)\|_2dr \right)\|b(s)\|_2^2ds,
  \end{align*}
  which leads to the desired result.\\
  Statement (2) can be proven routinely using the result of (1) along with Gronwall's lemma.
\end{proof}

\begin{proposition}\label{prop-gamma}
  It holds that
  \begin{align*}
    \gamma(t)=V[\xi_{t,t}].
  \end{align*}
\end{proposition}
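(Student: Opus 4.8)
The plan is to prove the stronger pointwise identity
\begin{align*}
  V[\xi_{s;t}] = \bigl( \gamma(s)^{-1} - \phi(s;t) \bigr)^{-1}, \qquad 0 \le s \le t,
\end{align*}
and then read off the proposition by setting $s=t$, since $\phi(t;t)=0$ forces the right-hand side to collapse to $\gamma(t)$. Because $\gamma(s)$ may be singular, I would carry out the argument with the regularised objects $\gamma^\epsilon$ and $\xi^\epsilon_{s;t}$ from (\ref{eq-def-gamma-epsilon-1})--(\ref{eq-def-xi-epsilon-2}), for which $\gamma^\epsilon(s)$ is strictly positive (hence invertible) by Corollary 1 to Theorem 2.1 in \citet{potter1965matrix}, and only pass to the limit $\epsilon \to +0$ at the very end using Lemma \ref{lemma-gamma-epsilon}.

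First I would record that, since $\xi^\epsilon_{s;t}$ solves the linear equation (\ref{eq-def-xi-epsilon-1}) with mean-zero Gaussian initial data, its covariance $R^\epsilon(s;t) := V[\xi^\epsilon_{s;t}]$ obeys the Lyapunov equation
\begin{align*}
  \frac{d}{ds} R^\epsilon = \bigl( a + b b^\top \phi \bigr) R^\epsilon + R^\epsilon \bigl( a + b b^\top \phi \bigr)^\top + b b^\top
\end{align*}
(arguments suppressed) with $R^\epsilon(0;t) = V[\xi^\epsilon_{0;t}]$. I would then introduce the candidate $\Pi(s;t) = N(s;t)^{-1}$ with $N = (\gamma^\epsilon)^{-1} - \phi$; here $N \succeq (\gamma^\epsilon)^{-1} \succ 0$ because $\phi$ is negative semidefinite, so $N$ is invertible throughout $[0,t]$. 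Differentiating $(\gamma^\epsilon)^{-1}$ via the Riccati equation (\ref{eq-def-gamma-epsilon-1}) and subtracting (\ref{eq-def-phi}), the terms $c^\top(\sigma\sigma^\top)^{-1}c$ cancel and one obtains a Riccati equation for $N$; applying $\frac{d}{ds}N^{-1} = -N^{-1}\bigl(\frac{d}{ds}N\bigr)N^{-1}$ then yields an evolution equation for $\Pi$.

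The crux, and the step I expect to be the main obstacle, is to show that this evolution for $\Pi$ coincides with the Lyapunov equation satisfied by $R^\epsilon$. A direct computation produces for $\Pi$ a Lyapunov-type equation whose drift is built from $(\gamma^\epsilon)^{-1}$ rather than $\phi$ and which carries $-bb^\top$ in place of $+bb^\top$; the discrepancy between the two right-hand sides is exactly $bb^\top N\Pi + \Pi N bb^\top - 2bb^\top$, which vanishes identically because $N\Pi = \Pi N = I_{d_1}$ (using that $N$, $\gamma^\epsilon$ and $\phi$ are symmetric). Hence $\Pi$ and $R^\epsilon$ solve the same linear ODE. Since $N(0;t) = (V[X_0]+\epsilon I_{d_1})^{-1} - \phi(0;t)$, the initial value $\Pi(0;t) = \{(V[X_0]+\epsilon I_{d_1})^{-1} - \phi(0;t)\}^{-1}$ matches $V[\xi^\epsilon_{0;t}]$ from (\ref{eq-def-xi-epsilon-2}), so by uniqueness of solutions to linear ODEs, $\Pi \equiv R^\epsilon$ on $[0,t]$. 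Evaluating at $s=t$ and using $\phi(t;t)=0$ gives $V[\xi^\epsilon_{t;t}] = \Pi(t;t) = \gamma^\epsilon(t)$.

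Finally I would let $\epsilon \to +0$. By Lemma \ref{lemma-gamma-epsilon}(2), $\gamma^\epsilon(t) \to \gamma(t)$, while $V[\xi^\epsilon_{0;t}] \to V[\xi_{0;t}]$ (directly from the convergence of (\ref{eq-def-xi-epsilon-2}) to (\ref{eq-cov-xi0}) as the regularisation is removed, using continuity of the matrix square root). By continuous dependence of the Lyapunov flow on its initial datum, this gives $V[\xi^\epsilon_{t;t}] \to V[\xi_{t;t}]$. Passing to the limit in $V[\xi^\epsilon_{t;t}] = \gamma^\epsilon(t)$ yields $\gamma(t) = V[\xi_{t;t}]$, as claimed.
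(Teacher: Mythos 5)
Your proposal is correct and takes essentially the same route as the paper's own proof: both hinge on showing \( V[\xi_{s;t}] = \{\gamma(s)^{-1} - \phi(s;t)\}^{-1} \) by observing that \(\gamma^{-1}\) satisfies the same Riccati equation (\ref{eq-def-phi}) as \(\phi\), passing to the inverse of their difference, matching it against the Lyapunov equation for \(V[\xi_{s;t}]\), and then setting \(s=t\) and removing the \(\epsilon\)-regularisation. The only differences are executional rather than conceptual: the paper treats the nonsingular case first and identifies the two sides through explicit variation-of-constants formulas (\ref{eq-expression-w}) and (\ref{eq-xi-formula}), whereas you regularise from the start and conclude by ODE uniqueness, after a harmless detour through the \((\gamma^\epsilon)^{-1}\)-form of the evolution equation whose discrepancy with the Lyapunov right-hand side you correctly show vanishes along \(\Pi\).
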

\begin{proof}
  First, assume that $\gamma(t)$ is strictly positive for every $t\geq 0$. Then we have 
  \begin{align*}
    \frac{d}{ds}\left( \gamma(s)^{-1} \right)
    =&-\gamma(s)^{-1} \frac{d}{ds}\gamma(s) \gamma(s)^{-1}\\
    =&-\gamma(s)^{-1}b(s)b(s)^\top \gamma(s)^{-1}-a(s)^\top \gamma(s)^{-1}-\gamma(s)^{-1}a(s)\\
    &+c(s)^\top (\sigma(s)\sigma(s)^\top)^{-1}c(s).
  \end{align*}
  This means that $\gamma(s)^{-1}$ satisfies the same equation (\ref{eq-def-phi}) as $\phi(s;t)$. Hence, if we set $z(s;t)=\gamma(s)^{-1}-\phi(s;t)$, then it follows that
  \begin{align*}
    &\frac{d}{ds}z(s;t)\\
    =&-\left\{ \gamma(s)^{-1}b(s)b(s)^\top \gamma(s)^{-1}
    -\phi(s;t)^{-1}b(s)b(s)^\top \phi(s;t)^{-1} \right\}\\
    &-a(s)^\top \left\{ \gamma(s)^{-1}-\phi(s;t) \right\} -\left\{ \gamma(s)^{-1}-\phi(s;t) \right\}a(s)\\
    =&-z(s;t)b(s)b(s)^\top z(s;t)\\
    &-\left\{ a(s)^\top+\phi(s;t)b(s)b(s)^\top \right\}z(s;t)-z(s;t)\left\{ a(s)+b(s)b(s)^\top \phi(s;t) \right\}.
  \end{align*}
  Furthermore, if we set $w(s;t)=z(s;t)^{-1}$ (this is well-defined since $\psi(s;t)$ is negative-semidefinite), we have
  \begin{align}
    \label{eq-derivative-w}\begin{split}
      \frac{d}{ds}w(s;t)=&-z(s;t)^{-1} \frac{d}{ds}z(s;t) z(s;t)^{-1}\\
    =&b(s)b(s)^\top+w(s;t)\left\{ a(s)^\top+\phi(s;t)b(s)b(s)^\top \right\}\\
    &+\left\{ a(s)+b(s)b(s)^\top\phi(s;t) \right\}w(s;t)
    \end{split}
  \end{align}
  Therefore, $w(s;t)$ can be written as
  \begin{align}
    \label{eq-expression-w}\begin{split}
      w(s;t)=&\exp\left( \int_0^s\left\{ a(r)+b(r)b(r)^\top\phi(r;t) \right\}dr \right)V[\xi_{0;t}]\\
      &\times \exp\left( \int_0^s\left\{ a(r)^\top+\phi(r;t)b(r)b(r)^\top \right\}dr \right)\\
      &+\int_0^s \exp\left( \int_u^s\left\{ a(r)+b(r)b(r)^\top \phi(r;t) \right\}dr \right)\\
      &\qquad \times b(u) b(u)^\top\exp\left( \int_u^s\left\{ a(r)^\top+\phi(r;t)b(r)b(r)^\top \right\}dr \right)du.
    \end{split}   
  \end{align}
  Here, we used $\gamma(0)=V[X_0]$ and (\ref{eq-cov-xi0}) to establish 
  \begin{align*}
    w(0;s)=(V[X_0]^{-1}-\phi(0;t))^{-1}=V[\xi_{0;t}]. 
  \end{align*}

  On the other hand, since \( \{\xi_{s; t}\}_{0 \leq s \leq t} \) is the solution of (\ref{eq-def-xi}), it can be expressed as
  \begin{align}
    \label{eq-xi-formula}
    \begin{split}
      \xi_{s; t} &= \exp\left( \int_0^s \left\{ a(r) + b(r) b(r)^\top \phi(r; t) \right\} \, dr \right) \xi_{0; t} \\
      &\quad + \int_0^s \exp\left( \int_u^s \left\{ a(r) + b(r) b(r)^\top \phi(r; t) \right\} \, dr \right) b(u) \, d\tilde{V}_u.
    \end{split}
  \end{align}
  Hence, it follows immediately from (\ref{eq-expression-w}) that \( V[\xi_{s; t}] = w(s; t) \). In particular, since
  \begin{align}
    \label{eq-w-inverse-formula}
    w(s; t) = \{\gamma(s)^{-1} - \phi(s; t)\}^{-1}
  \end{align}
  and \( \phi(t; t) = 0 \), we have \( w(t; t) = \gamma(t) \), which gives the desired result.
  
  In the case where \( \gamma(t) \) becomes singular for some \( t \), we replace \( V[X_0] \) with \( V[X_0] + \epsilon I_{d_1} \) for \( \epsilon > 0 \), yielding
  \begin{align*}
    \gamma^\epsilon(t) = V[\xi_{t; t}^\epsilon],
  \end{align*}
  since \( \gamma^\epsilon(t) \) is strictly positive. We obtain the conclusion by letting \( \epsilon \to 0 \).  
\end{proof}

\begin{proposition}\label{prop-expression-cov-xi}
  Let $0\leq u \leq s\leq t$. If $\gamma(r)$ is non-singular for every $r\in [0,t]$, it holds that 
  \begin{align*}
    \mathrm{Cov}(\xi_{s;t},\xi_{u;t})=V[\xi_{s;t}]\exp\left( -\int_u^s \left\{ a(r)^\top+\gamma(r)^{-1}b(r)b(r)^\top \right\}dr \right).
  \end{align*}
\end{proposition}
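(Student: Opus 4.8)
The plan is to derive a closed form for $\mathrm{Cov}(\xi_{s;t},\xi_{u;t})$ from the semigroup structure of the linear SDE (\ref{eq-def-xi}), and then to identify it with the claimed expression by showing that both sides solve the same linear matrix ODE in $s$ (with $u$ fixed), concluding by uniqueness. Write $A(r)=a(r)+b(r)b(r)^\top\phi(r;t)$ and let $\Phi(s,u)$ denote the associated transition matrix, i.e.\ the solution of $\partial_s\Phi(s,u)=A(s)\Phi(s,u)$ with $\Phi(u,u)=I_{d_1}$; this is the object denoted $\exp(\int_u^s A(r)\,dr)$ in (\ref{eq-xi-formula}). Propagating (\ref{eq-def-xi}) from $u$ to $s$ yields
\[
  \xi_{s;t}=\Phi(s,u)\,\xi_{u;t}+\int_u^s\Phi(s,v)b(v)\,d\tilde V_v .
\]
Since $\xi_{u;t}$ is $\mathcal{F}_u$-measurable while the stochastic integral has mean zero and depends only on the increments of $\tilde V$ after $u$, the two are independent; multiplying on the right by $\xi_{u;t}^\top$, taking expectations, and using $E[\xi_{u;t}]=0$ annihilates the cross term, leaving $C(s,u):=\mathrm{Cov}(\xi_{s;t},\xi_{u;t})=\Phi(s,u)V[\xi_{u;t}]$. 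Differentiating in $s$ then shows $\partial_s C=A(s)C$ with $C(u,u)=V[\xi_{u;t}]$.

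It remains to verify that $R(s,u):=V[\xi_{s;t}]\,\Psi(s,u)$, where $\Psi(s,u)$ is the transition matrix of $-B$ with $B(r)=a(r)^\top+\gamma(r)^{-1}b(r)b(r)^\top$ (the object $\exp(-\int_u^s B(r)\,dr)$ appearing in the statement), solves the same initial value problem. Writing $w(s):=V[\xi_{s;t}]$ and recalling from the proof of Proposition~\ref{prop-gamma} that $w$ satisfies (\ref{eq-derivative-w}), namely $w'=bb^\top+wA^\top+Aw$ (here $A^\top=a^\top+\phi bb^\top$ by symmetry of $\phi$ and $bb^\top$), and that $w(s)^{-1}=\gamma(s)^{-1}-\phi(s;t)$ by (\ref{eq-w-inverse-formula}), one computes $\partial_s R=(w'-wB)\Psi$. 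The crucial identity is $w\gamma^{-1}=I_{d_1}+w\phi$, immediate from $\gamma^{-1}=w^{-1}+\phi$, which gives $wB=wA^\top+bb^\top$; substituting this together with the Riccati relation for $w'$ collapses $w'-wB$ to exactly $A(s)w$, so $\partial_s R=A(s)R$. Since $\Psi(u,u)=I_{d_1}$ gives $R(u,u)=w(u)=V[\xi_{u;t}]$, uniqueness for linear matrix ODEs forces $C(s,u)=R(s,u)$, which is the assertion.

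The only genuinely delicate point is the algebra in the second step: one must transpose $A$ correctly and combine the Riccati equation (\ref{eq-derivative-w}) with $\gamma^{-1}=w^{-1}+\phi$ so that every term except $A(s)w$ cancels. The hypothesis that $\gamma(r)$ is non-singular on $[0,t]$ is used precisely here, both to make sense of $\gamma(r)^{-1}$ and to ensure $w$ is invertible, legitimising the manipulation $w\gamma^{-1}=w(w^{-1}+\phi)=I_{d_1}+w\phi$.
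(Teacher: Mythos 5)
Your proof is correct and takes essentially the same route as the paper: both rest on the decomposition $\mathrm{Cov}(\xi_{s;t},\xi_{u;t})=\Phi(s,u)V[\xi_{u;t}]$ (the paper's (\ref{eq-cov-xi-2})) and both hinge on combining the Riccati-type equation (\ref{eq-derivative-w}) for $w(s;t)=V[\xi_{s;t}]$ with the relation (\ref{eq-w-inverse-formula}), $w^{-1}=\gamma^{-1}-\phi$, via exactly the cancellation $wB=wA^\top+b b^\top$. The only difference is bookkeeping: the paper differentiates the covariance in $u$ and integrates the resulting equation $d_u C=C\,\{a(u)^\top+\gamma(u)^{-1}b(u)b(u)^\top\}\,du$ directly, whereas you differentiate in $s$ and identify both sides through uniqueness of the linear matrix ODE.
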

\begin{proof}
  Since \( \xi_{s; t} \) is given by the explicit formula (\ref{eq-xi-formula}), we can express
  \begin{align}
    &\mathrm{Cov}(\xi_{s;t},\xi_{u;t})\nonumber\\
    =&\exp\left( \int_0^s\left\{ a(r)+b(r)b(r)^\top\phi(r;t) \right\}dr \right)V[\xi_{0;t}]\nonumber\\
    &\exp\left( \int_0^u\left\{ a(r)^\top+\phi(r;t)b(r)b(r)^\top \right\}dr \right) \nonumber\\
    &+\int_0^u \exp\left( \int_v^s\left\{ a(r)+b(r)b(r)^\top \phi(r;t) \right\}dr \right)\nonumber\\
    &\qquad \times b(v) b(v)^\top \exp\left( \int_v^u\left\{ a(r)^\top+\phi(r;t)b(r)b(r)^\top \right\}dr \right)dv\nonumber\\
    \label{eq-cov-xi-2}=&\exp\left( \int_u^s\left\{ a(r)+b(r)b(r)^\top \phi(r;t)\right\}dr \right)w(u;t),
  \end{align} 
  where $w(u;t)=V[\xi_{u;t}]$ is given by (\ref{eq-expression-w}). Thus, it follows from (\ref{eq-derivative-w}) and (\ref{eq-w-inverse-formula}) that
  \begin{align*}
    &d_u\mathrm{Cov}(\xi_{s;t},\xi_{u;t})\\
    =&\exp\left( \int_u^s\left\{ a(r)+b(r)b(r)^\top \phi(r;t)\right\}dr \right)\\
    &\times \left[ -\left\{ a(u)+b(u)b(u)^\top \phi(u;t)\right\}w(u;t)+\frac{\partial}{\partial u}w(u;t) \right]du\\
    =&\exp\left( \int_u^s\left\{ a(r)+b(r)b(r)^\top \phi(r;t)\right\}dr \right)\\
    &\times \left[ -\left\{ a(u)+b(u)b(u)^\top \phi(u;t)\right\}w(u;t)+b(u)b(u)^\top\right.\\
    &+w(u;t)\left\{ a(u)^\top+\phi(u;t)b(u)b(u)^\top \right\}\\
    &+\left.\left\{ a(u)+b(u)b(u)^\top\phi(u;t) \right\}w(u;t)\right]du\\
    =&\exp\left( \int_u^s\left\{ a(r)+b(r)b(r)^\top \phi(r;t)\right\}dr \right)w(u,t)\\
    &\times \left\{ w(u;t)^{-1}b(u)b(u)^\top+a(u)^\top+\phi(u;t)b(u)b(u)^\top \right\}du\\
    =&\mathrm{Cov}(\xi_{s;t},\xi_{u;t}) \left[ \left\{ \gamma(u)^{-1}-\phi(u;t) \right\}b(u)b(u)^\top+a(u)^\top\right.\\
    &\left.\qquad\qquad\qquad\qquad\qquad\qquad\qquad+\phi(u;t)b(u)b(u)^\top \right]du\\
    =&\mathrm{Cov}(\xi_{s;t},\xi_{u;t})\left\{ a(u)^\top+\gamma(u)^{-1}b(u)b(u)^\top \right\}du,
  \end{align*}
  which implies the conclusion. 
\end{proof}

\begin{proposition}\label{prop-equation-xi-t-s}
  For every $s\geq0$, it holds that 
  \begin{align*}
    d_t\mathrm{Cov}(\xi_{t;t},\xi_{s;t})=\left\{ a(t)-\gamma(t)c(t)^\top(\sigma(t)\sigma(t)^\top)^{-1}c(t)\right\}\mathrm{Cov}(\xi_{t;t},\xi_{s;t})dt.
  \end{align*}
\end{proposition}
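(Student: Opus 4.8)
The plan is to regard $\mathrm{Cov}(\xi_{s';t},\xi_{s;t})$, with $s$ fixed, as a function $G(s',t)$ of two arguments: the running time $s'$ of the first factor and the terminal-time parameter $t$, the latter entering the whole family $\{\xi_{\cdot;t}\}$ through the Riccati solution $\phi(\cdot;t)$ in (\ref{eq-def-xi}) and through the initial covariance (\ref{eq-cov-xi0}). The quantity sought is the derivative of $G$ along the diagonal $s'=t$, so by the chain rule
\begin{align*}
  d_t\,\mathrm{Cov}(\xi_{t;t},\xi_{s;t}) = \Bigl\{\left.\partial_{s'}G\right|_{s'=t} + \left.\partial_{t}G\right|_{s'=t}\Bigr\}\,dt,
\end{align*}
and I would compute the two partials separately. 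Throughout I assume $\gamma(r)$ is non-singular on $[0,t]$, as in Proposition \ref{prop-expression-cov-xi}; the general case then follows by the $\epsilon$-regularisation and the limit $\epsilon\to0$ used in Proposition \ref{prop-gamma}.

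For the first partial I would use the forward representation (\ref{eq-cov-xi-2}), namely $G(s',t)=\exp\bigl(\int_s^{s'}\{a(r)+b(r)b(r)^\top\phi(r;t)\}\,dr\bigr)\,V[\xi_{s;t}]$, in which the exponential is the state-transition matrix of (\ref{eq-def-xi}) and propagates the covariance in its first index. Differentiating in $s'$ gives $\partial_{s'}G=\{a(s')+b(s')b(s')^\top\phi(s';t)\}\,G$, and evaluating at $s'=t$, where $\phi(t;t)=0$, yields the clean term $\left.\partial_{s'}G\right|_{s'=t}=a(t)\,\mathrm{Cov}(\xi_{t;t},\xi_{s;t})$.

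The second partial is the substantive step, and for it I would instead use the representation of Proposition \ref{prop-expression-cov-xi}, $G(s',t)=V[\xi_{s';t}]\exp\bigl(-\int_s^{s'}\{a(r)^\top+\gamma(r)^{-1}b(r)b(r)^\top\}\,dr\bigr)$, whose exponential factor is \emph{independent} of $t$; hence the whole $t$-dependence sits in $V[\xi_{s';t}]=w(s';t)=\{\gamma(s')^{-1}-\phi(s';t)\}^{-1}$. Differentiating the inverse gives $\partial_t w=w\,(\partial_t\phi(s';t))\,w$, so at $s'=t$, using $w(t;t)=\gamma(t)$ from Proposition \ref{prop-gamma}, we get $\left.\partial_t w\right|_{s'=t}=\gamma(t)\,(\partial_t\phi(s';t)|_{s'=t})\,\gamma(t)$. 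The crux is then the single identity $\left.\partial_t\phi(s';t)\right|_{s'=t}=-c(t)^\top(\sigma(t)\sigma(t)^\top)^{-1}c(t)$, which I would obtain by differentiating the boundary identity $\phi(t;t)=0$ along the diagonal,
\begin{align*}
  0=\frac{d}{dt}\phi(t;t)=\left.\partial_{s'}\phi(s';t)\right|_{s'=t}+\left.\partial_t\phi(s';t)\right|_{s'=t},
\end{align*}
and noting that (\ref{eq-def-phi}) at $s'=t$ (again with $\phi(t;t)=0$) gives $\left.\partial_{s'}\phi(s';t)\right|_{s'=t}=c(t)^\top(\sigma(t)\sigma(t)^\top)^{-1}c(t)$. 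Substituting back, $\left.\partial_t G\right|_{s'=t}=-\gamma(t)c(t)^\top(\sigma\sigma^\top)^{-1}c(t)\,\gamma(t)\,\exp\bigl(-\int_s^t\{a^\top+\gamma^{-1}bb^\top\}\,dr\bigr)$, and since the product of the rightmost $\gamma(t)$ with the exponential equals $G(t,t)=\mathrm{Cov}(\xi_{t;t},\xi_{s;t})$, this collapses to $-\gamma(t)c(t)^\top(\sigma(t)\sigma(t)^\top)^{-1}c(t)\,\mathrm{Cov}(\xi_{t;t},\xi_{s;t})$.

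Adding the two partials reproduces $\{a(t)-\gamma(t)c(t)^\top(\sigma(t)\sigma(t)^\top)^{-1}c(t)\}\mathrm{Cov}(\xi_{t;t},\xi_{s;t})$, which is the claim. I expect the main obstacle to be conceptual rather than computational: one must recognise that $t$ plays two distinct roles — the diagonal time of $\xi_{t;t}$ and the terminal parameter steering the entire family $\{\xi_{\cdot;t}\}$ — and that the endpoint-derivative $\partial_t\phi$ is only ever needed \emph{on the diagonal}, where it is pinned down for free by differentiating $\phi(t;t)=0$. Once this is seen, the argument reduces to the two established representations of the covariance together with Proposition \ref{prop-gamma}, with the singular-$\gamma$ case dispatched by the limiting argument of Proposition \ref{prop-gamma}.
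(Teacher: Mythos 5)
Your proposal is correct, but it takes a genuinely different route from the paper. The paper's proof is a single one-variable differentiation: combining Proposition \ref{prop-gamma} with Proposition \ref{prop-expression-cov-xi}, it writes $\mathrm{Cov}(\xi_{t;t},\xi_{s;t})=\gamma(t)\exp\bigl(-\int_s^t\{a(r)^\top+\gamma(r)^{-1}b(r)b(r)^\top\}\,dr\bigr)$, differentiates this product in $t$, and substitutes the Riccati equation (\ref{eq-gamma}) for $\tfrac{d}{dt}\gamma(t)$; the factor $-\gamma c^\top(\sigma\sigma^\top)^{-1}c\gamma+a\gamma$ appears exactly as the leftover of that substitution. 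You instead split $d/dt$ along the diagonal into $\partial_{s'}+\partial_t$, evaluate $\partial_{s'}$ from the forward representation (\ref{eq-cov-xi-2}) (giving $a(t)$ because $\phi(t;t)=0$), and evaluate $\partial_t$ by localising all $t$-dependence in $w(s';t)=\{\gamma(s')^{-1}-\phi(s';t)\}^{-1}$ from (\ref{eq-w-inverse-formula}), so that the gain term comes from the single identity $\partial_t\phi(s';t)|_{s'=t}=-c(t)^\top(\sigma(t)\sigma(t)^\top)^{-1}c(t)$, itself obtained by differentiating the boundary identity $\phi(t;t)=0$. Thus your key input is the Riccati equation (\ref{eq-def-phi}) for $\phi$ on the diagonal, where the paper's is the Riccati equation (\ref{eq-gamma}) for $\gamma$; the two computations agree, as they must, precisely because $\gamma^{-1}$ and $\phi$ solve the same equation. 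What your version buys is conceptual transparency: it exhibits the two roles of $t$ and shows exactly where $-\gamma c^\top(\sigma\sigma^\top)^{-1}c$ originates. What the paper's version buys is economy of regularity: it only differentiates absolutely continuous functions of one variable, whereas your diagonal chain rule needs joint differentiability of $\mathrm{Cov}(\xi_{s';t},\xi_{s;t})$ and of $\phi$ in $(s',t)$ near the diagonal, which with the merely measurable, locally (square-)integrable coefficients of Section \ref{section-linear-filtering} holds only a.e.\ in $t$ (at Lebesgue points of $c^\top(\sigma\sigma^\top)^{-1}c$) --- the same a.e.\ sense in which the proposition itself must be read, so this is bookkeeping rather than a gap. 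Finally, your deferral of the singular-$\gamma$ case is the right mechanism, but note the paper carries it out inside this very proof: it solves the $\epsilon$-regularised equation explicitly as (\ref{eq-cov-xi-epsilon}) and passes to the limit via the uniform convergence of Lemma \ref{lemma-gamma-epsilon}, rather than citing Proposition \ref{prop-gamma}.
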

\begin{proof}
  First, assume that $\gamma(t)$ is non-singular for every $t\geq 0$. In this case, by Propositions \ref{prop-gamma} and \ref{prop-expression-cov-xi}, we can write
  \begin{align*}
    \mathrm{Cov}(\xi_{t;t},\xi_{s;t})=\gamma(t)\exp\left( -\int_s^t \left\{ a(r)^\top+\gamma(r)^{-1}b(r)b(r)^\top \right\}dr \right).
  \end{align*}
  Hence, it holds
  \begin{align*}
    &d_t\mathrm{Cov}(\xi_{t;t},\xi_{s;t})\\
    =&\left[ \frac{d}{dt}\gamma (t)-\gamma(t)\{a(t)^\top+\gamma(t)^{-1}b(t)b(t)^\top\}\right]\\
    &\times \exp\left( -\int_s^t \left\{ a(r)^\top+\gamma(r)^{-1}b(r)b(r)^\top \right\}dr \right)dt\\
    =&\left\{ -\gamma(t)c(t)^\top(\sigma(t)\sigma(t)^\top)^{-1}c(t)\gamma(t)+a(t)\gamma(t)\right\}\\
    &\times \exp\left( -\int_s^t \left\{ a(r)^\top+\gamma(r)^{-1}b(r)b(r)^\top \right\}dr \right)dt\\
    =&\left\{ -\gamma(t)c(t)^\top(\sigma(t)\sigma(t)^\top)^{-1}c(t)+a(t)\right\}\mathrm{Cov}(\xi_{t;t},\xi_{s;t})dt,
  \end{align*}
  which implies the conclusion.

  For the case where \( \gamma(t) \) is singular, we have
  \begin{align*}
    d_t\mathrm{Cov}(\xi_{t;t}^\epsilon,\xi_{s;t}^\epsilon)
    =\left\{ -\gamma^\epsilon(t)c(t)^\top(\sigma(t)\sigma(t)^\top)^{-1}c(t)+a(t)\right\}\mathrm{Cov}(\xi_{t;t}^\epsilon,\xi_{s;t}^\epsilon)dt
  \end{align*}
  and thus 
  \begin{align}
    \label{eq-cov-xi-epsilon}\mathrm{Cov}(\xi_{t;t}^\epsilon,\xi_{s;t}^\epsilon)=\exp\left( \int_s^t \left\{ a(r)-\gamma^\epsilon(r)c(r)^\top(\sigma(r)\sigma(r)^\top)^{-1}c(r)\right\}dr \right)\gamma^\epsilon(s).
  \end{align}
  By the uniform convergence of \( \gamma^\epsilon \) on any compact set, we can take the limit \( \epsilon \to 0 \) to obtain
\begin{align*}
  \mathrm{Cov}(\xi_{t; t}, \xi_{s; t}) 
  = \exp\left( \int_s^t \left\{ a(r) - \gamma(r) c(r)^\top (\sigma(r) \sigma(r)^\top)^{-1} c(r) \right\} \, dr \right) \gamma(s),
\end{align*}
which leads to the desired result.
\end{proof}

Now, let us define
\begin{align}
  \label{eq-def-mu}
  \mu_{s; t} = E[X_s] + \int_0^t \mathrm{Cov}(\xi_{s; t}, \xi_{u; t}) c(u)^\top (\sigma(u) \sigma(u)^\top)^{-1} (dY_u - c(u) E[X_u] \, du).
\end{align}
Recalling Theorem \ref{main-theorem-linear}, we find that for every \( 0 \leq s \leq t \),
\begin{align*}
  \mu_{s; t} = E[\zeta_{s; t} | \mathcal{Y}_t] = E[X_s | \mathcal{Y}_t] \quad \mathrm{a.s.}
\end{align*}
The previous proposition provides the stochastic differential equation for \( \mu_{t, t} \), which is known as the Kalman-Bucy filter.

\begin{proposition}{\bf (Kalman-Bucy filter)}\\
  The process $\{\mu_{t,t}\}_{t\geq 0}$ satisfies the stochastic differential equation
  \begin{align*}
    d\mu_{t,t}=a(t)\mu_{t,t}dt+\gamma(t) c(t)^\top(\sigma(t)\sigma(t)^\top)^{-1}\{dY_t-c(t)\mu_{t,t}dt\}.
  \end{align*} 
\end{proposition}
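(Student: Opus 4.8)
The plan is to differentiate the conditional mean $\mu_{t,t}=E[X_t\mid\mathcal{Y}_t]$ along the diagonal $s=t$ of the representation (\ref{eq-def-mu}), writing $\mu_{t,t}=E[X_t]+I_t$ where
\[
  I_t=\int_0^t \mathrm{Cov}(\xi_{t;t},\xi_{u;t})\,c(u)^\top(\sigma(u)\sigma(u)^\top)^{-1}(dY_u-c(u)E[X_u]\,du).
\]
For the deterministic part, taking expectations in (\ref{eq-linear-system-X}) gives $E[X_t]=E[X_0]+\int_0^t a(s)E[X_s]\,ds$, hence $dE[X_t]=a(t)E[X_t]\,dt$. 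The whole difficulty is therefore concentrated in differentiating the stochastic integral $I_t$, in which $t$ appears simultaneously as the upper limit of integration and as a parameter inside the kernel $\mathrm{Cov}(\xi_{t;t},\xi_{u;t})$.

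The key simplification is to eliminate the parametric $t$-dependence of the kernel using Proposition \ref{prop-equation-xi-t-s}, which gives, for $0\le u\le t$, the closed form
\[
  \mathrm{Cov}(\xi_{t;t},\xi_{u;t})=\Psi(t,u)\,\gamma(u),\quad \Psi(t,u)=\exp\Bigl(\int_u^t F(r)\,dr\Bigr),\quad F(r)=a(r)-\gamma(r)c(r)^\top(\sigma(r)\sigma(r)^\top)^{-1}c(r),
\]
valid for singular and non-singular $\gamma$ alike (the singular case having already been reduced there to the $\gamma^\epsilon$-approximation of Lemma \ref{lemma-gamma-epsilon}). Substituting, $I_t$ becomes $\int_0^t \Psi(t,u)\,dM_u$ with $dM_u=\gamma(u)c(u)^\top(\sigma(u)\sigma(u)^\top)^{-1}(dY_u-c(u)E[X_u]\,du)$, and $\Psi(t,u)$ is precisely the state-transition matrix of the linear ODE $\dot\Phi=F(t)\Phi$, so that $\partial_t\Psi(t,u)=F(t)\Psi(t,u)$ and $\Psi(t,t)=I_{d_1}$.

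I would then recognise $I_t$ as the variation-of-constants solution of the linear SDE driven by the continuous semimartingale $M$: the process $\int_0^t \Psi(t,u)\,dM_u$ is the unique solution of $dZ_t=F(t)Z_t\,dt+dM_t$ with $Z_0=0$. Hence
\[
  dI_t=F(t)I_t\,dt+\gamma(t)c(t)^\top(\sigma(t)\sigma(t)^\top)^{-1}(dY_t-c(t)E[X_t]\,dt).
\]
Adding $dE[X_t]=a(t)E[X_t]\,dt$ and collecting terms via $\mu_{t,t}=E[X_t]+I_t$ and $F(t)=a(t)-\gamma(t)c(t)^\top(\sigma(t)\sigma(t)^\top)^{-1}c(t)$, the two drift contributions proportional to $a(t)$ combine into $a(t)\mu_{t,t}\,dt$, while the $-\gamma(t)c(t)^\top(\sigma(t)\sigma(t)^\top)^{-1}c(t)$ term of $F$ and the $-c(t)E[X_t]$ correction assemble into $\gamma(t)c(t)^\top(\sigma(t)\sigma(t)^\top)^{-1}\{dY_t-c(t)\mu_{t,t}\,dt\}$, giving the stated equation.

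The main obstacle is the rigorous justification of the differentiation of $I_t$, namely the variation-of-constants identity for a stochastic integral whose kernel depends on the terminal time. Rather than differentiating under the integral sign, I would make this precise by applying It\^o's formula to $\Phi(t)^{-1}Z_t$, where $\Phi$ is the fundamental solution of $\dot\Phi=F(t)\Phi$, and verifying directly that the closed-form integral solves the linear SDE. A secondary, routine point is confirming integrability of the kernel $\gamma(u)c(u)^\top(\sigma(u)\sigma(u)^\top)^{-1}$ against $dY_u$, which follows from (\ref{eq-assumption-c}), (\ref{eq-assumption-sigma-2}) and the boundedness of $\gamma$ established in Lemma \ref{lemma-gamma-epsilon}.
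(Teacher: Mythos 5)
Your proof is correct, and at the structural level it mirrors the paper's: both decompose $\mu_{t,t}=E[X_t]+\tilde\mu_{t,t}$ with $\tilde\mu_{t,t}$ the stochastic integral in (\ref{eq-def-mu}), both obtain $dE[X_t]=a(t)E[X_t]\,dt$ from (\ref{eq-linear-system-X}), and both rely on Proposition \ref{prop-equation-xi-t-s} (together with Proposition \ref{prop-gamma}, which supplies the initial value $\mathrm{Cov}(\xi_{u;u},\xi_{u;u})=\gamma(u)$) to control the $t$-dependence of the kernel. Where you genuinely differ is in the device used to differentiate the integral. The paper writes the ODE of Proposition \ref{prop-equation-xi-t-s} in integral form, $\mathrm{Cov}(\xi_{t;t},\xi_{u;t})=\gamma(u)+\int_u^t F(s)\,\mathrm{Cov}(\xi_{s;s},\xi_{u;s})\,ds$ with $F(s)=a(s)-\gamma(s)c(s)^\top(\sigma(s)\sigma(s)^\top)^{-1}c(s)$, substitutes this into $\tilde\mu_{t,t}$, and interchanges the $ds$- and $d\tilde Y_u$-integrations (a stochastic Fubini step), after which the inner integral is recognised as $\tilde\mu_{s,s}$ and the integral equation for $\tilde\mu_{t,t}$ can be read off. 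You instead solve the kernel ODE, factor $\mathrm{Cov}(\xi_{t;t},\xi_{u;t})=\Psi(t,u)\gamma(u)=\Phi(t)\Phi(u)^{-1}\gamma(u)$ through the fundamental matrix, and identify $\tilde\mu_{t,t}=\Phi(t)\int_0^t\Phi(u)^{-1}dM_u$ as the variation-of-constants solution of $dZ_t=F(t)Z_t\,dt+dM_t$; the justification is then the semimartingale product rule with the deterministic, finite-variation factor $\Phi(t)$, which produces no quadratic-covariation correction. Both devices are sound: yours replaces the (tacit) stochastic Fubini interchange by an elementary It\^o product-rule verification, at the cost of introducing $\Phi$ and its inverse (always invertible for a linear ODE, so no real obstruction), while the paper's route never needs the closed-form solution of the kernel equation. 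One caution on notation: in the multidimensional case $\exp\bigl(\int_u^t F(r)\,dr\bigr)$ is not literally the state-transition matrix unless the matrices $F(r)$ commute; what your argument actually uses is only $\partial_t\Psi(t,u)=F(t)\Psi(t,u)$, $\Psi(t,t)=I_{d_1}$, which is exactly what Proposition \ref{prop-equation-xi-t-s} provides, so the proof stands provided $\Psi$ is defined as the transition matrix (the paper commits the same notational abuse in Corollary \ref{cor-expression-cov-xi-s-u-t}).
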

\begin{proof}
  Set
  \begin{align}
    \label{eq3-12}\tilde{\mu}_{t;t}=\mu_{t;t}-E[X_t],~~\tilde{Y}_t=Y_t-\int_0^t c(u)E[X_u]du.
  \end{align}
  Then according to Proposition \ref{prop-equation-xi-t-s} and (\ref{eq-def-mu}), we have
  \begin{align*}
    \tilde{\mu}_{t,t}=&\int_0^t \mathrm{Cov}(\xi_{t;t},\xi_{u;t})c(u)^\top(\sigma(u)\sigma(u)^\top)^{-1}d\tilde{Y}_u\\
    =&\int_0^t\left[\gamma(u)+
    \int_u^t\left\{ -\gamma(s)c(s)^\top(\sigma(s)\sigma(s)^\top)^{-1}c(s)+a(s)\right\}\mathrm{Cov}(\xi_{s;s},\xi_{u;s})ds\right]\\
    &\times c(u)^\top(\sigma(u)\sigma(u)^\top)^{-1}d\tilde{Y}_u\\
    =&\int_0^t a(s)\int_0^s \mathrm{Cov}(\xi_{s;s},\xi_{u;s})c(u)^\top(\sigma(u)\sigma(u)^\top)^{-1}d\tilde{Y}_uds\\
    &+\int_0^t\gamma(u) c(u)^\top(\sigma(u)\sigma(u)^\top)^{-1}d\tilde{Y}_u\\
    &-\int_0^t\gamma(s)c(s)^\top(\sigma(s)\sigma(s)^\top)^{-1}c(s)\\
    &\times \int_0^s \mathrm{Cov}(\xi_{s;s},\xi_{u;s})c(u)^\top(\sigma(u)\sigma(u)^\top)^{-1}d\tilde{Y}_uds\\
    =&\int_0^t a(s)\tilde{\mu}_{s,s}ds+\int_0^t\gamma(u) c(u)^\top(\sigma(u)\sigma(u)^\top)^{-1}\{d\tilde{Y}_u-c(s)\tilde{\mu}_{s,s}ds\},
  \end{align*}
  which gives the equation
  \begin{align*}
    d\tilde{\mu}_{t,t}=a(t)\tilde{\mu}_{t,t}dt+\gamma(t) c(t)^\top(\sigma(t)\sigma(t)^\top)^{-1}\{d\tilde{Y}_t-c(t)\tilde{\mu}_{t,t}dt\}.
  \end{align*}
  By (\ref{eq3-12}) and $d E[X_t]=a(t)E[X_t]dt$, this is equivalent to the desired result.
\end{proof}

We can also derive the forward-backward algorithm for fixed-interval smoothing, commonly referred to as the Rauch-Tung-Striebel smoother.
\begin{proposition}{\bf (Rauch-Tung-Striebel smoother)}\\
  If $\gamma(r)$ is strictly positive for every $r \in [0,t]$, it holds
  \begin{align*}
    \mu_{s;t}=&\mu_{t;t}-\int_s^t \left\{ a(r)\mu_{r;t}+b(r)b(r)^\top \gamma(r)^{-1}(\mu_{r;t}-\mu_{r;r}) \right\}dr
  \end{align*}
  for every $0\leq s\leq t$.
\end{proposition}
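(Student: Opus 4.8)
The plan is to differentiate $\mu_{s;t}$ with respect to its first argument $s$ (holding $t$ fixed) and then integrate the resulting equation over $[s,t]$. Writing $dN_u = c(u)^\top(\sigma(u)\sigma(u)^\top)^{-1}(dY_u - c(u)E[X_u]\,du)$ and $K(s,u)=\mathrm{Cov}(\xi_{s;t},\xi_{u;t})$, we have $\mu_{s;t} = E[X_s] + \int_0^t K(s,u)\,dN_u$ from (\ref{eq-def-mu}). Since the kernel $K(s,u)$ has different representations on $\{u<s\}$ and $\{u>s\}$, I would split the integral as $\mu_{s;t}=E[X_s]+I_1(s)+I_2(s)$ with $I_1(s)=\int_0^s K(s,u)\,dN_u$ and $I_2(s)=\int_s^t K(s,u)\,dN_u$, and differentiate each piece in $s$.

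First I would record the two one-sided $s$-derivatives of the kernel. For $u<s$, the Markov structure of the SDE (\ref{eq-def-xi}) gives $\partial_s K(s,u)=\{a(s)+b(s)b(s)^\top\phi(s;t)\}K(s,u)$, since the increment $b(s)\,d\tilde V_s$ is independent of $\xi_{u;t}$; for $u>s$, transposing the ODE established in the proof of Proposition \ref{prop-expression-cov-xi} yields $\partial_s K(s,u)=\{a(s)+b(s)b(s)^\top\gamma(s)^{-1}\}K(s,u)$. Differentiating $I_1$ and $I_2$ then produces, besides these interior terms, boundary contributions at $u=s$ equal to $+K(s,s)\,dN_s$ and $-K(s,s)\,dN_s$, which cancel because $K(s,s)=V[\xi_{s;t}]$ from both sides. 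Using $d_sE[X_s]=a(s)E[X_s]\,ds$, this leaves
\begin{align*}
  d_s\mu_{s;t} = \Big\{a(s)\mu_{s;t} + b(s)b(s)^\top\phi(s;t)\,I_1(s) + b(s)b(s)^\top\gamma(s)^{-1}\,I_2(s)\Big\}\,ds.
\end{align*}

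The conceptual heart is to re-express the last two terms through the filter value $\mu_{s;s}$. The key observation is that, as functions of $u\in[0,s]$ with $s$ fixed, both $\mathrm{Cov}(\xi_{s;t},\xi_{u;t})$ and $\mathrm{Cov}(\xi_{s;s},\xi_{u;s})$ satisfy the \emph{same} linear ODE $d_uC=C\{a(u)^\top+\gamma(u)^{-1}b(u)b(u)^\top\}$ from Proposition \ref{prop-expression-cov-xi}, because $\gamma$ solves (\ref{eq-gamma}) independently of the terminal time. Matching their values at $u=s$, namely $V[\xi_{s;t}]$ and $\gamma(s)$, gives $\mathrm{Cov}(\xi_{s;t},\xi_{u;t})=V[\xi_{s;t}]\gamma(s)^{-1}\mathrm{Cov}(\xi_{s;s},\xi_{u;s})$ for $u\le s$, and since the prefactor is deterministic it pulls out of the integral to yield $I_1(s)=V[\xi_{s;t}]\gamma(s)^{-1}(\mu_{s;s}-E[X_s])$. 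Combining this with $V[\xi_{s;t}]^{-1}=\gamma(s)^{-1}-\phi(s;t)$ from (\ref{eq-w-inverse-formula}) and with $I_1(s)+I_2(s)=\mu_{s;t}-E[X_s]$, a direct substitution collapses the bracket to $a(s)\mu_{s;t}+b(s)b(s)^\top\gamma(s)^{-1}(\mu_{s;t}-\mu_{s;s})$, so that
\begin{align*}
  d_s\mu_{s;t} = \Big\{a(s)\mu_{s;t}+b(s)b(s)^\top\gamma(s)^{-1}(\mu_{s;t}-\mu_{s;s})\Big\}\,ds.
\end{align*}
Integrating this identity over $[s,t]$ and using $\mu_{t;t}=E[X_t|\mathcal{Y}_t]$ gives the claimed representation.

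I expect the main obstacle to be the careful justification of the term-by-term differentiation of the stochastic integrals $I_1$ and $I_2$ in the parameter $s$. The clean way to handle this is to note that each integrand factors as a deterministic smooth matrix function of $s$ times an $s$-semimartingale: for $u\le s$ one writes $K(s,u)=\Phi(s)\Phi(u)^{-1}V[\xi_{u;t}]$ with $\Phi$ the fundamental matrix of $a+bb^\top\phi$, so $I_1(s)=\Phi(s)\int_0^s\Phi(u)^{-1}V[\xi_{u;t}]\,dN_u$, and the ordinary product rule applies with no cross-variation term (since $\Phi$ has bounded variation), the boundary term at $u=s$ being exactly $V[\xi_{s;t}]\,dN_s$; the analogous factorization with the fundamental matrix of $a+bb^\top\gamma^{-1}$ handles $I_2$ and produces the matching $-V[\xi_{s;t}]\,dN_s$. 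Once these factorizations and the terminal-time-independent kernel identity are in place, the remaining algebra is routine.
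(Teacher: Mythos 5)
Your proof is correct and follows essentially the same route as the paper's: both arguments rest on the two one-sided $s$-derivative formulas for the kernel $\mathrm{Cov}(\xi_{s;t},\xi_{u;t})$ (Proposition \ref{prop-expression-cov-xi} for $s\leq u$ and (\ref{eq-ds-cov-xi}) for $s\geq u$), the identity $V[\xi_{s;t}]^{-1}=\gamma(s)^{-1}-\phi(s;t)$ from (\ref{eq-w-inverse-formula}), and the fact that the normalized kernel $V[\xi_{s;t}]^{-1}\mathrm{Cov}(\xi_{s;t},\xi_{u;t})$ does not depend on the terminal time, which is exactly what identifies the $\int_0^s$ portion of the integral with $\gamma(s)^{-1}(\mu_{s;s}-E[X_s])$. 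The only difference is presentational: you differentiate $\mu_{s;t}$ in $s$ directly, splitting the integral at $u=s$ with cancelling boundary terms, whereas the paper writes the kernel via the fundamental theorem of calculus in $s$ and interchanges the order of integration --- the same computation in integral rather than differential form.
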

\begin{proof}
  Set 
  \begin{align*}
    \tilde{\mu}_{t;t}=\mu_{t;t}-E[X_t],~~\hat{Y}_t=\int_0^t c(u)^\top (\sigma(u) \sigma(u)^\top)^{-1} (dY_u - c(u) E[X_u] \, du).
  \end{align*}
  According to Proposition \ref{prop-expression-cov-xi}, for $s\leq u$ we have
  \begin{align*}
    d_s\mathrm{Cov}(\xi_{s;t},\xi_{u;t})
    =\left\{ a(s)+b(s)b(s)^\top \gamma(s)^{-1} \right\}\mathrm{Cov}(\xi_{s;t},\xi_{u;t})ds
  \end{align*}
  On the other hand, it follows from (\ref{eq-cov-xi-2}) that for $s\geq u$
  \begin{align}
    \label{eq-ds-cov-xi}d_s\mathrm{Cov}(\xi_{s;t},\xi_{u;t})=\left\{ a(s)+b(s)b(s)^\top \phi(s;t) \right\}\mathrm{Cov}(\xi_{s;t},\xi_{u;t})ds.
  \end{align}
  Therefore, it holds
  \begin{align*}
    \tilde{\mu}_{s;t}=&\int_0^t\left\{ \mathrm{Cov}(\xi_{t;t},\xi_{u;t})-\int_s^t \frac{\partial}{\partial r}\mathrm{Cov}(\xi_{r;t},\xi_{u;t})dr \right\} d\hat{Y}_u\\
    =&\tilde{\mu}_{t;t}-\int_s^t \int_0^t \frac{\partial}{\partial r}\mathrm{Cov}(\xi_{r;t},\xi_{u;t})d\hat{Y}_u dr\\
    =&\tilde{\mu}_{t;t}-\int_s^t \int_0^r \left\{ a(r)+b(r)b(r)^\top \phi(r;t) \right\}\mathrm{Cov}(\xi_{r;t},\xi_{u;t})d\hat{Y}_u dr\\
    &-\int_s^t \int_r^t \left\{ a(r)+b(r)b(r)^\top \gamma(r)^{-1} \right\}\mathrm{Cov}(\xi_{r;t},\xi_{u;t})d\hat{Y}_u dr\\
    =&\tilde{\mu}_{t;t}-\int_s^t \left\{ a(r)+b(r)b(r)^\top \gamma(r)^{-1} \right\}\int_0^t \mathrm{Cov}(\xi_{r;t},\xi_{u;t})d\hat{Y}_u dr\\
    &+\int_s^t \int_0^r b(r)b(r)^\top \{\gamma(r)^{-1}-\phi(r;t)\} \mathrm{Cov}(\xi_{r;t},\xi_{u;t})d\hat{Y}_u dr\\
    =&\tilde{\mu}_{t;t}-\int_s^t \left\{ a(r)+b(r)b(r)^\top \gamma(r)^{-1} \right\}\tilde{\mu}_{r;t} dr\\
    &+\int_s^t b(r)b(r)^\top \int_0^r  V[\xi_{r;t}]^{-1} \mathrm{Cov}(\xi_{r;t},\xi_{u;t})d\hat{Y}_u dr.
  \end{align*}
  In the last equation, we utilized the relation (\ref{eq-w-inverse-formula}).

  Furthermore, according to Proposition \ref{prop-expression-cov-xi}, $V[\xi_{r;t}]^{-1} \mathrm{Cov}(\xi_{r;t},\xi_{u;t})$ does not depend on $t$. Hence, we have
  \begin{align*}
    \int_0^r  V[\xi_{r;t}]^{-1} \mathrm{Cov}(\xi_{r;t},\xi_{u;t})d\hat{Y}_u &=\int_0^r  V[\xi_{r;r}]^{-1} \mathrm{Cov}(\xi_{r;r},\xi_{u;r})d\hat{Y}_u\\
    &=\gamma(r)^{-1}\tilde{\mu}_{r,r}.
  \end{align*}

  Therefore, we obtain
  \begin{align*}
    \tilde{\mu}_{s;t}=&\tilde{\mu}_{t;t}-\int_s^t \left\{ a(r)+b(r)b(r)^\top \gamma(r)^{-1} \right\}\tilde{\mu}_{r;t} dr\\
    &+\int_s^t b(r)b(r)^\top \gamma(r)^{-1}\tilde{\mu}_{r,r} dr.
  \end{align*}

  Finally, we obtain the desired result by using $\tilde{\mu}_{t;t}=\mu_{t;t}-E[X_t]$ and $dE[X_t]=a(t)E[X_t]dt$.  
\end{proof}

Next, we derive the differential equations for $\mathrm{Cov}(\xi_{s;t},\xi_{u;t})$ and $\mu_{s;t}$ with respect to $t$.

\begin{proposition}\label{prop-derivative-cov-xi-s-u-t}
  It holds for $s,u\geq0$ and $t\geq s\vee u$ that
  \begin{align*}
    d_t\mathrm{Cov}(\xi_{s;t},\xi_{u;t})=-\mathrm{Cov}(\xi_{t;t},\xi_{s;t})^\top c(t)^\top(\sigma(t)\sigma(t)^\top)^{-1}c(t)
    \mathrm{Cov}(\xi_{t;t},\xi_{u;t})dt.
  \end{align*}
\end{proposition}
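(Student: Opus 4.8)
The plan is to reduce everything to the diagonal case $u=s$ and then to an explicit computation of $\partial_t\phi(s;t)$. Since a covariance satisfies $\mathrm{Cov}(\xi_{s;t},\xi_{u;t})=\mathrm{Cov}(\xi_{u;t},\xi_{s;t})^\top$ and the middle factor $c(t)^\top(\sigma(t)\sigma(t)^\top)^{-1}c(t)$ is symmetric, transposing the claimed identity shows it is invariant under swapping $s$ and $u$; hence I may assume $u\le s\le t$. As in Propositions~\ref{prop-expression-cov-xi} and~\ref{prop-equation-xi-t-s}, I would first treat the case where $\gamma(r)$ is non-singular on $[0,t]$ and recover the general case at the end by the regularization $V[X_0]\rightsquigarrow V[X_0]+\epsilon I_{d_1}$, passing to the limit $\epsilon\to 0$ with Lemma~\ref{lemma-gamma-epsilon}.

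The central reduction is to differentiate Proposition~\ref{prop-expression-cov-xi} in $t$. Writing $w(s;t)=V[\xi_{s;t}]=\{\gamma(s)^{-1}-\phi(s;t)\}^{-1}$ via (\ref{eq-w-inverse-formula}) and observing that the factor $R(s,u):=\exp(-\int_u^s\{a(r)^\top+\gamma(r)^{-1}b(r)b(r)^\top\}\,dr)$ in Proposition~\ref{prop-expression-cov-xi} does not depend on $t$, the derivative $\partial_t\mathrm{Cov}(\xi_{s;t},\xi_{u;t})$ collapses to $[\partial_t w(s;t)]\,R(s,u)$. Because $\gamma(s)$ is $t$-independent, $\partial_t[w(s;t)^{-1}]=-\partial_t\phi(s;t)$, and therefore $\partial_t w(s;t)=w(s;t)\,[\partial_t\phi(s;t)]\,w(s;t)$. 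The whole proposition thus rests on identifying $\psi(s;t):=\partial_t\phi(s;t)$.

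To compute $\psi$, I would differentiate the Riccati equation (\ref{eq-def-phi}) with respect to the parameter $t$. Since its right-hand side is quadratic in $\phi$, $\psi$ solves the linear (Lyapunov) equation $\partial_s\psi=-M(s;t)^\top\psi-\psi M(s;t)$ with $M(s;t):=a(s)+b(s)b(s)^\top\phi(s;t)$. The boundary condition comes from differentiating $\phi(t;t)=0$ in $t$: reading off $\partial_s\phi(s;t)\big|_{s=t}=c(t)^\top(\sigma(t)\sigma(t)^\top)^{-1}c(t)$ from (\ref{eq-def-phi}) at $s=t$ gives $\psi(t;t)=-c(t)^\top(\sigma(t)\sigma(t)^\top)^{-1}c(t)=:-N(t)$. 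Solving the Lyapunov equation backward from $s=t$ yields $\psi(s;t)=-\Phi(t,s)^\top N(t)\Phi(t,s)$, where $\Phi(t,s)$ is the transition matrix of $d\xi/dr=M(r;t)\xi$, i.e. exactly the deterministic propagator appearing in (\ref{eq-xi-formula}) and (\ref{eq-cov-xi-2}). The key bookkeeping identity is $\Phi(t,s)\,w(s;t)=\mathrm{Cov}(\xi_{t;t},\xi_{s;t})$, which is (\ref{eq-cov-xi-2}) read with $s\to t$, $u\to s$.

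Assembling the pieces gives the diagonal case: $\partial_t w(s;t)=w\,\psi\,w=-(\Phi(t,s)w)^\top N(t)(\Phi(t,s)w)=-\mathrm{Cov}(\xi_{t;t},\xi_{s;t})^\top N(t)\,\mathrm{Cov}(\xi_{t;t},\xi_{s;t})$, using symmetry of $w$. For general $u\le s$ I then multiply by the $t$-independent factor $R(s,u)$ and invoke the propagator composition (Chapman--Kolmogorov) property underlying Proposition~\ref{prop-expression-cov-xi}, namely $\mathrm{Cov}(\xi_{t;t},\xi_{s;t})\,R(s,u)=\mathrm{Cov}(\xi_{t;t},\xi_{u;t})$ (both sides equal $\gamma(t)$ times the propagator from $u$ to $t$), which produces the stated right-hand side. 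I expect the main obstacle to be the rigorous handling of $\psi=\partial_t\phi$: justifying smooth dependence of the Riccati solution on the terminal-time parameter $t$, deriving the boundary condition cleanly, and dealing with matrix non-commutativity when solving the Lyapunov equation and matching its propagator $\Phi$ to the covariance propagator of (\ref{eq-cov-xi-2}).
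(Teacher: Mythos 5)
Your proposal is correct, and its outer structure coincides with the paper's proof: reduce to $u\leq s$ by transposition symmetry, factor $\mathrm{Cov}(\xi_{s;t},\xi_{u;t})=V[\xi_{s;t}]R(s,u)$ with the $t$-independent right factor $R(s,u)$ from Proposition \ref{prop-expression-cov-xi}, identify the resulting propagator factors with $\mathrm{Cov}(\xi_{t;t},\xi_{s;t})$ and $\mathrm{Cov}(\xi_{t;t},\xi_{u;t})$, and handle singular $\gamma$ by the $\epsilon$-regularization of Lemma \ref{lemma-gamma-epsilon} together with (\ref{eq-cov-xi-epsilon}). Where you genuinely diverge is in the core computation of $\partial_t V[\xi_{s;t}]$. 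The paper never differentiates $\phi$ in $t$: by exchanging the roles of $\gamma^{-1}$ and $\phi$ in (\ref{eq-derivative-w}) it obtains a Lyapunov ODE in $s$ for $V[\xi_{s;t}]$ whose coefficients involve only $\gamma^{-1}$, solves it explicitly with terminal datum $V[\xi_{t;t}]=\gamma(t)$, and then all $t$-dependence sits in the integration limits and in $\gamma(t)$, so that $d_t$ is a plain Leibniz differentiation simplified by the Riccati equation (\ref{eq-gamma}). You instead keep $\phi$, write $\partial_t w=w\,(\partial_t\phi)\,w$ via (\ref{eq-w-inverse-formula}), and characterize $\psi=\partial_t\phi$ by parameter-differentiating the Riccati equation (\ref{eq-def-phi}), obtaining a Lyapunov equation with terminal value $\psi(t;t)=-c(t)^\top(\sigma(t)\sigma(t)^\top)^{-1}c(t)$ from $\phi(t;t)\equiv 0$; your bookkeeping identity $\Phi(t,s)w(s;t)=\mathrm{Cov}(\xi_{t;t},\xi_{s;t})$ is indeed (\ref{eq-cov-xi-2}). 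Your route is more structural (it exhibits the innovation term $-c^\top(\sigma\sigma^\top)^{-1}c$ as the terminal-time derivative of the Riccati flow) and would survive in settings without an explicit solution formula; what it costs is exactly the obstacle you flag: under the paper's weak regularity (coefficients only measurable and integrable, so (\ref{eq-def-phi}) holds a.e.\ in $s$), the existence of $\partial_t\phi$, the interchange $\partial_s\partial_t\phi=\partial_t\partial_s\phi$, and the pointwise reading of the boundary condition at $s=t$ require Lebesgue-point and flow-composition arguments, whereas the paper's elimination of $\phi$ makes these issues vanish since only the differentiability of $\gamma$, guaranteed by (\ref{eq-gamma}), is ever used.
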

\begin{proof}
  First assume that $\gamma(t)$ is positive for every $t\geq 0$. By exchanging $\gamma(s)^{-1}$ and $\phi(s;t)$ in (\ref{eq-derivative-w}), and recalling $w(s;t)=V[\xi_{s;t}]$, it holds  
  \begin{align*}
    \frac{d}{ds}V[\xi_{s;t}]=&-b(s)b(s)^\top+V[\xi_{s;t}]\left\{ a(s)^\top + \gamma(s)^{-1}b(s)b(s)^\top \right\}\\
    &+\left\{ a(s)^\top + \gamma(s)^{-1}b(s)b(s)^\top \right\}V[\xi_{s;t}].
  \end{align*}
  Thus, we obtain for $s\leq t$
  \begin{align*}
    V[\xi_{s;t}]=&\exp\left(-\int_s^t\left\{ a(r)+b(r)b(r)^\top\gamma(r)^{-1} \right\}dr \right)\gamma(t)\\
    &\times \exp\left(-\int_s^t\left\{ a(r)^\top+\gamma(r)^{-1}b(r)b(r)^\top \right\}dr \right)\\
    &+\int_s^t \exp\left( \int_u^s\left\{ a(r)+b(r)b(r)^\top \gamma(r)^{-1} \right\}dr \right)\\
    &\qquad \times b(u) b(u)^\top\exp\left( \int_u^s\left\{ a(r)^\top+\gamma(r)^{-1}b(r)b(r)^\top \right\}dr \right)du,
  \end{align*}
  where we used $V[\xi_{t;t}]=\gamma(t)$. Differentiating this yields
  \begin{align*}
    &d_tV[\xi_{s;t}]\\
    =&\exp\left(-\int_s^t\left\{ a(r)+b(r)b(r)^\top\gamma(r)^{-1} \right\}dr \right)\\
    &\times \biggl[ -\left\{a(t)+b(t)b(t)^\top\gamma(t)^{-1} \right\}\gamma(t) \\
    &\qquad -\gamma(t)\left\{ a(t)^\top+\gamma(t)^{-1}b(t)b(t)^\top\right\}+\frac{d}{dt}\gamma(t)+b(t)b(t)^\top \biggr]\\
    &\times \exp\left(-\int_s^t\left\{ a(r)^\top+\gamma(r)^{-1}b(r)b(r)^\top \right\}dr \right)dt\\
    =&-\exp\left(-\int_s^t\left\{ a(r)+b(r)b(r)^\top\gamma(r)^{-1} \right\}dr \right)\\
    &\times \gamma(t)c(t)^\top(\sigma(t)\sigma(t)^\top)^{-1}c(t)\gamma(t)\\
    &\times \exp\left(-\int_s^t\left\{ a(r)^\top+\gamma(r)^{-1}b(r)b(r)^\top \right\}dr \right)dt
  \end{align*}
  for $t\geq s$.

  Thus, from Proposition \ref{prop-expression-cov-xi}, we obtain
  \begin{align}
    \label{eq3-13}\begin{split}      
    &d_t\mathrm{Cov}(\xi_{s;t},\xi_{u;t})\\
    =&d_tV[\xi_{s;t}]\exp\left( -\int_u^s \left\{ a(r)^\top+\gamma(r)^{-1}b(r)b(r)^\top \right\}dr \right)\\
    =&-\exp\left(-\int_s^t\left\{ a(r)+b(r)b(r)^\top\gamma(r)^{-1} \right\}dr \right)\\
    &\times \gamma(t)c(t)^\top(\sigma(t)\sigma(t)^\top)^{-1}c(t)\gamma(t)\\
    &\times \exp\left(-\int_u^t\left\{ a(r)^\top+\gamma(r)^{-1}b(r)b(r)^\top \right\}dr \right)dt\\
    =&-\mathrm{Cov}(\xi_{t;t},\xi_{s;t})^\top c(t)^\top(\sigma(t)\sigma(t)^\top)^{-1}c(t)
    \mathrm{Cov}(\xi_{t;t},\xi_{u;t})dt
    \end{split}
  \end{align}
  for $0\leq u\leq s \leq t$.

  In the case where \( s \leq u \), we can derive the same formula by noting that
\begin{align*}
  \frac{\partial}{\partial t} \mathrm{Cov}(\xi_{s; t}, \xi_{u; t})
  = \frac{\partial}{\partial t} \mathrm{Cov}(\xi_{u; t}, \xi_{s; t})^\top.
\end{align*}

Finally, assume that \( \gamma(t) \) may be singular. In this case, we have
\begin{align*}
  d_t \mathrm{Cov}(\xi_{s; t}^\epsilon, \xi_{u; t}^\epsilon) 
  &= -\mathrm{Cov}(\xi_{t; t}^\epsilon, \xi_{s; t}^\epsilon)^\top c(t)^\top (\sigma(t) \sigma(t)^\top)^{-1} c(t) 
  \mathrm{Cov}(\xi_{t; t}^\epsilon, \xi_{u; t}^\epsilon) \, dt,
\end{align*}
where \( \gamma(s)^\epsilon \) and \( \xi_{s; t}^\epsilon \) are defined in (\ref{eq-def-gamma-epsilon-1}) to (\ref{eq-def-xi-epsilon-2}). By (\ref{eq-cov-xi-epsilon}), \( \mathrm{Cov}(\xi_{t; t}^\epsilon, \xi_{s; t}^\epsilon) \) converges uniformly to \( \mathrm{Cov}(\xi_{t; t}, \xi_{s; t}) \) on any compact set as \( \epsilon \to 0 \), leading to the desired result.

\end{proof}

\begin{corollary}\label{cor-expression-cov-xi-s-u-t}
  For $0\leq u\leq s\leq t$, $\mathrm{Cov}(\xi_{s;t},\xi_{u;t})$ has the expression
  \begin{align*}
    &\mathrm{Cov}(\xi_{s;t},\xi_{u;t})\\
    =&\exp\left( \int_u^s \left\{ a(r)-\gamma(r)c(r)^\top(\sigma(r)\sigma(r)^\top)^{-1}c(r)\right\}dr \right)\gamma(u)\\
    &-\gamma(s)\int_s^t \exp\left( \int_s^v \left\{ a(r)^\top-c(r)^\top(\sigma(r)\sigma(r)^\top)^{-1}c(r)\gamma(r)\right\}dr \right)\\
    &\times c(v)^\top (\sigma(v)\sigma(v)^\top)^{-1}c(v)\\
    &\times \exp\left( \int_u^v \left\{ a(r)-\gamma(r)c(r)^\top(\sigma(r)\sigma(r)^\top)^{-1}c(r)\right\}dr \right)dv\gamma(u)
  \end{align*}
\end{corollary}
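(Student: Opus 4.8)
The plan is to fix $0\le u\le s$, regard $\mathrm{Cov}(\xi_{s;t},\xi_{u;t})$ as a function of the terminal time $t\ge s$, and integrate the ordinary differential equation furnished by Proposition \ref{prop-derivative-cov-xi-s-u-t}. Writing the dummy variable as $v$ and integrating from the initial time $t=s$ (the smallest admissible value, since $t\ge s\vee u=s$) gives
\[
  \mathrm{Cov}(\xi_{s;t},\xi_{u;t})=\mathrm{Cov}(\xi_{s;s},\xi_{u;s})-\int_s^t \mathrm{Cov}(\xi_{v;v},\xi_{s;v})^\top c(v)^\top(\sigma(v)\sigma(v)^\top)^{-1}c(v)\,\mathrm{Cov}(\xi_{v;v},\xi_{u;v})\,dv,
\]
so the whole task reduces to identifying the initial term and the two covariance factors inside the integrand, for which I would invoke the closed form established at the end of the proof of Proposition \ref{prop-equation-xi-t-s}.

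First I would compute the initial term. Setting the terminal time equal to $s$ and the earlier index equal to $u$ in
\[
  \mathrm{Cov}(\xi_{t;t},\xi_{s;t})=\exp\left(\int_s^t\{a(r)-\gamma(r)c(r)^\top(\sigma(r)\sigma(r)^\top)^{-1}c(r)\}\,dr\right)\gamma(s)
\]
yields
\[
  \mathrm{Cov}(\xi_{s;s},\xi_{u;s})=\exp\left(\int_u^s\{a(r)-\gamma(r)c(r)^\top(\sigma(r)\sigma(r)^\top)^{-1}c(r)\}\,dr\right)\gamma(u),
\]
which is exactly the first term of the asserted formula. Next I would substitute the same closed form, now with terminal time $v$, for the two factors $\mathrm{Cov}(\xi_{v;v},\xi_{s;v})$ and $\mathrm{Cov}(\xi_{v;v},\xi_{u;v})$ in the integrand. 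For the transposed factor I would use the symmetry of $\gamma(r)$ and of $(\sigma(r)\sigma(r)^\top)^{-1}$: transposing the matrix solution sends the generator $a(r)-\gamma(r)c(r)^\top(\sigma(r)\sigma(r)^\top)^{-1}c(r)$ to $a(r)^\top-c(r)^\top(\sigma(r)\sigma(r)^\top)^{-1}c(r)\gamma(r)$ and moves the symmetric factor $\gamma(s)^\top=\gamma(s)$ to the left, producing precisely the transposed exponential appearing in the statement. Since $\gamma(s)$ and $\gamma(u)$ do not depend on $v$, I would pull them outside the integral, and combining with the initial term reproduces the claimed expression verbatim.

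The point demanding care, and the main obstacle, is the transpose of the fundamental matrix. Because the generators at different times need not commute, the symbols $\exp(\int_s^v\{\cdots\}\,dr)$ must be read as time-ordered fundamental solutions, and the identity ``the transpose reverses the product and hence transposes the generator'' must be applied to the matrix solution itself rather than naively to a matrix exponential; verifying this carefully is what makes the transposed generator in the statement come out correctly. Finally, should $\gamma(r)$ be singular somewhere on $[0,t]$, I would, as in the preceding propositions, establish the identity first for the regularized quantities $\gamma^\epsilon$ and $\xi^\epsilon_{\cdot;\cdot}$ of (\ref{eq-def-gamma-epsilon-1})--(\ref{eq-def-xi-epsilon-2}) and then let $\epsilon\to0$, the passage to the limit being justified by the uniform convergence in Lemma \ref{lemma-gamma-epsilon}.
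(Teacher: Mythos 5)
Your proposal is correct and follows essentially the same route as the paper: integrate the differential equation of Proposition \ref{prop-derivative-cov-xi-s-u-t} in the terminal variable from $t=s$, then substitute the closed form $\mathrm{Cov}(\xi_{t;t},\xi_{s;t})=\exp\left(\int_s^t\{a(r)-\gamma(r)c(r)^\top(\sigma(r)\sigma(r)^\top)^{-1}c(r)\}\,dr\right)\gamma(s)$ from Proposition \ref{prop-equation-xi-t-s} into the initial term and both factors of the integrand. The care you take with the transposed time-ordered fundamental solution and the $\epsilon$-regularization makes explicit what the paper compresses into ``immediately obtained'' (the regularization is in fact already built into the two cited propositions, so that final step is not even needed).
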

\begin{proof}
  By Proposition \ref{prop-derivative-cov-xi-s-u-t}, we have
  \begin{align*}
    &\mathrm{Cov}(\xi_{s;t},\xi_{u;t})\\
    =&\mathrm{Cov}(\xi_{s;s},\xi_{u;s})\\
    &-\int_s^t \mathrm{Cov}(\xi_{v;v},\xi_{s;v})^\top c(v)^\top (\sigma(v)\sigma(v)^\top)^{-1}c(v) \mathrm{Cov}(\xi_{v;v},\xi_{u;v})dv.
  \end{align*}
  Then the desired result is immediately obtained, recalling that it follows from Proposition \ref{prop-equation-xi-t-s} that
  \begin{align*}
    \mathrm{Cov}(\xi_{t;t},\xi_{s;t})=\exp\left( \int_s^t \left\{ a(r)-\gamma(r)c(r)^\top(\sigma(r)\sigma(r)^\top)^{-1}c(r)\right\}dr \right)\gamma(s)
  \end{align*}
  for $0\leq s\leq t$. 
\end{proof}

The equation for fixed-point smoothing is presented in a simpler form than the literature \citep{H-infinity-optimal}. 
\begin{proposition}{\bf (Fixed-point smoother)}\\
  For any $s \geq 0$, the process $\{\mu_{s;t}\}_{t\geq s}$ satisfies the stochastic differential equation
  \begin{align*}
    d_t\mu_{s;t}=\mathrm{Cov}(\xi_{s;t},\xi_{t;t})c(t)^\top(\sigma(t)\sigma(t)^\top)^{-1}(dY_t-c(t)\mu_{t;t}dt).
  \end{align*}
\end{proposition}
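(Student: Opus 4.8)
The plan is to differentiate the defining formula (\ref{eq-def-mu}) with respect to the upper limit $t$, keeping $s$ fixed. Abbreviating $d\hat{Y}_u=c(u)^\top(\sigma(u)\sigma(u)^\top)^{-1}(dY_u-c(u)E[X_u]\,du)$, we have $\mu_{s;t}=E[X_s]+\int_0^t\mathrm{Cov}(\xi_{s;t},\xi_{u;t})\,d\hat{Y}_u$, so that the parameter $t$ enters both through the upper limit and through the integrand. The first step is to establish the Leibniz-type identity
\begin{align*}
  d_t\mu_{s;t}=\mathrm{Cov}(\xi_{s;t},\xi_{t;t})\,d\hat{Y}_t+\left(\int_0^t\frac{\partial}{\partial t}\mathrm{Cov}(\xi_{s;t},\xi_{u;t})\,d\hat{Y}_u\right)dt,
\end{align*}
valid for $t\geq s$, in which the boundary term is the contribution at $u=t$ and the interior term accounts for the $t$-dependence of the integrand.

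With this in hand, I would insert the formula from Proposition \ref{prop-derivative-cov-xi-s-u-t}, namely $\frac{\partial}{\partial t}\mathrm{Cov}(\xi_{s;t},\xi_{u;t})=-\mathrm{Cov}(\xi_{t;t},\xi_{s;t})^\top c(t)^\top(\sigma(t)\sigma(t)^\top)^{-1}c(t)\,\mathrm{Cov}(\xi_{t;t},\xi_{u;t})$, and observe that only the last factor depends on $u$. Pulling the $u$-independent prefactor out of the interior integral leaves $\int_0^t\mathrm{Cov}(\xi_{t;t},\xi_{u;t})\,d\hat{Y}_u$, which by (\ref{eq-def-mu}) with $s=t$ equals $\mu_{t;t}-E[X_t]$. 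Using the symmetry $\mathrm{Cov}(\xi_{t;t},\xi_{s;t})^\top=\mathrm{Cov}(\xi_{s;t},\xi_{t;t})$, both the boundary and interior terms share the common left factor $\mathrm{Cov}(\xi_{s;t},\xi_{t;t})c(t)^\top(\sigma(t)\sigma(t)^\top)^{-1}$; factoring it out produces the bracket $(dY_t-c(t)E[X_t]\,dt)-c(t)(\mu_{t;t}-E[X_t])\,dt$, in which the two $c(t)E[X_t]\,dt$ contributions cancel to give $dY_t-c(t)\mu_{t;t}\,dt$. Together with $dE[X_s]=0$ (as $s$ is fixed), this is precisely the asserted equation.

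The main difficulty is the rigorous justification of the Leibniz rule in the first step, since differentiating a stochastic integral in a parameter appearing inside its integrand is not automatic. I would argue incrementally: for $t\geq s$ and $h>0$, split $\mu_{s;t+h}-\mu_{s;t}$ into $\int_0^t[\mathrm{Cov}(\xi_{s;t+h},\xi_{u;t+h})-\mathrm{Cov}(\xi_{s;t},\xi_{u;t})]\,d\hat{Y}_u$ plus $\int_t^{t+h}\mathrm{Cov}(\xi_{s;t+h},\xi_{u;t+h})\,d\hat{Y}_u$. On the first range every $u\leq t$ satisfies $r\geq s\vee u$ for $r\in[t,t+h]$, so one may write the bracket as $\int_t^{t+h}\frac{\partial}{\partial r}\mathrm{Cov}(\xi_{s;r},\xi_{u;r})\,dr$ and apply the stochastic Fubini theorem to interchange $dr$ and $d\hat{Y}_u$; the integrand is deterministic, continuous in $(r,u)$, and has the required integrability by the explicit representations in Proposition \ref{prop-expression-cov-xi} and Corollary \ref{cor-expression-cov-xi-s-u-t}, so the hypotheses hold. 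Dividing by $h$ and letting $h\to0$ yields the interior term, while the second range contributes the boundary term $\mathrm{Cov}(\xi_{s;t},\xi_{t;t})\,d\hat{Y}_t$. As in the preceding propositions, any singularity of $\gamma$ is handled by first working with $\gamma^\epsilon$ and $\xi_{s;t}^\epsilon$ and passing to the limit $\epsilon\to0$ using Lemma \ref{lemma-gamma-epsilon}.
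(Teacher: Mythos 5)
Your proposal is correct and follows essentially the same route as the paper's proof: differentiate the defining formula (\ref{eq-def-mu}) in $t$ (boundary term at $u=t$ plus interior term), insert Proposition \ref{prop-derivative-cov-xi-s-u-t}, identify $\int_0^t\mathrm{Cov}(\xi_{t;t},\xi_{u;t})\,d\hat{Y}_u$ with $\mu_{t;t}-E[X_t]$, and use the transpose symmetry so the $c(t)E[X_t]\,dt$ terms cancel. The paper applies the Leibniz-type differentiation without comment, so your incremental stochastic-Fubini justification and the $\epsilon$-regularization remark are welcome extra rigour rather than a different method.
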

\begin{proof}
  By the definition of $\mu_{s;t}$ and Proposition \ref{prop-derivative-cov-xi-s-u-t}, we have
  \begin{align*}
    d_t\mu_{s;t}=&\mathrm{Cov}(\xi_{s;t},\xi_{t;t})c(t)^\top(\sigma(t)\sigma(t)^\top)^{-1}(dY_t-c(t)E[X_t]dt)\\
    &-\int_0^t \mathrm{Cov}(\xi_{t;t},\xi_{s;t})^\top c(t)^\top(\sigma(t)\sigma(t)^\top)^{-1}c(t)
    \mathrm{Cov}(\xi_{t;t},\xi_{u;t})\\
  &\times c(u)^\top(\sigma(u)\sigma(u)^\top)^{-1}(dY_u-c(u)E[X_u]du)dt\\
  =&\mathrm{Cov}(\xi_{s;t},\xi_{t;t})c(t)^\top(\sigma(t)\sigma(t)^\top)^{-1}(dY_t-c(t)E[X_t]dt)\\
  &- \mathrm{Cov}(\xi_{t;t},\xi_{s;t})^\top c(t)^\top(\sigma(t)\sigma(t)^\top)^{-1}c(t)(\mu_{t;t}-E[X_t])dt\\
  =&\mathrm{Cov}(\xi_{s;t},\xi_{t;t})c(t)^\top(\sigma(t)\sigma(t)^\top)^{-1}(dY_t-c(t)\mu_{t;t}dt).
  \end{align*}
\end{proof}

We now have all the necessary formulae at hand for the asymptotic expansion. In particular, beyond simply recovering the existing linear filtering and smoothing equations, we have derived new derivation formulae involving the cross-covariance $\mathrm{Cov}(\xi_{s;t}, \xi_{u;t})$. These equations play a crucial role in the next section, which is why Theorem \ref{main-theorem-linear} was established for the conditional distribution of the entire path.

\section{Application to nonlinear filtering}\label{section-application}
In this section, we demonstrate how the results developed in Section \ref{section-linear-filtering} can be applied to the computation of nonlinear filtering. Given the complexity and length of the present work, we provide only a heuristic outline of how the linear results extend to the nonlinear case. Consequently, many of the arguments below lack full mathematical rigour, but they are natural extensions of Itô integrals and Itô's formula, and thus can be understood intuitively. A complete mathematical justification of these steps will be provided in a subsequent paper.
\subsection{Reformulation of the model}
Let us consider a nonlinear unobserved process
\begin{align}
  \label{eq-theta}
  dX_t^\epsilon = \alpha(X_t^\epsilon) \, dt + \epsilon \, \beta(X_t^\epsilon) \, dV_t, 
  \quad X_0 = 0,
\end{align}
and an observed process
\begin{align}\label{eq-Y-eps}
  dY_t^\epsilon = h(X_t^\epsilon) \, dt + \sigma(t) \, dW_t, 
  \quad Y_0 = 0,
\end{align}
where $0 < \epsilon < 1$ and $\sigma > 0$ are constants, $\alpha$, $\beta$, and $h$ are nonlinear functions of class $C^\infty$, and $V$ and $W$ are independent Wiener processes. ALso, we assume $X^\epsilon$, $Y$, $V$, and $W$ are one-dimensional for simplicity.

For this system, we consider the asymptotic expansion of $E[\theta_t^\epsilon  |  \mathcal{Y}_t^\epsilon]$ with respect to $\epsilon$, where $\{\mathcal{Y}_t^\epsilon\}_{t\geq 0}$ is the augmented filtration generated by $Y^\epsilon$ with null sets.

In order to expand this conditional expansion, we first consider the expansion of the process $X_t^\epsilon$ itself. For this purpose, let us write $\displaystyle \frac{\partial^i}{\partial \epsilon^i} X_t^\epsilon = X_t^{[i],\epsilon}$ and $X_t^{[i]} = X_t^{[i],\epsilon}$ for $i = 0, 1, 2, \cdots$. Then, $X_t^{\epsilon}$ admits the Taylor expansion
\begin{align*}
  X_t^\epsilon=X_t^{[0]}+X_t^{[1]}\epsilon+\frac{1}{2}X_t^{[2]}\epsilon^2+\cdots
\end{align*}
and the derivatives $X_t^{[i]}~(i\geq 0)$ satisfy the equations
\begin{align}
  dX_t^{[0]}=&\alpha(X_t^{[0]})dt,~~X_0^{[0]}=0,\\
  \label{eq-theta-1}dX_t^{[1]}=&\alpha'(X_t^{[0]})X_t^{[1]}dt+\beta(X_t^{[0]})dV_t,~~X_t^{[1]}=0,\\
  \label{eq-theta-2}dX_t^{[2]}=&\alpha'(X_t^{[0]})X_t^{[2]}dt+\alpha''(X_t^{[0]})(X_t^{[1]})^{2}dt+2\beta'(X_t^{[0]})X_t^{[1]}dV_t,~~X_t^{[2]}=0,\\
  &\vdots\nonumber
\end{align}
which is obtained by formally differentiating equation (\ref{eq-theta}) and evaluating it at $\epsilon=0$ (see \citet{bichteler1987malliavin} for details).

From these equations, we observe that $X_t^{[0]}$ is a deterministic process, and $X_t^{[1]}$ is a solution of the linear equation (\ref{eq-theta-1}). Furthermore, the equations of the higher derivatives are written in the form
\begin{align*}
  dX_t^{[k]}=\alpha'(X_t^{[0]})X_t^{[k]}dt+(\textrm{polynomials of } X^{[1]},\cdots,X^{[k-1]})\times (dt~\textrm{or}~dV_t).
\end{align*}
This means that $X^{[k]}~(k\geq 2)$ can be recursively expressed as "polynomials" (which contain integrations) of $X^{[1]}$. 

For example, if we assume $\beta(X_t^{[0]})\neq 0$ for every $t\geq 0$, then (\ref{eq-theta-1}) and (\ref{eq-theta-2}) yields
\begin{align*}
  X_t^{[2]}=&\int_0^t \exp\left( \int_s^t \alpha'(X_u^{[0]})du \right)\alpha''(X_s^{[0]})(X_s^{[1]})^{2}ds\\
  &+2\int_0^t \exp\left( \int_s^t \alpha'(X_u^{[0]})du \right)\beta'(X_t^{[0]})X_t^{[1]}dV_s\\
  =&\int_0^t \exp\left( \int_s^t \alpha'(X_u^{[0]})du \right)\alpha''(X_s^{[0]})(X_s^{[1]})^{2}ds\\
  &+2\int_0^t \exp\left( \int_s^t \alpha'(X_u^{[0]})du \right)\beta'(X_t^{[0]})X_t^{[1]}\frac{dX_t^{[1]}-\alpha'(X_t^{[0]})dt}{\beta(X_t^{[0]})}.
\end{align*}
In virtue of It\^o's formula, this can be transformed into
\begin{align*}
  &X_t^{[2]}=\frac{b'(X_t^{[0]})}{b(X_t^{[0]})}(X_t^{[1]})^2\\
  &+\int_0^t \exp\left( \int_s^t \alpha'(X_u^{[0]})du \right) \\
  &\times \left\{ a''(X_s^{[0]})-\frac{\alpha'(X_s^{[0]})\beta'(X_s^{[0]})+\beta''(X_s^{[0]})}{\beta(X_s^{[0]})}+\frac{\beta'(X_s^{[0]})}{\beta(X_s^{[0]})} \right\}(X_s^{[1]})^2ds\\
  &-\int_0^t \exp\left( \int_s^t \alpha'(X_u^{[0]})du \right)\beta'(X_s^{[0]})\beta(X_s^{[0]})ds,
\end{align*}
which is written by $(X^{[1]})^2$ and its integral. A more general case, where we cannot assume $\beta(X_t^{[0]}) = 0$, and $X$ and $Y$ are multi-dimensional, is more complicated but the same idea can be applied. It will be covered in a subsequent paper.

Given this polynomial nature of the higher-order derivatives, $h(X_t^\epsilon)$ in (\ref{eq-Y-eps}) can be expanded as
\begin{align}
  h(X_t^{\epsilon})=&h(X_t^{[0]})+h'(X_t^{[0]})X_t^{[1]}\epsilon\nonumber\\
  &+\frac{1}{2}\left\{ h''(X_t^{[0]})(X_t^{[1]})^2+h'(X_t^{[0]})X_t^{[2]} \right\}\epsilon^2+\cdots\nonumber\\
  =&h(X_t^{[0]})+h'(X_t^{[0]})X_t^{[1]}\epsilon+\sum_{i=2}^\infty (\textrm{polynomial of }X^{[1]})\epsilon^i.\label{eq-expansion-h}
\end{align}
Thus, the original system given by equations (\ref{eq-theta}) and (\ref{eq-Y-eps}) can be reformulated as a polynomially perturbed linear model
\begin{align*}
  &dX_t^{[1]}=\alpha'(X_t^{[0]})X_t^{[1]}dt+\beta(X_t^{[0]})dV_t,\\
  &dY_t^\epsilon=\left\{ h(X_t^{[0]})+h'(X_t^{[0]})X_t^{[1]}\epsilon+\sum_{i=2}^\infty (\textrm{polynomial of }X^{[1]})\epsilon^i \right\}dt+\sigma(t)dW_t,
\end{align*}
with the conditional expectation expanded as
\begin{align*}
  E[X_t^\epsilon|\mathcal{Y}_t^\epsilon]&=X_t^{[0]}+E[X_t^{[1]}|\mathcal{Y}_t^\epsilon]\epsilon+\frac{1}{2}E[X_t^{[2]}|\mathcal{Y}_t^\epsilon]\epsilon^2+\cdots\\
  &=X_t^{[0]}+E[X_t^{[1]}|\mathcal{Y}_t^\epsilon]\epsilon+\sum_{i=2}^\infty E\bigl[(\textrm{polynomial of }X^{[1]})|\mathcal{Y}_t^\epsilon\bigr]\epsilon^i.
\end{align*}
In what follows, we rewrite $X^{[1]}$ as $X$, and consider the simplified model
\begin{alignat}{2}
  \label{eq-model-sim1}
    & dX_t = aX_t \, dt + b \, dV_t, \qquad & X_0 = 0, \\
  \label{eq-model-sim2}
    & dY_t^\epsilon = (cX_t + \epsilon g(X_t)) \, dt + \sigma \, dW_t, \qquad & Y_0^\epsilon = 0,
\end{alignat}
where $a, b, c \in \mathbb{R}$, $0 < \epsilon < 1$, $\sigma > 0$, and $g$ is a polynomial. 

This model may appear too simplified, but it is sufficient to convey the essential idea of our approach. In particular, the derivation of the equations to calculate the expansion is the same as the original setting.

\subsection{Asymptotic expansion for the polynomially perturbed linear model}
Now, we consider the conditional expectation $E[f(X_t) | \mathcal{Y}_t^\epsilon]$ for the model (\ref{eq-model-sim1}) and (\ref{eq-model-sim2}), where $f$ is a polynomial. To illustrate the core idea, it is enough to consider the case of $f(x)=x$.

Fix $T > 0$, and introduce a new probability measure $Q$ by
\begin{align*}
  &Q(A) \\
  &= E\left[ 1_A \exp\left( 
    -\int_0^T \frac{cX_t + \epsilon g(X_t)}{\sigma^2} \, dW_t 
    - \frac{1}{2} \int_0^T \frac{\{cX_t + \epsilon g(X_t)\}^2}{\sigma^2} \, dt 
  \right) \right]
\end{align*}
for $A \in \mathcal{F}$. This measure is well-defined due to Proposition \ref{prop-Z-martingale}. 

Furthermore, by Proposition \ref{prop-Kallianpur-Striebel}, we can write 
\begin{align}
  \label{eq4-1}
  \begin{split}
    &E[X_t |  \mathcal{Y}_t^\epsilon] \\
    &= \frac{\displaystyle E_Q\left[ 
      X_t \exp\left( 
        \int_0^t \frac{cX_s + \epsilon g(X_s)}{\sigma^2} \, dY_s^\epsilon 
        - \frac{1}{2} \int_0^t \frac{\{cX_s + \epsilon g(X_s)\}^2}{\sigma^2} \, ds 
      \right) 
       \middle|  \mathcal{Y}_t 
    \right]}{\displaystyle E_Q\left[ 
      \exp\left( 
        \int_0^t \frac{cX_s + \epsilon g(X_s)}{\sigma^2} \, dY_s^\epsilon 
        - \frac{1}{2} \int_0^t \frac{\{cX_s + \epsilon g(X_s)\}^2}{\sigma^2} \, ds 
      \right) 
       \middle|  \mathcal{Y}_t^\epsilon
    \right]}
  \end{split}
\end{align}
for $0 \leq t \leq T$, where $E_Q$ denotes the expectation with respect to $Q$. Let us denote this conditional expectation by $m_t^\epsilon$, and consider expanding $m_t^\epsilon$ with respect to $\epsilon$.

For this purpose, define a conditional measure $\tilde{P}_t$ by 
\begin{align}
  \label{def-tilde-P-t}\tilde{P}_t(A)=\frac{\displaystyle E_Q\left[ 1_A\exp\left( \int_0^t \frac{cX_s}{\sigma^2}dY_s^\epsilon-\frac{1}{2}\int_0^t \frac{(cX_s)^2}{\sigma^2}ds \right)\middle|\mathcal{Y}_t^\epsilon \right]}{\displaystyle E_Q\left[ \exp\left( \int_0^t \frac{cX_s}{\sigma^2}dY_s^\epsilon-\frac{1}{2}\int_0^t \frac{(cX_s)^2}{\sigma^2}ds \right)\middle|\mathcal{Y}_t^\epsilon \right]}
\end{align}
for $A \in \mathcal{X}_t\vee \mathcal{Y}_t^\epsilon$, and let $\tilde{E}_t$ be the expectation with respect to $\tilde{P}_t$ (see (\ref{def-tilde-P})).

Then the coefficient of $\epsilon^0$ in the expansion of $m_t^\epsilon$ can be written as
\begin{align*}
  m_t^0=\frac{\displaystyle E_Q\left[ X_t \exp\left( \int_0^t \frac{cX_s}{\sigma^2}dY_s^\epsilon-\frac{1}{2}\int_0^t \frac{(cX_s)^2}{\sigma^2}ds \right)\middle|\mathcal{Y}_t^\epsilon \right]}{\displaystyle E_Q\left[  \exp\left( \int_0^t \frac{cX_s}{\sigma^2}dY_s^\epsilon-\frac{1}{2}\int_0^t \frac{(cX_s)^2}{\sigma^2}ds \right)\middle|\mathcal{Y}_t^\epsilon \right]}=\tilde{E}_t[X_t].
\end{align*}
Furthermore, differentiating (\ref{eq4-1}) yields
\begin{align*}
  &\left.\frac{d}{d\epsilon}m_t^\epsilon\right|_{\epsilon=0}\\
  =&\frac{\displaystyle E_Q\left[ X_t \int_0^t g(X_s)(dY_s^\epsilon-cX_sds) \exp\left( \int_0^t \frac{cX_s}{\sigma^2}dY_s^\epsilon-\frac{1}{2}\int_0^t \frac{(cX_s)^2}{\sigma^2}ds \right)\middle|\mathcal{Y}_t^\epsilon \right]}{\displaystyle E_Q\left[  \exp\left( \int_0^t \frac{cX_s}{\sigma^2}dY_s^\epsilon-\frac{1}{2}\int_0^t \frac{(cX_s)^2}{\sigma^2}ds \right)\middle|\mathcal{Y}_t^\epsilon \right]}\\
  &-\frac{\displaystyle E_Q\left[ X_t  \exp\left( \int_0^t \frac{cX_s}{\sigma^2}dY_s^\epsilon-\frac{1}{2}\int_0^t \frac{(cX_s)^2}{\sigma^2}ds \right)\middle|\mathcal{Y}_t^\epsilon \right]}{\displaystyle E_Q\left[  \exp\left( \int_0^t \frac{cX_s}{\sigma^2}dY_s^\epsilon-\frac{1}{2}\int_0^t \frac{(cX_s)^2}{\sigma^2}ds \right)\middle|\mathcal{Y}_t^\epsilon \right]}\\
  &\times \frac{\displaystyle E_Q\left[  \int_0^t g(X_s)(dY_s^\epsilon-cX_sds) \exp\left( \int_0^t \frac{cX_s}{\sigma^2}dY_s^\epsilon-\frac{1}{2}\int_0^t \frac{(cX_s)^2}{\sigma^2}ds \right)\middle|\mathcal{Y}_t^\epsilon \right]}{\displaystyle E_Q\left[  \exp\left( \int_0^t \frac{cX_s}{\sigma^2}dY_s^\epsilon-\frac{1}{2}\int_0^t \frac{(cX_s)^2}{\sigma^2}ds \right)\middle|\mathcal{Y}_t^\epsilon \right]}\\
  =&\tilde{E}_t\left[ X_t\int_0^t g(X_s)(dY_s^\epsilon-cX_sds) \right]-\tilde{E}_t\left[ X_t \right]\tilde{E}_t\left[ \int_0^t g(X_s)(dY_s^\epsilon-cX_sds) \right],
\end{align*}
which gives the coefficient of $\epsilon^1$. Here, note that \( Y^\epsilon \) is the given observation and is treated as fixed; in particular, it is not differentiated.

In order to find higher order coefficients, write
\begin{align*}
  &\exp\left( \int_0^t \frac{cX_s+\epsilon g(X_s)}{\sigma^2}dY_s^\epsilon-\frac{1}{2}\int_0^t\frac{\{cX_s+\epsilon g(X_s)\}^2}{\sigma^2} ds \right)\\
  =& \exp\left( \int_0^t \frac{cX_s}{\sigma^2}dY_s^\epsilon-\frac{1}{2}\int_0^t \frac{(cX_s)^2}{\sigma^2}ds \right)\\
  &\times \exp\left( \epsilon \int_0^t \frac{g(X_s)}{\sigma^2}(dY_s^\epsilon-cX_sds)-\frac{1}{2}\epsilon^2\int_0^t \frac{g(X_s)^2}{\sigma^2}ds \right).
\end{align*}
Here, set
\begin{align*}
  K_t^\epsilon=\exp\left( \epsilon \int_0^t \frac{g(X_s)}{\sigma^2}(dY_s^\epsilon-cX_sds)-\frac{1}{2}\epsilon^2\int_0^t \frac{g(X_s)^2}{\sigma^2}ds \right).
\end{align*}
Then $K_t^\epsilon$ is an exponential local martingale, and we have the expansion
\begin{align*}
  K_t^\epsilon=&1+\frac{\epsilon}{\sigma^2}\int_0^t K_s^\epsilon g(X_s)(dY_s^\epsilon-cX_sds)\\
  =&1+\frac{\epsilon}{\sigma^2}\int_0^t g(X_s)(dY_s^\epsilon-cX_sds)\\
  &+\frac{\epsilon^2}{\sigma^4}\int_0^t K_u^\epsilon g(X_u)(dY_u^\epsilon-cX_udu)g(X_s)(dY_s^\epsilon-cX_sds)\\
  =&\cdots\\
  =&1+\sum_{i=1}^n\frac{\epsilon^i}{\sigma^{2i}}\int_0^t\int_0^{t_i}\cdots \int_0^{t_2}g(X_{t_1})\cdots g(X_{t_{n}})\\
  &\times (dY_{t_1}^\epsilon-cX_{t_1}dt_1)\cdots (dY_{t_i}^\epsilon-cX_{t_i}dt_i)\\
  &+\frac{\epsilon^{n+1}}{\sigma^{2(n+1)}}\int_0^t\int_0^{t_{n+1}}\cdots \int_0^{t_2}K_{t_1}^\epsilon g(X_{t_1})\cdots g(X_{t_{n+1}})\\
  &\times (dY_{t_1}^\epsilon-cX_{t_1}dt_1)\cdots (dY_{t_{n+1}}^\epsilon-cX_{t_{n+1}}dt_{n+1})
\end{align*}
for any $n \in \mathbb{N}$. The last term is shown to be $O(\epsilon^{n+1})$ in some sense.
Therefore, (\ref{eq4-1}) can be expanded as
\begin{align}
  E[X_t|\mathcal{Y}_t^\epsilon]=&\frac{\displaystyle E_Q\left[ X_t K_t^\epsilon\exp\left( \int_0^t \frac{cX_s}{\sigma^2}dY_s^\epsilon-\frac{1}{2}\int_0^t \frac{(cX_s)^2}{\sigma^2}ds \right)\middle|\mathcal{Y}_t^\epsilon \right]}{\displaystyle E_Q\left[ K_t^\epsilon\exp\left( \int_0^t \frac{cX_s}{\sigma^2}dY_s^\epsilon-\frac{1}{2}\int_0^t \frac{(cX_s)^2}{\sigma^2}ds \right)\middle|\mathcal{Y}_t^\epsilon \right]}\nonumber\\
  =&\frac{\displaystyle \frac{\displaystyle E_Q\left[ X_t K_t^\epsilon\exp\left( \int_0^t \frac{cX_s}{\sigma^2}dY_s^\epsilon-\frac{1}{2}\int_0^t \frac{(cX_s)^2}{\sigma^2}ds \right)\middle|\mathcal{Y}_t^\epsilon \right]}{\displaystyle E_Q\left[ \exp\left( \int_0^t \frac{cX_s}{\sigma^2}dY_s^\epsilon-\frac{1}{2}\int_0^t \frac{(cX_s)^2}{\sigma^2}ds \right)\middle|\mathcal{Y}_t^\epsilon \right]}}{\displaystyle \frac{\displaystyle E_Q\left[ K_t^\epsilon\exp\left( \int_0^t \frac{cX_s}{\sigma^2}dY_s^\epsilon-\frac{1}{2}\int_0^t \frac{(cX_s)^2}{\sigma^2}ds \right)\middle|\mathcal{Y}_t^\epsilon \right]}{\displaystyle E_Q\left[ \exp\left( \int_0^t \frac{cX_s}{\sigma^2}dY_s^\epsilon-\frac{1}{2}\int_0^t \frac{(cX_s)^2}{\sigma^2}ds \right)\middle|\mathcal{Y}_t^\epsilon \right]}}\nonumber\\
  \label{eq4-5}\begin{split}
    =&\frac{\tilde{E}_t[X_tK_t^\epsilon]}{\tilde{E}_t[K_t^\epsilon]}\\
  =&J_t^0(X_t) + \left\{ J_t^1(X_t) - J_t^0(X_t) J_t^1(1) \right\} \epsilon \\
  &+ \left\{ J_t^2(X_t) - J_t^0(X_t) J_t^2(1) - J_t^1(X_t) J_t^1(1) + J_t^0(X_t) J_t^1(1)^2 \right\} \epsilon^2 \\
  &+ \cdots + \sum_{k=0}^n \sum_{p=1}^{n-k} \sum_{\substack{i_1,\cdots,i_p \in \mathbb{N} \\ i_1+\cdots+i_p=n-k}} (-1)^p J_t^k(X_t) J_t^{i_1}(1) \cdots J_t^{i_p}(1) \epsilon^n\\
  &+O(\epsilon^{n+1}),
  \end{split}  
\end{align}
where
\begin{align}
  \label{eq4-3}\begin{split}
    J_t^i(U) = \frac{1}{\sigma^{2n}} \tilde{E}_t\Biggl[ &U \int_0^t \int_0^{t_i} \cdots \int_0^{t_2} g(X_{t_1}) \cdots g(X_{t_i}) \\
  &\times (dY_{t_1}^\epsilon - cX_{t_1}  dt_1) \cdots (dY_{t_i}^\epsilon - cX_{t_i}  dt_i) \Biggr]
  \end{split}    
\end{align}
for a random variable $U$. Thus, the expansion of $E[X_t|\mathcal{Y}_t^\epsilon]$ is reduced to computing the (conditional) expectation (\ref{eq4-3}).

According to Theorem \ref{main-theorem-linear}, the law of $\{X_s\}_{0\leq s\leq t}$ under $\tilde{P}_t$ in (\ref{def-tilde-P-t}) is a Gaussian process, so if $g$ is a polynomial, (\ref{eq4-3}) can be expressed using 
\begin{align*}
 \mu_{s;t}=\tilde{E}_t[X_s],~~\mathrm{Cov}_{\tilde{P}_t}(X_{s},X_{u})=\gamma(s,u;t). 
\end{align*}
For instance, when $g(x)=x^3$, we have
\begin{align*}
  J_t^0(X_t)=&\tilde{E}_t[X_t]=\mu_{t;t},\\
  J_t^1(X_t)=&\tilde{E}_t\left[ X_t\int_0^t X_s^3 (dY_s^\epsilon - cX_sds) \right]\nonumber\\
  =&\int_0^t \left( \tilde{E}_t\left[ X_tX_s^3 \right] dY_s^\epsilon - c\tilde{E}_t\left[ X_tX_s^4 \right]ds\right)\nonumber\\
    =&\mu_{t;t}\int_0^t \mu_{s;t}^3 (dY_s^\epsilon-c\mu_{s;t}ds)
    +3\mu_{t;t}\int_0^t \mu_{s;t}\gamma(s,s;t) (dY_s^\epsilon-c\mu_{s;t}ds)\\
  &+3\int_0^t \mu_{s;t}^2 \gamma(s,t;t) (dY_s^\epsilon-c\mu_{s;t}ds)\\
  &-c\int_0^t \mu_{s;t}^3 \gamma(s,t;t)ds
  -3c\mu_{t;t}\int_0^t \mu_{s;t}^2 \gamma(s,t;t)ds\\
  &+3\int_0^t \gamma(s,t;t)\gamma(s,s;t) (dY_s^\epsilon-c\mu_{s;t}ds)\\
  &-6c\int_0^t \mu_{s;t}\gamma(s,t;t)\gamma(s,s;t)ds\\
  &-3c\mu_{t;t}\int_0^t \gamma(s,t;t)\gamma(s,s;t)ds,\\
  J_t^0(1)=&\tilde{E}_t\left[ \int_0^t X_s^3 (dY_s^\epsilon - cX_s \, ds) \right]\nonumber\\
  =&\int_0^t \left( \tilde{E}_t\left[ X_s^3 \right] dY_s^\epsilon - c\tilde{E}_t\left[ X_s^4 \right]ds\right)\nonumber\\
    =&\int_0^t \mu_{s;t}^3 (dY_s^\epsilon-c\mu_{s;t}ds)+3\int_0^t \mu_{s;t}\gamma(s,s;t) (dY_s^\epsilon-c\mu_{s;t}ds)\\
  &-3c\int_0^t \mu_{s;t}^2 \gamma(s,t;t)ds-3c\int_0^t \gamma(s,t;t)\gamma(s,s;t)ds.
\end{align*}
Using this, the coefficient of $\epsilon^1$ in (\ref{eq4-5}) is represented as 
\begin{align}
  &J_t^1(X_t) - J_t^0(X_t) J_t^1(1) \nonumber \\
  \label{eq4-6}\begin{split}
    =& 3\int_0^t \mu_{s;t}^2 \gamma(s,t;t) (dY_s^\epsilon - c\mu_{s;t} ds)
      - c \int_0^t \mu_{s;t}^3 \gamma(s,t;t) \, ds \\
      &+ 3\int_0^t \gamma(s,t;t) \gamma(s,s;t) (dY_s^\epsilon - c\mu_{s;t}ds) \\
      &- 6c \int_0^t \mu_{s;t} \gamma(s,t;t) \gamma(s,s;t) \, ds.
  \end{split}      
\end{align}
Here, for numerical stability, we preserved $dY_s^\epsilon - c\mu_{s;t} \, ds$ without fully expanding integrals. Note also that some of the stochastic integrals above are not well-defined, since $\{\mu_{s;t}\}_{0 \leq s \leq t}$ are not adapted (they depend on the full history of $\{Y_s^\epsilon\}_{0 \leq s \leq t}$). However, we can justify the procedure by appropriately interpreting the integrals.

Anyway, $J_t^i(X_t)$ and $J_t^i(1)$ are expanded into a sum of terms of the form
\begin{align*}
  A_t(p,q,r,\alpha;n)
    =&\int_0^t\int_0^{t_n}\cdots \int_0^{t_{2}}\prod_{i=1}^n(\mu_{t_i;t})^{p_i}\prod_{i=1}^n \mathrm{Cov}(\xi_{t;t},\xi_{t_i;t})^{q_{i}}\\
    &\times \prod_{1\leq i\leq j\leq n}\mathrm{Cov}(\xi_{t_i;t},\xi_{t_j;t})^{r_{ij}}d\lambda_{t_1;t}^{(\alpha_1)}\cdots d\lambda_{t_n;t}^{(\alpha_n)},
\end{align*}
where $n \in \mathbb{N}$, $p=(p_i)_{i=1,\cdots,n} \in \mathbb{Z}_+^n,~q=(q_i)_{i=1,\cdots,n} \in \mathbb{Z}_+^n$, $r=(r_{ij})_{i,j=1,\cdots,n} \in M_n(\mathbb{Z}_+)$, $\alpha=(\alpha_i)_{i=1,\cdots,n} \in \{0,1\}^n$, and
\begin{align*}
  d\lambda_{s;t}^{(i)}=\begin{cases}
    ds&(i=0)\\
    dY_s^\epsilon-c\mu_{s;t}ds&(i=1).
  \end{cases}
\end{align*}
For example, (\ref{eq4-6}) can be written as
\begin{align*}
  3A_t(2,1,0,1;1)-cA_t(3,1,0,0;1)+3A_t(0,1,1,1;1)-6cA_t(1,1,1,0;1).
\end{align*}
To derive a recursive formula for $A_t(p,q,r,\alpha;n)$, recall the equations derived in Section \ref{section-linear-equations} ($\gamma(s,u;t)$ is equivalent to $\mathrm{Cov}(\xi_{s;t},\xi_{u;t})$ in the previous section): 
\begin{align}
  \label{eq4-7}&d_t\mu_{t;t}=a\mu_{t;t}dt+ \frac{c\gamma(t)}{\sigma^2}(dY_t^\epsilon-c\mu_{t;t}dt),\\  
  &\label{eq4-8}d_t\mu_{s;t}=\frac{c}{\sigma^2}\gamma(s,t;t)(dY_t^\epsilon-c\mu_{t;t}dt),\\
  \label{eq4-9}&d_t\gamma(s,t;t)=\left\{ a-\frac{c^2\gamma(t)}{\sigma^2} \right\}\gamma(s,t;t)dt,\\
  \label{eq4-10}&d_t\gamma(s,u;t)=-\frac{c^2}{\sigma^2}\gamma(s,t;t)
    \gamma(u,t;t)dt,
\end{align}
where $\gamma(t)=\gamma(t,t;t)$ is the solution of
\begin{align}
  \label{eq4-11}\frac{d}{dt}\gamma(t)=-\frac{c^2}{\sigma^2}\gamma(t)^2+2a\gamma(t)+b^2.
\end{align}
As mentioned above, the stochastic integral defining $A_t(p,q,r,\alpha;n)$ may not be interpreted as an It\^o integral, but it can be differentiated in accordance with the following rule, which is similar to It\^o's rule:\vspace{10pt}\\
{\bf Rule:} If $\alpha_{s;t}$ satisfies 
\begin{align*}
  d_t\alpha_{s;t}=p_{s;t}dt+q_{s;t}dY_t^\epsilon,
\end{align*}
then
\begin{align}
  \label{eq-derivative-rule}d_t\int_0^t \alpha_{s;t}dY_s^\epsilon=\alpha_{t;t}dY_t^\epsilon+\int_0^t p_{s;t}dY_s^\epsilon\,dt+\int_0^t q_{s;t}dY_s^\epsilon\,dY_t^\epsilon+\sigma^2q_{t;t}dt.
\end{align}
Here, the final term results from the Itô correction term $q_{t;t}dY_t\times dY_t$.\vspace{10pt}

For example, let us consider the derivative of 
\begin{align*}
  A_t(1,0,0,1;1)=\int_0^t \mu_{s;t}(dY_s^\epsilon-c\mu_{s;t}ds)=\int_0^t \mu_{s;t}dY_s^\epsilon-c\int_0^t\mu_{s;t}^2ds.
\end{align*}
By (\ref{eq4-8}) and applying (\ref{eq-derivative-rule}) yields
\begin{align*}
  d_t\int_0^t \mu_{s;t}dY_s^\epsilon=\mu_{t,t}dY_t^\epsilon+\frac{c}{\sigma^2}\int_0^t \gamma(s,t;t) dY_s^\epsilon (dY_t^\epsilon-c\mu_{t;t}dt)+c\gamma(t)dt.
\end{align*}
and
\begin{align*}
  d_t\int_0^t\mu_{s;t}^2ds=\mu_{t;t}^2dt+2\frac{c}{\sigma^2}\int_0^t\mu_{s;t}\gamma(s,t;t)ds(dY_t^\epsilon-c\mu_{t;t}dt).
\end{align*}
Thus, we obtain the equation
\begin{align*}
  dA_t(1,0,0,1;1)=&\mu_{t;t}(dY_t^\epsilon-c\mu_{t;t}dt)\\
  &+\frac{c}{\sigma^2}\int_0^t \gamma(s,t;t)(dY_s-c\mu_{s;t}ds)\,(dY_t^\epsilon-c\mu_{t;t}dt)\\
  &-\frac{c^2}{\sigma^2}\int_0^t \mu_{s;t}\gamma(s,t;t)ds\,(dY_t^\epsilon-c\mu_{t;t}dt)\\
  &-\frac{c^2}{\sigma^2}\int_0^t \gamma(s,t;t)^2ds\,dt+c\gamma(t)dt\\
  =&\mu_{t;t}(dY_t^\epsilon-c\mu_{t;t}dt)\\
  &+\frac{c}{\sigma^2}A_t(0,1,0,1;1)(dY_t^\epsilon-c\mu_{t;t}dt)\\
  &-\frac{c^2}{\sigma^2}A_t(1,1,0,0;1)(dY_t^\epsilon-c\mu_{t;t}dt)\\
  &-\frac{c^2}{\sigma^2}A_t(0,2,0,0;1)dt+c\gamma(t)dt.
\end{align*} 
In the same way, we have
\begin{align*}
  dA_t(0,1,0,1;1)=&\left\{ a-\frac{c^2\gamma(t)}{\sigma^2} \right\}\int_0^t \gamma(s,t;t)(dY_s^\epsilon-c\mu_{s;t}ds)\,dt\\
  &-\frac{c^2}{\sigma^2}\int_0^t \gamma(s,t;t)^2ds\,(dY_t^\epsilon-c\mu_{t;t}dt),\\
  =&\left\{ a-\frac{c^2\gamma(t)}{\sigma^2} \right\}A_t(0,1,0,1;1)dt\\
  &-\frac{c^2}{\sigma^2}A_t(0,2,0,0;1)(dY_t^\epsilon-c\mu_{t;t}dt),\\
  dA_t(1,1,0,0;1)=&\left\{ a-\frac{c^2\gamma(t)}{\sigma^2} \right\}A_t(1,1,0,0;1)\\
  &+\frac{c}{\sigma^2}\int_0^t \gamma(s,t;t)^2ds\,(dY_t^\epsilon-c\mu_{t;t}dt),\\
  =&\left\{ a-\frac{c^2\gamma(t)}{\sigma^2} \right\}dA_t(1,1,0,0;1)\\
  &+\frac{c}{\sigma^2}A_t(0,2,0,0;1)(dY_t^\epsilon-c\mu_{t;t}dt),\\
  dA_t(0,2,0,0;1)=&2\left\{ a-\frac{c^2\gamma(t)}{\sigma^2} \right\}\int_0^t \gamma(s,t;t)^2ds\,dt\\
  =&2\left\{ a-\frac{c^2\gamma(t)}{\sigma^2} \right\}A_t(0,2,0,0;1)dt.
\end{align*}
Therefore, $A_t(1,0,0,1;1)$ can be computed by solving the four-dimensional stochastic differential equation
\begin{align}
  dA_t(1,0,0,1;1)=&\mu_{t;t}(dY_t^\epsilon-c\mu_{t;t}dt)+\frac{c}{\sigma^2}A_t(0,1,0,1;1)(dY_t^\epsilon-c\mu_{t;t}dt)\nonumber\\
  &-\frac{c^2}{\sigma^2}A_t(1,1,0,0;1)(dY_t^\epsilon-c\mu_{t;t}dt)\nonumber\\
  &-\frac{c^2}{\sigma^2}A_t(0,2,0,0;1)dt+c\gamma(t)dt,\label{eq-ode-1}\\
  dA_t(0,1,0,1;1)=&\left\{ a-\frac{c^2\gamma(t)}{\sigma^2} \right\}A_t(0,1,0,1;1)dt\nonumber\\
  &-\frac{c^2}{\sigma^2}A_t(0,2,0,0;1)(dY_t^\epsilon-c\mu_{t;t}dt),\label{eq-ode-2}\\
  dA_t(1,1,0,0;1)=&\left\{ a-\frac{c^2\gamma(t)}{\sigma^2} \right\}dA_t(1,1,0,0;1)\nonumber\\
  &+\frac{c}{\sigma^2}A_t(0,2,0,0;1)(dY_t^\epsilon-c\mu_{t;t}dt),\label{eq-ode-3}\\
  dA_t(0,2,0,0;1)=&2\left\{ a-\frac{c^2\gamma(t)}{\sigma^2} \right\}A_t(0,2,0,0;1)dt,\label{eq-ode-4}
\end{align}
where $\mu_{t;t}$ and $\gamma(t)$ is given by (\ref{eq4-7}) and (\ref{eq4-11}). A numerical solution to this equation can be recursively computed using the Euler-Maruyama method.

Similarly, $A_t(p,q,r,\alpha;n)$ is eventually reduced to the case of $p=r=\alpha=0$ by repeatedly differentiating it using (\ref{eq4-8})--(\ref{eq4-10}). A closed system of equations can then be obtained due to (\ref{eq4-9}).

The algorithm can be summarized in the following steps:
\begin{enumerate}
  \item Expand the conditional expectation as in (\ref{eq4-5}).
  \item Due to the Gaussian property of $X$, express (\ref{eq4-3}) in terms of the mean $\mu_{s;t}$ and covariance $\gamma(s,u;t)$, resulting in an expression like (\ref{eq4-6}).
  \item Differentiate each term from the previous step according to (\ref{eq4-8})--(\ref{eq4-10}) and the rule (\ref{eq-derivative-rule}), and derive a finite-dimensional stochastic differential equation to calculate the term.
\end{enumerate}

\subsection{Numerical simulation 1: Linear case}
In the first simulation, we set $g(x)=x$ and consider the following model 
\begin{alignat*}{2}
    & dX_t = aX_t \, dt + b \, dV_t, \qquad & X_0 = 0, \\
    & dY_t^\epsilon = (cX_t + \epsilon X_t ) \, dt + \sigma \, dW_t, \qquad & Y_0 = 0,
\end{alignat*}
In this case, the true value of the conditional expectation $E[X_t|\mathcal{Y}_t^\epsilon]$ can be calculated through the Kalman-Bucy filter, which serves as a benchmark for evaluating the accuracy of our asymptotic expansion.

We simulated a path of $X_t$ and $Y_t^\epsilon$ over the interval $0\leq t\leq 10$ using the parameters
\begin{align*}
  a = -0.4, \quad b = 0.5, \quad c = 1, \quad \sigma = 0.3, \quad \epsilon = 0.2.
\end{align*}
For the simulated path, we expanded the conditional expectation $E[X_t|\mathcal{Y}_t^\epsilon]$ with respect to $\epsilon$ following the algorithm described in the previous section. Figure \ref{fig:linear-plot} shows the plots of this expansion and the true conditional expansion calculated by the Kalman-Bucy filter. presents the plots of this expansion alongside the true conditional expectation computed using the Kalman-Bucy filter. As shown in the figure, the expanded filters converge toward the true line as the order of expansion increases, demonstrating the validity of our algorithm.

\begin{figure}[t]
  \centering
  \begin{subfigure}[b]{\linewidth}
      \centering
      \includegraphics[width=0.80\linewidth]{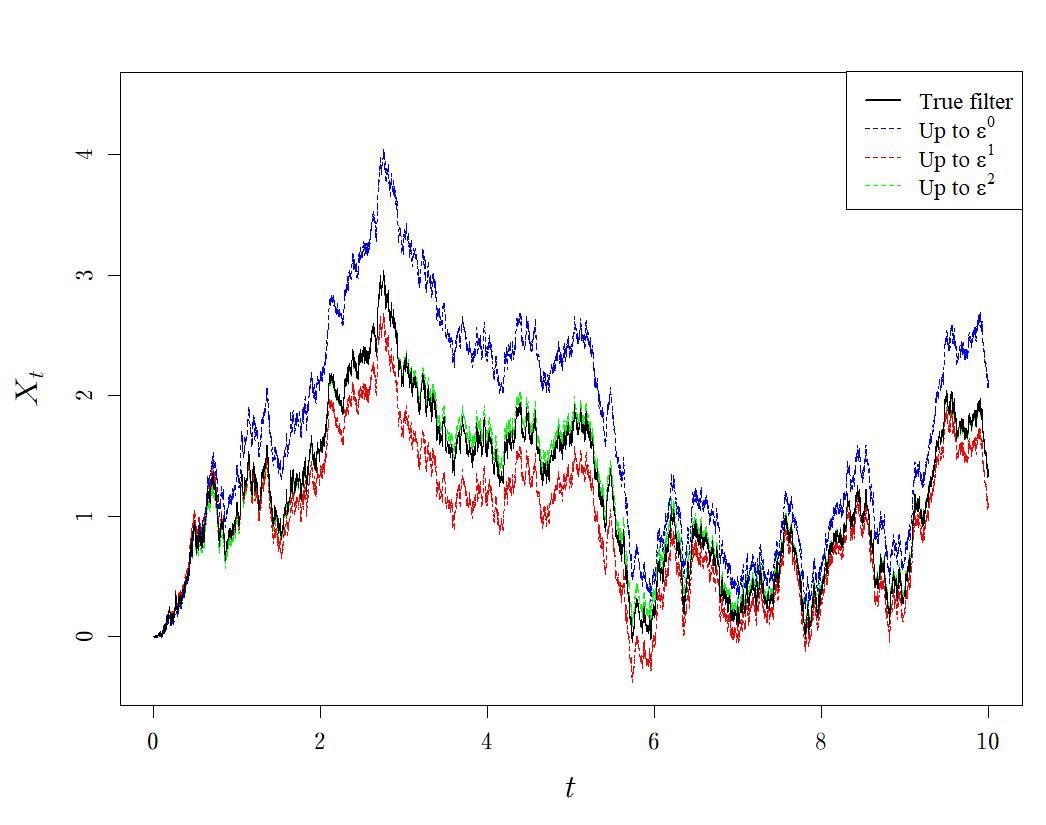}
  \end{subfigure}
  \caption{Plots of the expanded filter and the true filter.}
  \label{fig:linear-plot}
\end{figure}

\subsection{Numerical simulation 2: cubic sensor}
In the second simulation, we consider the case of $g(x)=x^3$: 
\begin{alignat*}{2}
    & dX_t = aX_t \, dt + b \, dV_t, \qquad & X_0 = 0, \\
    & dY_t^\epsilon = (cX_t + \epsilon X_t^3 ) \, dt + \sigma \, dW_t, \qquad & Y_0 = 0,
\end{alignat*}
This model is referred to as the cubic sensor, which serves as a benchmark in several studies \citep{armstrong2019optimal}.

In this case, the true filter cannot be computed to directly evaluate the performance of the expansion. Instead, we generated 1,000 sample paths of $X_t$ and $Y_t$ on the interval $0 \leq t \leq 100$, aand computed the asymptotic expansion for each path. To assess the accuracy of the expansion, we evaluated the integrated squared error for each path, defined as
\begin{align*}
    \int_0^{100} (X_t - N_t^{[n],\epsilon})^2 \, dt,
\end{align*}
where 
\begin{align*}
  N_t^{[n],\epsilon}=&J_t^0(X_t) + \left\{ J_t^1(X_t) - J_t^0(X_t) J_t^1(1) \right\} \epsilon \\
  &+ \left\{ J_t^2(X_t) - J_t^0(X_t) J_t^2(1) - J_t^1(X_t) J_t^1(1) + J_t^0(X_t) J_t^1(1)^2 \right\} \epsilon^2 \\
  &+ \cdots + \sum_{k=0}^n \sum_{p=1}^{n-k} \sum_{\substack{i_1,\cdots,i_p \in \mathbb{N} \\ i_1+\cdots+i_p=n-k}} (-1)^p J_t^k(X_t) J_t^{i_1}(1) \cdots J_t^{i_p}(1) \epsilon^n
\end{align*}
is the $n$-th order approximation (see (\ref{eq4-5})). Due to the $L^2$-optimality of the conditional expectation, the true filter is expected to minimize the mean of integrated squared error.

Here, we used the following parameters:
\begin{align*}
    a = -0.4, \quad b = 0.5, \quad c = 1, \quad \sigma = 0.3, \quad \epsilon = 0.2,
\end{align*}
and simulations were performed with a step size of 0.01. 

Table \ref{table-errors} shows the minimum, median, mean, and maximum of the integrated squared error for the 1,000 simulated paths. Note that $N_t^{[0],\epsilon} = \mu_{t,t}$ corresponds to the extended Kalman filter (linear approximation).
\begin{table}[h]
  \renewcommand{\arraystretch}{1.3}
  \begin{center}
    \caption{\label{table-errors}Integrated squared errors of each filter.}
   \begin{tabular}{|l||c|c|c|}\hline
     Filter & $\mu_{t;t}$ &$N_t^{[1],\epsilon}$&$N_t^{[2],\epsilon}$ \\\hline \hline
     Min. & 8.200 &7.957&7.912\\\hline
     Median & 10.91 &10.73&11.19\\\hline
     Mean&10.98&10.76&25.01\\\hline
     Max. &16.87&35.60&9215\\\hline
   \end{tabular}
  \end{center}
 \end{table}
 We can observe the superiority of the asymptotic expansion over the extended Kalman filter in terms of the minimum and median of $N_t^{[1],\epsilon}$, but it does not perform as well in terms of the mean and maximum.

\begin{figure}[htbp]
  \centering
  \begin{subfigure}[b]{\linewidth}
      \centering
      \includegraphics[width=0.99\linewidth]{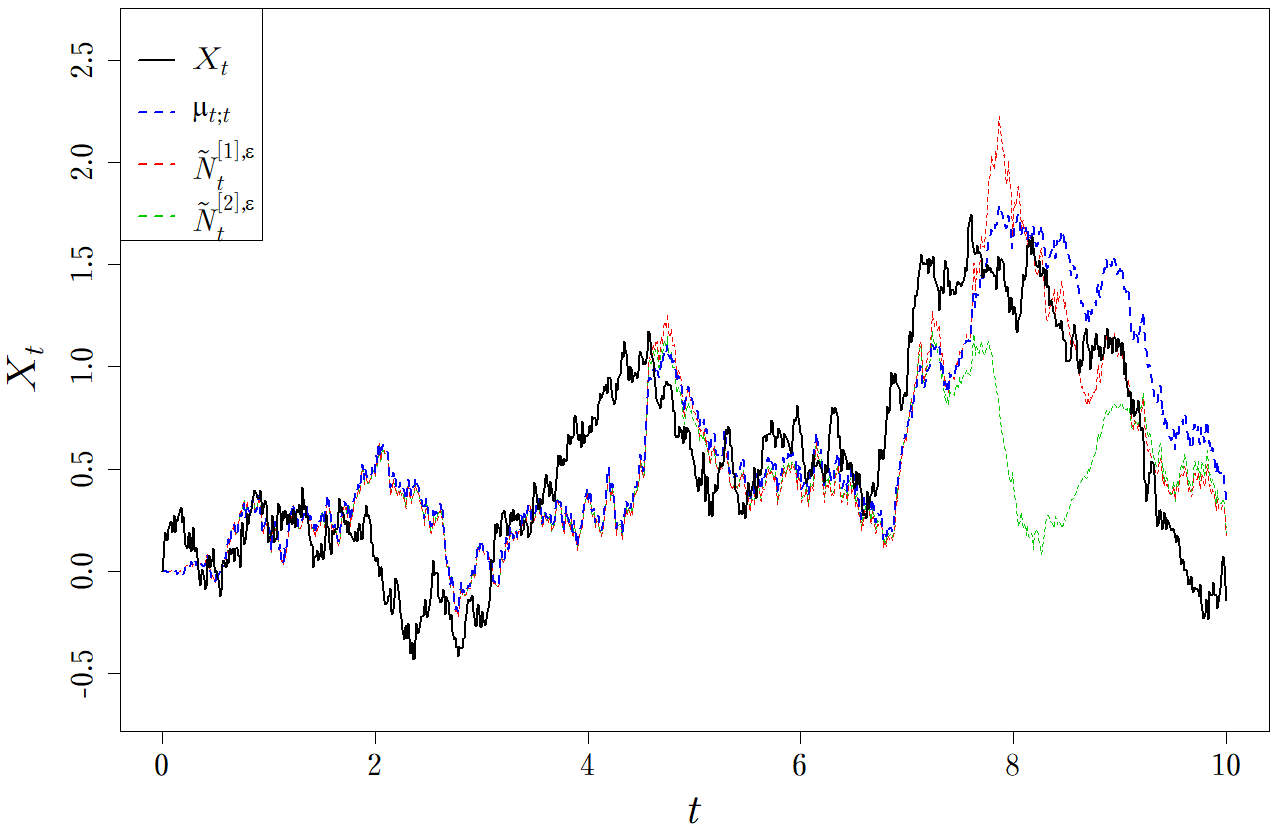}
  \end{subfigure}
  \vspace{0.2cm}

  \begin{subfigure}[b]{\linewidth}
      \centering
      \includegraphics[width=0.99\linewidth]{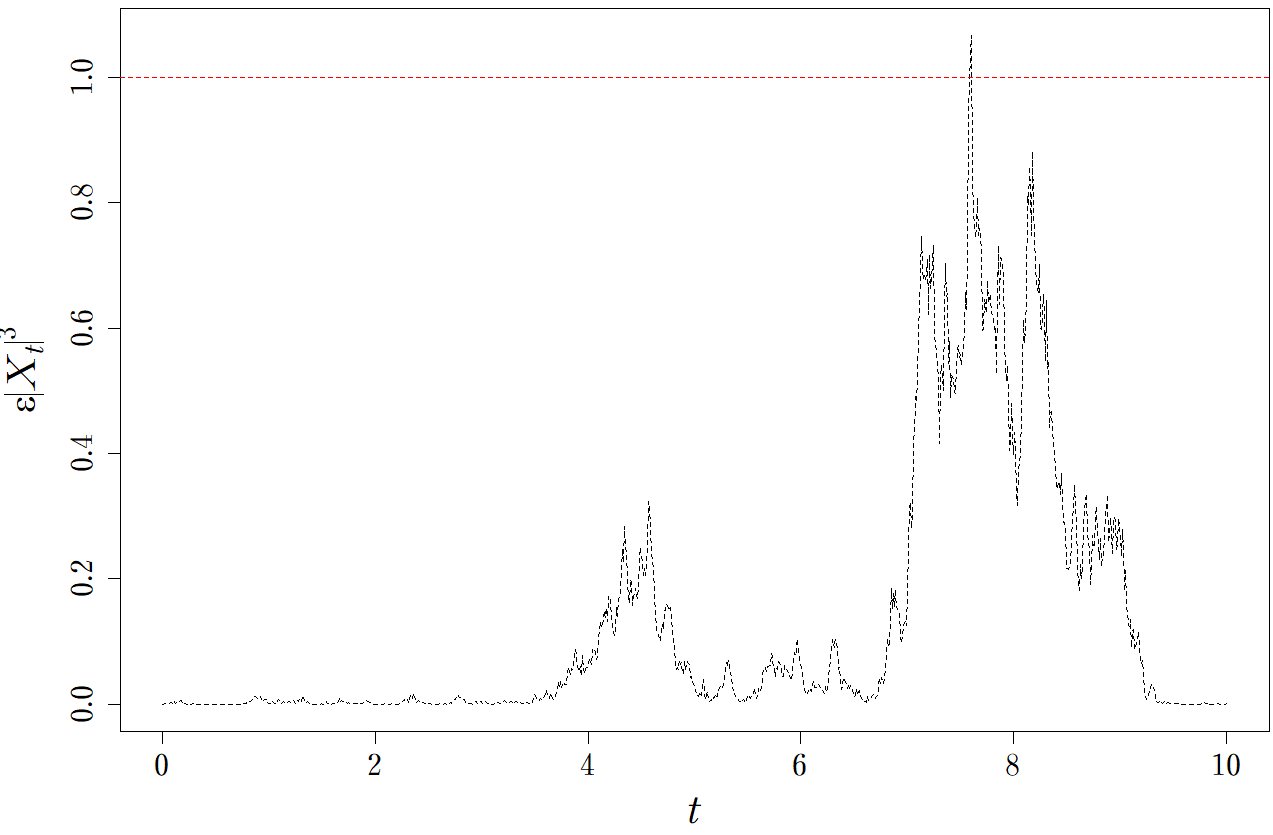}
  \end{subfigure}

  \caption{Plots of the filtering results and the perturbation term for a single simulated path.}
  \label{fig:filter-plots}
\end{figure}

This result can be explained by Figure \ref{fig:filter-plots}, which shows the plots of \(X_t\), \(\mu_{t;t}\), \(N_t^{[1],\epsilon}\), \(N_t^{[2],\epsilon}\), and \(\epsilon |X_t|^3\) for a single simulated path for the interval \(0 \leq t \leq 10\). From the figure, we observe that \(N_t^{[1],\epsilon}\) and \(N_t^{[2],\epsilon}\) improve upon the extended Kalman filter \(\mu_{t;t}\) over much of the interval, but \(N_t^{[2],\epsilon}\) begins to deviate from the true process before $t = 8$.

Looking at the lower panel, we find that this deviation aligns with the moments when the absolute value of the perturbation term \(\epsilon |X_t|^3\) exceeds 1. This behavior is expected, as our expansion is performed with respect to \(\epsilon {X_t}^3\), rather than just \(\epsilon\). Consequently, the expansion diverges when \(\epsilon |X_t|^3 > 1\). However, the upper panel suggests that this effect is not persistent, as the path of \(N_t^{[2],\epsilon}\) returns closer to the true path as soon as \(\epsilon |X_t|^3\) falls below 1. Hence, the minimum and median values in Table \ref{table-errors} capture the convergence of the asymptotic expansion when \(\epsilon |X_t|^3 < 1\), whereas the maximum and mean values reflect its divergence when \(\epsilon |X_t|^3 > 1\). 

To manage the divergence of the expansion, we propose a modification to the coefficients. Let us write \(N_t^{[n],\epsilon}\) as
\[
N_t^{[n],\epsilon} = n_t^{[0]} + n_t^{[1]}\epsilon + n_t^{[2]}\epsilon^2 + \cdots + n_t^{[n]}\epsilon^n,
\]
where the coefficients are given by:
\begin{align*}
  &n_t^{[0]} = J_t^0(X_t), \\
&n_t^{[1]} = J_t^1(X_t) - J_t^0(X_t)J_t^1(1), \quad\\
&n_t^{[2]}=J_t^2(X_t) - J_t^0(X_t) J_t^2(1) - J_t^1(X_t) J_t^1(1) + J_t^0(X_t) J_t^1(1)^2.
\end{align*}
As discussed above, the term \(n_t^{[i]}\epsilon^i\) either converges or diverges at an exponential rate, depending on the value of \(\epsilon |X_t|^3\). To suppress the divergence and ensure exponential decay, we fix a parameter \(r > 0\) and define the modified coefficients \(\tilde{n}_t^{[i]}\) as follows:
\[
\tilde{n}_t^{[0]} = n_t^{[0]}, \quad
\tilde{n}_t^{[i]} =
\begin{cases} 
n_t^{[i]} & \text{if } |n_t^{[i]}\epsilon^i| \leq r |\tilde{n}_t^{[i-1]}\epsilon^{i-1}|, \\
\displaystyle r |\tilde{n}_t^{[i-1]}\epsilon^{i-1}|\frac{n_t^{[i]}}{|n_t^{[i]}\epsilon^i|} & \text{if } |n_t^{[i]}\epsilon^i| > r |\tilde{n}_t^{[i-1]}\epsilon^{i-1}|,
\end{cases}
\]
for \(i = 1, 2, \cdots\). Using these coefficients, we define a modified filter by
\[
\tilde{N}_t^{[i],\epsilon} = \tilde{n}_t^{[0]} + \tilde{n}_t^{[1]}\epsilon + \tilde{n}_t^{[2]}\epsilon^2 + \cdots + \tilde{n}_t^{[i]}\epsilon^i.
\]
By construction, we have \(|\tilde{n}_t^{[i]}\epsilon^i| \leq r |\tilde{n}_t^{[i-1]}\epsilon^{i-1}|\), which guarantees the convergence of \(\tilde{N}_t^{[i],\epsilon}\) for all \(t \geq 0\) if \(0 < r < 1\).

\begin{figure}[h]
  \centering
      \includegraphics[width=0.70\linewidth]{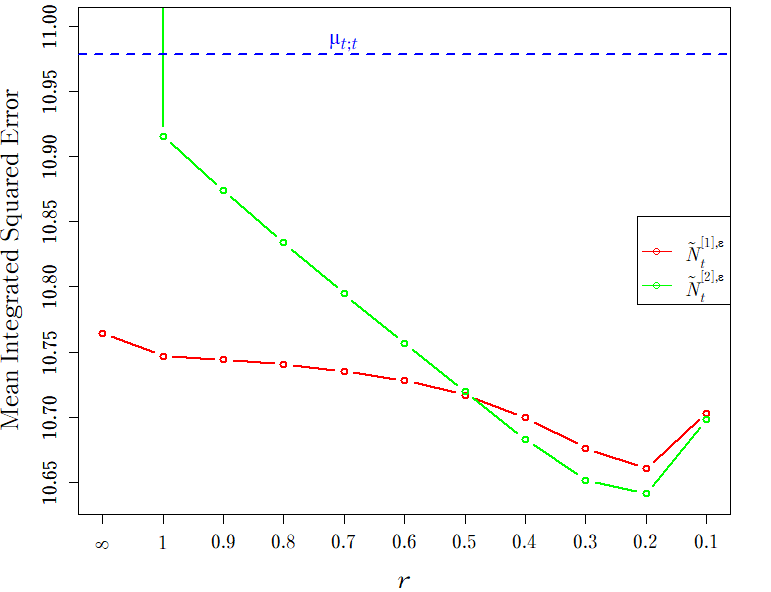}
  \caption{Mean integrated squared errors of the modified filters.}
  \label{fig:MISE-plot}
\end{figure}
\begin{figure}[h!]
  \centering
      \includegraphics[width=0.90\linewidth]{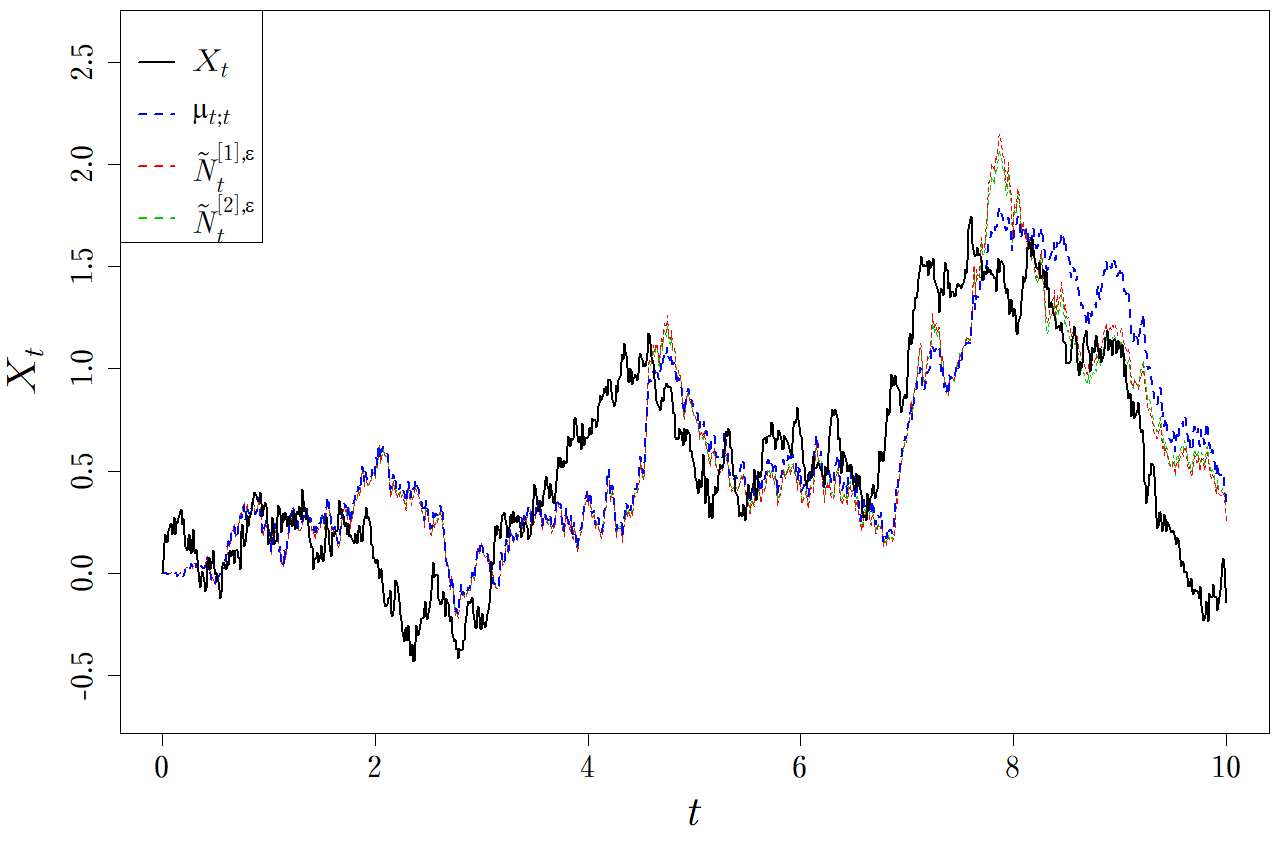}
  \caption{The modified filter of $r=0.2$ for the same simulated path as Figure \ref{fig:filter-plots}.}
  \label{fig:modified-filter-plot}
\end{figure}

Figure \ref{fig:MISE-plot} shows the mean integrated squared error of $\tilde{N}_t^{[1],\epsilon}$ and $\tilde{N}_t^{[2],\epsilon}$ for $r = 0.1, 0.2, \ldots, 1$ and $r = \infty$, based on the 1,000 simulated paths. Note that $\tilde{N}_t^{[i],\epsilon} = N_t^{[i],\epsilon}$ when $r = \infty$. From this result, we observe that the accuracy improves with the modification, and the error decreases as $r$ becomes smaller. The errors of $\tilde{N}_t^{[1],\epsilon}$ and $\tilde{N}_t^{[2],\epsilon}$ both attain their minimum at $r = 0.2$. Figure \ref{fig:modified-filter-plot} shows the modified expansions with $r = 0.2$ for the same path as in Figure \ref{fig:filter-plots}. From this figure, we observe that $\tilde{N}_t^{[1],\epsilon}$ is closer to the true path than $\mu_{t;t}$, and $\tilde{N}_t^{[2],\epsilon}$ slightly improves upon $\tilde{N}_t^{[1],\epsilon}$.

The optimal value of $r$ coincides with $\epsilon = 0.2$ in this case, but they are not necessarily related. In fact, as shown in Figure \ref{fig:plots-for-other-epsilons}, the error is minimized around $r = 0.3$ regardless of the value of $\epsilon$. For each $\epsilon$, we observe that $\tilde{N}_t^{[1],\epsilon}$ and $\tilde{N}_t^{[2],\epsilon}$ can improve upon the linear approximation for sufficiently small $r$, and $\tilde{N}_t^{[2],\epsilon}$ attains a smaller error than $\tilde{N}_t^{[1],\epsilon}$.

\begin{figure}[H]
  \centering
  \begin{subfigure}[b]{0.49\linewidth}
      \centering
      \includegraphics[width=\linewidth]{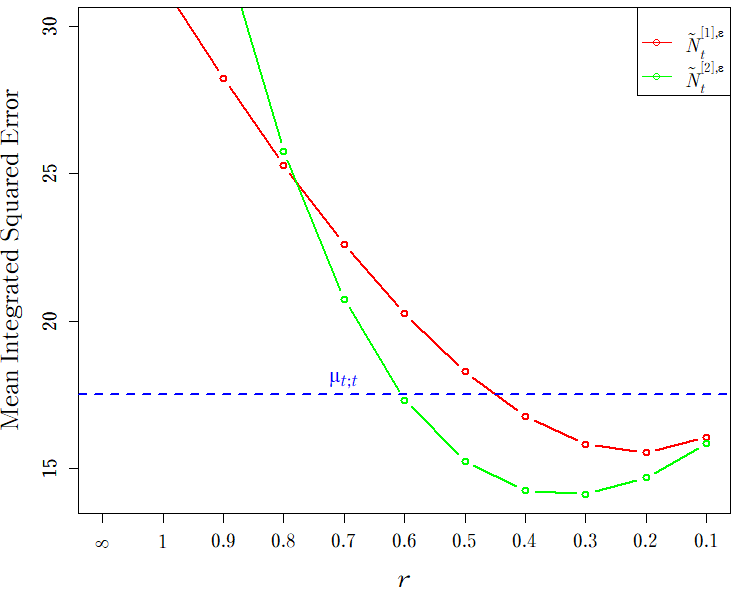}
      \caption{$\epsilon=0.8$}
      \label{fig:doctor-plot5}
  \end{subfigure}
  \hfill
  \begin{subfigure}[b]{0.49\linewidth}
      \centering
      \includegraphics[width=\linewidth]{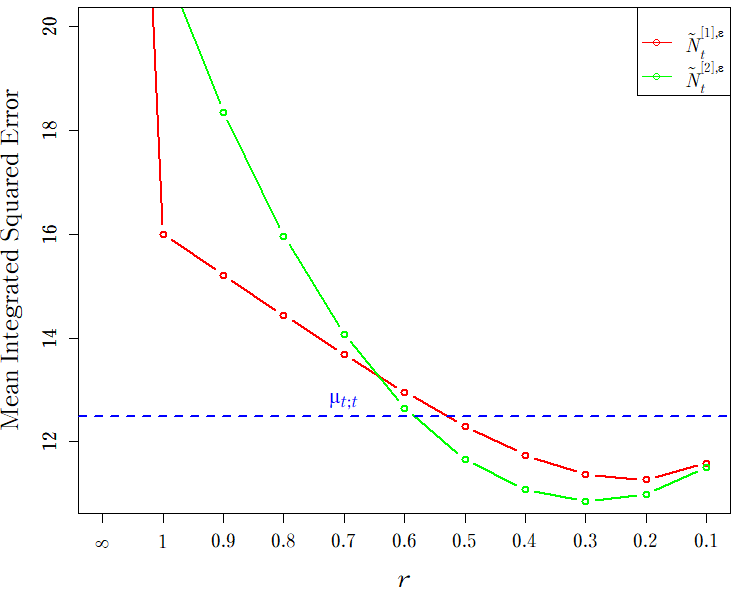}
      \caption{$\epsilon=0.5$}
      \label{fig:doctor-plot6}
  \end{subfigure}

  \vspace{0.5cm} 

  \begin{subfigure}[b]{0.49\linewidth}
      \centering
      \includegraphics[width=\linewidth]{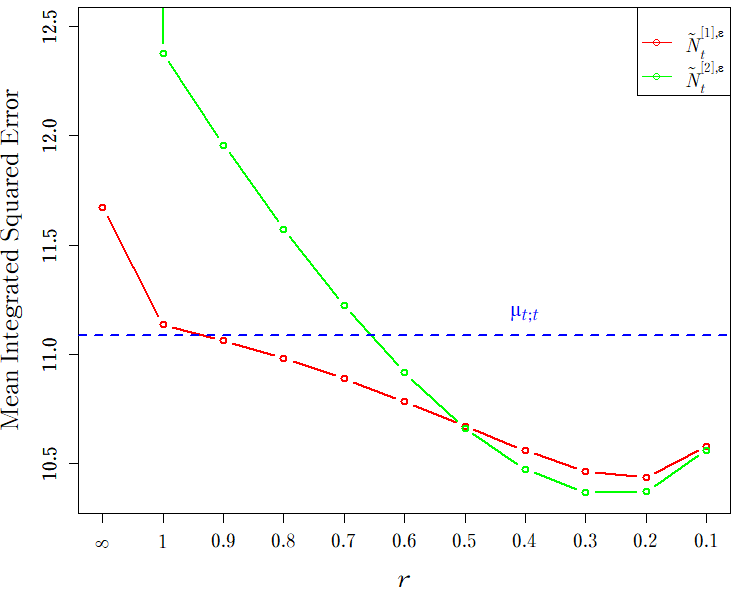}
      \caption{$\epsilon=0.3$}
      \label{fig:doctor-plot7}
  \end{subfigure}
  \hfill
  \begin{subfigure}[b]{0.49\linewidth}
      \centering
      \includegraphics[width=\linewidth]{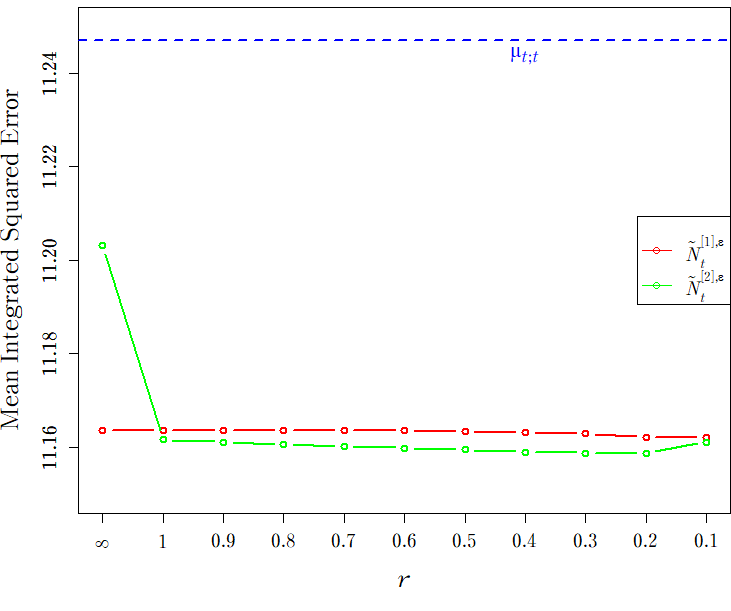}
      \caption{$\epsilon=0.1$}
      \label{fig:doctor-plot8}
  \end{subfigure}

  \caption{Mean integrated squared errors for $\epsilon=0.8,0.5,0.3,0.1$.}
  \label{fig:plots-for-other-epsilons}
\end{figure}
In summary, although the raw asymptotic expansion does not necessarily improve the accuracy of the filter over the entire time horizon due to the divergence of the perturbation term, we can control this divergence by choosing an appropriate parameter $r$ and modifying the coefficients of the expansion. With this modification, we observe an overall improvement in the filter's accuracy. While it is difficult to theoretically determine the optimal value of the parameter $r$, a relatively good value of $r$ can be obtained by conducting simulations in advance. 

\section{Discussion}\label{section-discussion}
In our approach, the computation of the asymptotic expansion is reduced to solving ordinary differential equations, such as (\ref{eq-ode-1})--(\ref{eq-ode-4}). This yields a significant computational advantage over particle filters and PDE-based methods. In particular, it suggests that our method has the potential to be applied in high-dimensional settings, such as data assimilation, where these existing approaches become computationally prohibitive and linearization methods are typically employed.

Although we have focused solely on the expansion of the conditional expectation, the author has also conceived a method for expanding the conditional density. For instance, Figure \ref{fig:density-plot} shows the first order approximation of the conditional density of $X_t$ at $t=5$ for the simulated path of Figure \ref{fig:filter-plots}. While the theoretical justification of this approach is yet to be established, the fact that such multi-modal distributions can be analytically derived from the expansion suggests that this method may offer a powerful tool for representing complex posterior distributions beyond the scope of any existing method.

In this paper, we have provided numerical results only for the perturbed linear model. The simulation for the original nonlinear model has not yet been conducted, primarily due to the complexity of its implementation. Nevertheless, the reformulation part essentially amounts to a Taylor expansion, and the subsequent procedures closely follow those of the perturbed linear case. Moreover, we expect that the modification of the coefficients will not be required in the nonlinear case. This is because the polynomial terms in equation~(\ref{eq-expansion-h}) arise as higher-order terms in the expansion of $h(X_t^\epsilon)$, and their smallness is therefore expected.

\section*{Acknowledgement}
The author is deeply grateful to N. Yoshida for his valuable advice and insightful discussions.
\begin{figure}[H]
  \centering
  \begin{subfigure}[b]{\linewidth}
      \centering
      \includegraphics[width=0.70\linewidth]{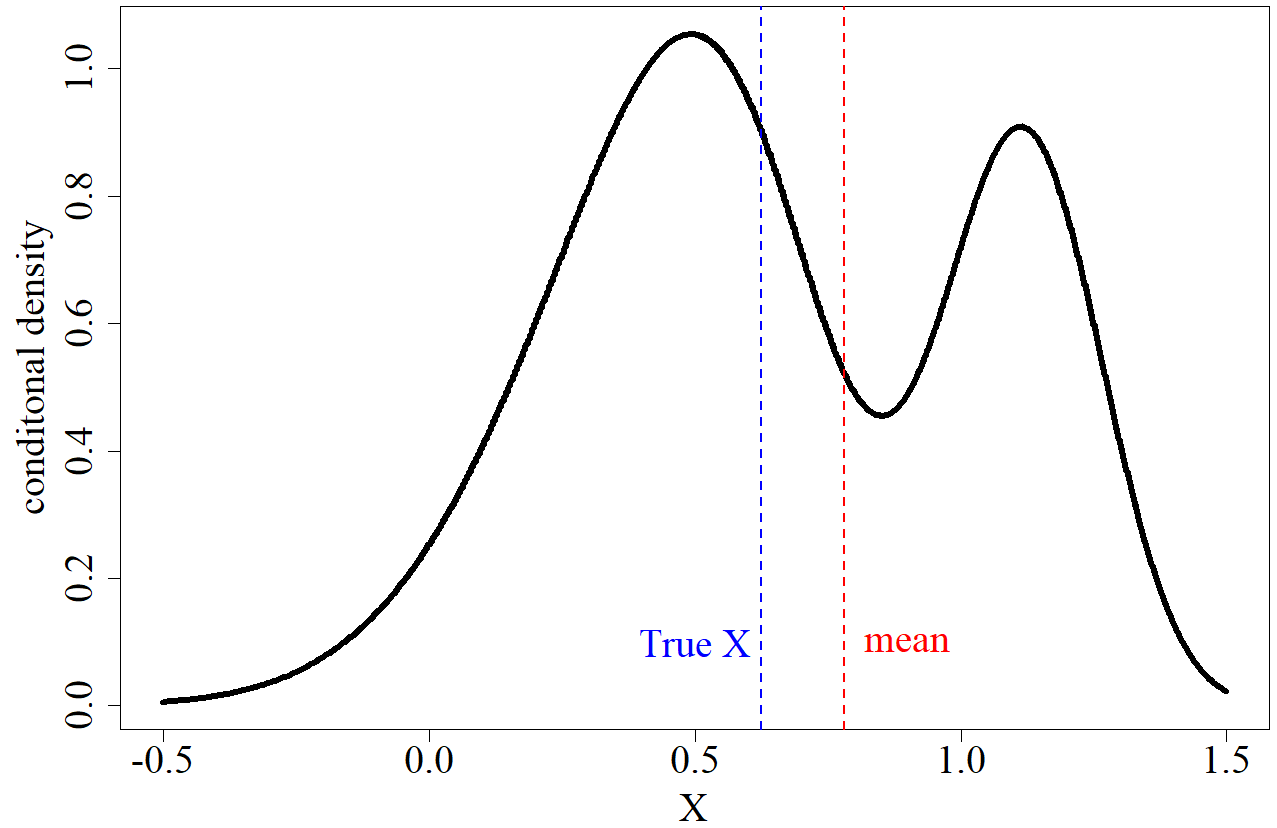}
  \end{subfigure}
  \caption{First order approximation of the conditional density of $X_{5}$.}
  \label{fig:density-plot}
\end{figure}

\bibliographystyle{abbrvnat}

\bibliography{paper-1-arxiv}

\end{document}